\documentclass[11pt]{article}

\usepackage[english]{babel}

\usepackage{fullpage}

\usepackage{amsmath}
\usepackage{graphicx}
\usepackage[table]{xcolor}
\usepackage[colorlinks=true, allcolors=blue]{hyperref}
\usepackage{hhline}

\usepackage{multirow} 
\usepackage{array}
\usepackage{hhline}
\usepackage{caption}
\usepackage{amsthm, amssymb, bm, graphicx, hyperref, mathrsfs}
\usepackage[algo2e,ruled,vlined,linesnumbered]{algorithm2e}
\usepackage{comment}
\usepackage[most]{tcolorbox}
\usepackage{enumitem}
\usepackage{xspace}

\usepackage{booktabs}

\usepackage{cleveref}

\providecommand{\TheoremsNumberedThrough}{}
\providecommand{\EquationsNumberedThrough}{}

\TheoremsNumberedThrough 
\EquationsNumberedThrough
\usepackage{xfrac}
\usepackage{xcolor}
\usepackage{bbm}
\usepackage{booktabs}
\usepackage{tikz}
\usepackage{subcaption}
\usepackage{float}
\makeatletter
\@ifpackageloaded{algorithm2e}{}{\usepackage[algo2e,ruled,vlined,linesnumbered]{algorithm2e}}
\makeatother

\usepackage{hhline}

\usepackage{changepage}

\usepackage{thm-restate}

\usepackage{tcolorbox}

\usepackage{csquotes}


\newtheorem{fact}{Fact}
\newtheorem{theorem}{Theorem}
\newtheorem{definition}{Definition}
\newtheorem{lemma}{Lemma}
\newtheorem{proposition}{Proposition}
\newtheorem{corollary}{Corollary}
\newtheorem{remark}{Remark}

\usepackage{mathtools, bm, amsmath}

\definecolor{cornellred}{rgb}{0.7, 0.11, 0.11}
\definecolor{maroon}{rgb}{0.52, 0, 0}
\definecolor{dgreen}{rgb}{0.0, 0.5, 0.0}
\definecolor{ballblue}{rgb}{0.13, 0.67, 0.8}
\definecolor{royalbamsmathlue(web)}{rgb}{0.25, 0.41, 0.88}
\definecolor{bleudefrance}{rgb}{0.19, 0.55, 0.91}
\definecolor{royalazure}{rgb}{0.0, 0.22, 0.66}
\usepackage{hyperref}
\hypersetup{
	colorlinks = true,
	linkcolor=royalazure,
	urlcolor=royalazure,
	citecolor=royalazure,
	linkbordercolor = {white}
}
\usepackage{cleveref}
\usepackage{enumitem}
\usepackage{multirow}

\usepackage{pgf,tikz,pgfplots}
\pgfplotsset{compat=1.15}
\usetikzlibrary{arrows}
\usepackage{pgfplots}
\usetikzlibrary{patterns}
\usepgfplotslibrary{fillbetween}
\usetikzlibrary{intersections}

\usepackage{fix-cm}

 \usepackage{natbib}

\usetikzlibrary{arrows, decorations.markings}

\tikzstyle{vecArrow} = [thick, decoration={markings,mark=at position
	1 with {\arrow[semithick]{open triangle 60}}},
	double distance=1.4pt, shorten >= 5.5pt,
	preaction = {decorate},
postaction = {draw,line width=1.4pt, white,shorten >= 4.5pt}]
\tikzstyle{innerWhite} = [semithick, white,line width=1.4pt, shorten >= 4.5pt]

\makeatletter
\newcommand{\subalign}[1]{\vcenter{\Let@ \restore@math@cr \default@tag
    \baselineskip\fontdimen10 \scriptfont\tw@
    \advance\baselineskip\fontdimen12 \scriptfont\tw@
    \lineskip\thr@@\fontdimen8 \scriptfont\thr@@
    \lineskiplimit\lineskip
    \ialign{\hfil$\m@th\scriptstyle##$&$\m@th\scriptstyle{}##$\hfil\crcr
      #1\crcr
    }}}
\makeatother

\usepackage{makecell}
\usepackage{threeparttable}

	\providecommand{\given}{}
	\DeclarePairedDelimiterX{\set}[1]\{\}{\renewcommand\given{\nonscript\:\delimsize\vert\nonscript\:\mathopen{}}#1}
	\let\Pr\relax
	\DeclarePairedDelimiterXPP{\Pr}[1]{\mathbb{P}}[]{}{\renewcommand\given{\nonscript\:\delimsize\vert\nonscript\:\mathopen{}}#1}
	\DeclarePairedDelimiterXPP{\Ex}[1]{\mathbb{E}}[]{}{\renewcommand\given{\nonscript\:\delimsize\vert\nonscript\:\mathopen{}}#1}
	
	\DeclarePairedDelimiter\ceil{\lceil}{\rceil}
	\DeclarePairedDelimiter\floor{\lfloor}{\rfloor}

\newcolumntype{P}[1]{>{\centering\arraybackslash}c{#1}}

\newcommand{\xhdr}[1]{\vspace{2mm} \noindent{\bf #1}~}

\newcommand*{\rom}[1]{\expandafter\romannumeral #1}
\newcommand{\Rom}[1]{\uppercase\expandafter{\romannumeral #1\relax}}

\newcommand{\naturals}{\mathbb{N}}

\newcommand{\OPT}{\textup{\textsc{Opt}}}
\newcommand{\ALG}{\textup{\textsc{Alg}}}
   \newcommand{\ALGBlackBoxMeta}{\textup{\textsc{BlackBoxMeta}}\xspace}
\newcommand{\ALGRoute}{\textup{\textsc{Route}}\xspace}
\newcommand{\ALGRouteOnline}{\textup{\textsc{RouteOnline}}\xspace}
\newcommand{\ALGRect}{\textup{\textsc{Rect}}\xspace}
\newcommand{\ALGRectMix}{\textup{\textsc{RectMix}}\xspace}
\newcommand{\ALGResp}{\textup{\textsc{Response}}\xspace}

\newcommand{\Greedy}{\textup{\textsc{Greedy}}\xspace} \newcommand{\ClairvoyantMix}{\textup{\textsc{Clairvoyant-Mix}}\xspace}
\newcommand{\PackAndSchedule}{\textup{\textsc{PackAndSchedule}}\xspace}

\newcommand{\PhaseDuration}{D}
\newcommand{\ClassAreaLB}{A_{\mathrm{LB}}}
\newcommand{\NumJobsClass}{n}
\newcommand{\NumJobsGe}[1]{N_{\ge #1}}
\newcommand{\NumJobsGt}[1]{N_{> #1}}
\newcommand{\LastRespPhase}{L}

\newcommand{\initialLen}{s}
\newcommand{\responseLen}{o}

\newcommand{\processingTime}{p}

\newcommand{\JobSet}{\mathcal{J}}        \newcommand{\SubJobSet}{\JobSet'}        \newcommand{\LargeJobs}{\mathcal{L}}
\newcommand{\PromptPool}{\mathcal{P}}
\newcommand{\ResponsePool}{\mathcal{R}}
\newcommand{\SmallJobs}{\mathcal{S}}
\newcommand{\CappedPromptPool}{\widehat{\PromptPool}}
\newcommand{\CappedSmallJobs}{\widehat{\SmallJobs}}
\newcommand{\EligibleJobs}{\mathcal{E}}
\newcommand{\UnfinishedJobs}{\mathcal{U}}
\newcommand{\AlgCompletedJobs}{\mathcal{C}}
\newcommand{\OptCompletedJobs}{\AlgCompletedJobs^\star}
\newcommand{\PendingQueue}{\mathcal{Q}}
\newcommand{\TruePromptPool}{\PromptPool^*}
\newcommand{\TrueResponsePool}{\ResponsePool^*}
\newcommand{\PromptHeavy}{prompt-heavy\xspace}
\newcommand{\PromptHeavyCap}{Prompt-heavy\xspace}
\newcommand{\ResponseHeavy}{response-heavy\xspace}
\newcommand{\ResponseHeavyCap}{Response-heavy\xspace}
\newcommand{\LargeBranch}{large branch\xspace}
\newcommand{\LargeBranchCap}{Large branch\xspace}
\newcommand{\PromptBranch}{prompt-heavy branch\xspace}
\newcommand{\PromptBranchCap}{Prompt-heavy branch\xspace}
\newcommand{\ResponseBranch}{response-heavy branch\xspace}
\newcommand{\ResponseBranchCap}{Response-heavy branch\xspace}

\newcommand{\paral}{k}                   \newcommand{\sliceLen}{\tau}             \newcommand{\lowerSlice}{\underline{\sliceLen}}
\newcommand{\ProxyPrompt}{\bar{\initialLen}}

\newcommand{\Index}{r}
\newcommand{\OtherIndex}{q}

\newcommand{\kvmem}{M}
\newcommand{\rectWidth}{w}
\newcommand{\Budget}{B}
\newcommand{\BudgetOf}[2]{\Budget_{#1,#2}}
\newcommand{\WidthUpperBound}{\Delta}
\newcommand{\WidthLowerBound}{\delta}
\newcommand{\PromptCutoff}{\Lambda}
\newcommand{\FullWidth}{\ell}
\newcommand{\FullWidthOf}[1]{\FullWidth_{#1}}
\newcommand{\WidthThreshold}{\lambda}

\newcommand{\StartTime}{S}
\newcommand{\WaitingTime}{W}

\newcommand{\StartTimeOf}[1]{\StartTime_{#1}}

\newcommand{\MemPeak}{\textup{\textsf{Peak}}}

\newcommand{\MSpan}[1]{{\textup{\textsc{Makespan}}}\!\left[#1\right]}
\newcommand{\C}{C}
\newcommand{\arrivalTime}{a}
\newcommand{\OptOf}[1]{\OPT(#1)}
\newcommand{\OptRectOf}[1]{\OPT_{\mathrm{rect}}(#1)}
\newcommand{\OptMSpanOf}[1]{\OPT_{\mathrm{span}}(#1)}
\newcommand{\OptArrivalOf}[1]{\OPT_{\mathrm{arr}}(#1)}
\newcommand{\Clarge}{\C^\textup{\textsc{L}}}
\newcommand{\Cprompt}{\C^\textup{\textsc{P}}}
\newcommand{\Cresponse}{\C^\textup{\textsc{R}}}
\newcommand{\rprompt}{\Index^\textup{\textsc{P}}}
\newcommand{\rresponse}{\Index^\textup{\textsc{R}}}

\newcommand{\SPS}{\textup{\textsc{SPS}}\xspace}

\newcommand{\remainJobs}{\mathcal{T}}
\newcommand{\responseScalar}{\alpha}

\newcommand{\approxratio}{\Gamma}

\newcommand{\ActiveBatch}{\mathcal{B}}
\newcommand{\curProgress}{u}

\newcommand{\Area}{A}
\newcommand{\AreaOf}[1]{\Area_{#1}}
\newcommand{\RectAreaOf}[1]{A_{#1}^{\textup{\textsc{Rect}}}}

\newcommand{\NumberJobs}{n}
 \newcommand{\setsize}[1]{{\left|#1\right|}}

\newcommand{\prob}[2][]{\text{Pr}\ifthenelse{\not\equal{}{#1}}{_{#1}}{}\!\left[{\def\givenn{\middle|}#2}\right]}
\newcommand{\expect}[2][]{\mathbb{E}\ifthenelse{\not\equal{}{#1}}{_{#1}}{}\!\left[{\def\givenn{\middle|}#2}\right]}

\newcommand{\tparen}{\big}
\newcommand{\tprob}[2][]{\text{Pr}\ifthenelse{\not\equal{}{#1}}{_{#1}}{}\tparen[{\def\given{\tparen|}#2}\tparen]}
\newcommand{\texpect}[2][]{\mathbb{E}\ifthenelse{\not\equal{}{#1}}{_{#1}}{}\tparen[{\def\given{\tparen|}#2}\tparen]}

\newcommand{\sprob}[2][]{\text{Pr}\ifthenelse{\not\equal{}{#1}}{_{#1}}{}[#2]}
\newcommand{\sexpect}[2][]{\mathbb{E}\ifthenelse{\not\equal{}{#1}}{_{#1}}{}[#2]}

\title{General Non-Clairvoyant KV-Cache Scheduling \\ via Regime-Aware Routing}

\author{
Yiding Feng\thanks{Hong Kong University of Science and Technology. Email: {\tt ydfeng@ust.hk}}
\and
Siyu Liu\thanks{Shanghai Jiao Tong University. Email: {\tt williamliusy@sjtu.edu.cn}}
\and
Zonghan Yang\thanks{Shanghai Jiao Tong University. Email: {\tt fstqwq@sjtu.edu.cn}}
\and
Yuhao Zhang\thanks{Shanghai Jiao Tong University. Email: {\tt zhang\_yuhao@sjtu.edu.cn}}
}

\date{}
\begin{document}
\maketitle

\begin{abstract}
We study non-clairvoyant scheduling for batched Large Language Model (LLM) inference under a hard Key-Value (KV) cache memory budget. Each request has a known prompt length but an unknown response length, and its memory footprint comprises a fixed prompt component together with a response component that grows with each decoded token. At each decoding round, the scheduler chooses a feasible batch of active requests; evicting a request discards its accumulated cache states, wasting prior computation. The goal is to minimize total completion time against the optimal clairvoyant schedule that knows all response lengths.

We present the first $O(1)$-competitive algorithm for arbitrary prompt lengths and arbitrary response lengths with no additional assumptions. Rather than relying on a single universal scheduling policy, our algorithm is built on a novel \emph{regime-aware routing} framework. Specialized sub-schedulers handle different memory-growth geometries, while a meta-scheduler time-shares the memory budget across them and dynamically routes each job as its execution progressively reveals its behavior. This framework also yields constant-factor guarantees for makespan and for total completion time under online arrivals.
\end{abstract}

\section{Introduction}
\label{sec:introduction}

{
Large Language Model (LLM) inference has become a cornerstone of modern computing infrastructure, supporting interactive assistants, automated workflows, and agentic applications at massive scale.
As these workloads shift from short exchanges toward longer and more computation-intensive tasks, the inference phase has become a major contributor to operational cost and energy consumption~\citep{LJS24,SZGCQFT25,SZGTC25}.
Meeting quality-of-service requirements under these resource pressures calls for careful scheduling across the serving stack, from cluster-wide request routing to the memory management logic on individual serving instances.

This work focuses on replica-level scheduling inside an LLM inference service.
In modern LLM serving stacks, a routing layer assigns each incoming request to one of many parallel serving instances, known as \emph{replicas};
within each replica, the local scheduler decides which requests to process together and how to manage the finite memory budget~\citep{YJKKC22,KLZSZYGZS23,ZYXS0YCKSGB24,SHZXZL024}.
The central challenge arises from the Transformer architecture, whose self-attention mechanism stores intermediate states in a per-request \emph{Key-Value (KV) cache} to avoid quadratic recomputation during decoding~\citep{VSPUJGKP17,PDCDBHXAD23}.
Processing a request involves a \emph{prefill} phase that encodes the input prompt and initializes the cache, followed by a \emph{decode} phase that generates response tokens autoregressively, appending new KV entries at each step.
The memory footprint of a request therefore has a fixed component from the prompt and a component that grows linearly with the response length.
Modern systems manage this dynamic growth through two mechanisms: \emph{continuous batching}~\citep{YJKKC22}, which updates the active batch at the granularity of individual decoding iterations, and \emph{block-based cache management}~\citep{KLZSZYGZS23,ZYXS0YCKSGB24}, which allocates and recycles memory across requests.
When the budget is exhausted, active requests must be \emph{killed}, discarding the KV-cache entries accumulated during decoding.
The scheduler must therefore balance high concurrency, which improves throughput, against the cost of kills that waste prior computation.
}

This tension is compounded by a second critical difficulty: the scheduler is \emph{non-clairvoyant}.
Response lengths are not known in advance but are determined by the model's generation process and revealed only at completion.
The autoregressive and stochastic nature of text generation makes exact prediction inherently impossible, even with dedicated predictors~\citep{ZRXL0Y23,FZSQS024,QMPCJWFKBI24}.
In classical non-clairvoyant scheduling, a job of unknown size occupies a fixed amount of resource whenever it runs; here, uncertainty in response length directly translates into uncertainty in future memory demand, because the cache grows with every decoded token.
The scheduler must therefore manage dynamically growing resource requirements under incomplete information---a doubly constrained trade-off between latency and memory safety.

This places the problem in the non-clairvoyant setting of scheduling theory~\citep{MPT94,BL04}, with the additional feature that the resource demand of a job increases during its execution.
For this problem, a recent line of theoretical work~\citep{JJMMPZ25,WYZ25,CYZ25,FYZ26,KQYZ-26} has established guarantees for the single-replica model under restricted regimes, such as clairvoyant scheduling, identical prompt lengths, or large-memory settings.
Removing these restrictions, we ask:

\begin{quote}
\emph{Can a non-clairvoyant scheduler perform comparably to the optimal schedule computed with full knowledge of all response lengths?}
\end{quote}

To make this question concrete, we adopt the standard offline batch model studied in recent theoretical work on LLM inference scheduling~\citep{WYZ25,CYZ25,FYZ26,KQYZ-26}.
A single serving instance, or replica, has a hard memory budget~$\kvmem$.
A set of jobs (inference requests) arrives at time zero.
Each job has a known prompt length $\initialLen_i$ and an unknown response length $\responseLen_i$.
Time proceeds in discrete rounds.
In each round, the scheduler chooses a batch of unfinished jobs to process; every job in the batch decodes one token.
The memory consumed by a job equals its prompt length plus the number of tokens it has already decoded in its current uninterrupted attempt.
The total memory of all active jobs must not exceed the budget $\kvmem$.
The scheduler may also kill an active job, discarding all progress from that attempt and forcing a restart.
The objective is to minimize total completion time---the sum of the completion times of all jobs---measured against an optimal clairvoyant schedule that knows all response lengths in advance.
A non-clairvoyant algorithm is \emph{competitive} if its total completion time is at most a constant factor times this optimum on every instance.

Compared to classical non-clairvoyant scheduling, the defining feature of our model is that resource demands are dynamic rather than static.
A batch that is feasible at the start of a round may become infeasible as its members grow, forcing the scheduler to kill jobs to free memory or to run fewer jobs in parallel.

\subsection{Our Results and Techniques}
\label{sec:intro:results-and-techniques}
We answer the aforementioned question affirmatively. In a nutshell, we give the \emph{first constant-competitive non-clairvoyant} scheduling algorithm for the offline batch model with arbitrary prompt lengths and arbitrary response lengths, requiring no additional assumptions on the problem instances. Formally, our main result is as follows.

\begin{restatable}{theorem}{thmMain}
\label{thm:main}
For every feasible batch KV-cache scheduling instance $\JobSet$ with arbitrary prompt lengths and arbitrary response lengths, there exists a fully non-clairvoyant scheduling algorithm whose total completion time is within a constant factor of the optimal clairvoyant total completion time:
\begin{equation*}
    \ALG(\JobSet)
    =
    O(1)\cdot \OPT(\JobSet) .
\end{equation*}
\end{restatable}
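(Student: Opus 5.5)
We prove \Cref{thm:main} by building the regime-aware routing scheduler described in the abstract and bounding it through a chain of lower bounds on $\OPT(\JobSet)$.

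\textbf{Lower bounds on $\OPT$.} Every job with $\initialLen_i+\responseLen_i\le \kvmem$ runs uninterrupted for $\responseLen_i$ rounds, so $\OPT\ge\sum_i \responseLen_i$. Its memory-time area is $\AreaOf{i}\approx \initialLen_i\responseLen_i+\responseLen_i^2/2$, and at most $\kvmem$ units of area are available per round. Sorting jobs by area therefore gives a second bound: $\OPT$ is at least $\Omega(1/\kvmem)$ times the sum of prefix sums of sorted areas. This is the standard ``area-SPT'' bound.

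\textbf{Classifying jobs.} We classify jobs by memory-growth geometry:
\begin{itemize}[leftmargin=*]
  \item \emph{large} jobs, whose full width $\initialLen_i+\responseLen_i$ is a constant fraction of $\kvmem$;
  \item \emph{prompt-heavy} jobs, with $\responseLen_i\le \initialLen_i$, whose footprint is essentially a rectangle of width $\Theta(\initialLen_i)$;
  \item \emph{response-heavy} jobs, with $\responseLen_i>\initialLen_i$, whose footprint is essentially a triangle.
\end{itemize}
Prompt lengths are known, but the class of a job is not. A job starts as a candidate prompt-heavy job, and we reroute it to the response-heavy pool once it has decoded $\initialLen_i$ tokens without finishing. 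The same kind of rerouting moves jobs to the large branch.

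\textbf{Sub-scheduler guarantees.} For each class we design a sub-scheduler that gets a constant fraction of memory and show it is $O(1)$-competitive against the area/SPT lower bound restricted to that class.
\begin{itemize}[leftmargin=*]
  \item \emph{Prompt-heavy branch.} Group jobs geometrically by width and run a rectangle-packing schedule with doubling guesses of the response length. A killed attempt of length $2^j$ costs at most a constant times the final successful attempt, because the footprint is rectangular.
  \item \emph{Response-heavy branch.} Use a doubling-budget, ``phase''-based scheme in the style of prior identical-prompt analyses. In phase $j$, jobs are admitted with budget $2^j$ so that the triangles fit. Jobs not finishing are killed and promoted, and wasted area is geometric in the final phase.
  \item \emph{Large branch.} Run these jobs essentially one at a time, ordered by prompt length with doubling. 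Since each occupies near-full memory, $\OPT$ also serializes them.
\end{itemize}

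\textbf{Meta-scheduler.} The meta-scheduler time-shares the budget: for example, it splits memory into three slices of $\kvmem/3$, or interleaves phases of length $\PhaseDuration$ doubling over time. Each branch therefore runs at constant slowdown, and the completion time of every job is inflated by at most an $O(1)$ factor. This step uses a resource-augmentation-style lemma: running on $\kvmem/c$ memory costs $O(1)$ against $\OPT$ on $\kvmem$, because the lower bounds scale linearly. Summing the per-branch bounds, and noting that each branch's $\OPT$ lower bound is at most $\OPT(\JobSet)$, gives the theorem.

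\textbf{Main obstacle.} The main obstacle is rerouting. A job that is eventually response-heavy first spends time, and is possibly killed, in the prompt-heavy branch. We must charge that wasted work and delay to its eventual response-heavy cost, which is at least $\Omega(\initialLen_i^2)$ area and at least $\initialLen_i$ rounds. We must also ensure that prompt-heavy jobs are not delayed by migrating jobs still occupying the prompt-heavy pool. To handle this, we first cap each job's residence in the prompt-heavy pool at $O(\initialLen_i)$ decoded tokens. We then show by an exchange argument that the sorted-area prefix sums change by at most a constant factor when migrating jobs are counted in both pools.
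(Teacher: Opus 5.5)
\textbf{Verdict: genuine gap.} Your architecture matches the paper's: three regimes, doubling attempts, certification after $\initialLen_i$ decoded tokens, and time-sharing across branches. The problem is the step that glues the branches together, which fails as you justify it.

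\emph{The meta-scheduler.} A static split of memory into slices of $\kvmem/3$ is infeasible. Individual feasibility only guarantees $\initialLen_i+\responseLen_i\le\kvmem$, so jobs with $\initialLen_i+\responseLen_i>\kvmem/3$ fit in no slice. Your ``resource-augmentation lemma'' is therefore false for exactly these jobs: the area bound scales with memory, but feasibility does not. Time interleaving is the right option, but ``constant slowdown'' does not follow from it. In this model a job left out of a round is killed, and a paused job would anyway hold memory the other branches need. So every slot boundary destroys all in-progress attempts, and a sub-scheduler's guarantee says nothing about a run chopped into pieces. The missing idea is to run a \emph{fresh} copy of each branch for $2^\Index$ rounds in stage $\Index$, and to use \emph{per-job} completion bounds. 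A job with isolated completion time $C$ then finishes in the first stage with $2^\Index\ge C$, at a cost of a factor $4k-1$ (\Cref{lem:black-box-meta}).

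\emph{Rerouting.} Fresh restarts expose the real difficulty, which your exchange argument does not touch: the response pool grows during execution, so its sub-scheduler never faces the static instance its guarantee is about. Your cap of $\initialLen_i$ tokens in the prompt branch is the right first ingredient, but it should be used as a relaxation. The paper analyzes the prompt branch once on the capped instance $\CappedSmallJobs$, with responses $\min\{\responseLen_i,\initialLen_i\}$. This gives $\OptOf{\CappedSmallJobs}\le\OptOf{\SmallJobs}$, and the per-job bound $\Cprompt_i$ (covering completion or certification) survives deletions. That also removes your worry about migrating jobs delaying prompt-heavy ones. Two further ingredients are needed.
\begin{enumerate}
\item A subset-monotonicity lemma for $\ALGResp$ (\Cref{lem:response-heavy-subset-control}): in a fresh run on any certified subset of $\TrueResponsePool$, job $i$ finishes by its completion time $\Cresponse_i$ in the run on all of $\TrueResponsePool$.
\item Within each stage, the prompt branch runs before the response branch, so a job certified in stage $\Index$ is served in that same stage. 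This gives $\C_i\le\max\{8\Cprompt_i+\Cresponse_i,\,9\Cresponse_i\}$ (\Cref{lem:routing-loss}).
\end{enumerate}
After that, summing and applying \Cref{lem:prelim:deletion-bound} per branch suffices; no exchange argument is needed.

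\emph{Branch sketches.} These also miss the load-bearing facts.
\begin{itemize}
\item \emph{Prompt branch.} The issue is not that a job's own failed attempts form a geometric series. It is queue monotonicity (\Cref{lem:rect-queue-monotonicity}): a newly created replacement attempt must never wait behind larger-budget attempts admitted while its predecessor ran. This holds because the width is identical across attempts. A single greedy queue suffices because all widths are at most $\kvmem/2$; time-sharing geometric width groups would lose a $\Theta(\log\kvmem)$ factor.
\item \emph{Large class.} Defining it by $\initialLen_i+\responseLen_i$, which is not known in advance, forces a second, unanalyzed rerouting. The paper instead uses the known test $\initialLen_i>\kvmem/4$, and lets $\ALGResp$ absorb long responses with parallelism $1$ once the caps exceed $\kvmem/2$.
\item \emph{Large-job order.} Large jobs have area $\Theta(\kvmem\responseLen_i)$, so they must be served shortest-cap-first by doubling, not ordered by prompt length.
\end{itemize}
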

Beyond satisfying the competitive guarantee stated in the theorem, our algorithm runs in polynomial time and is also constant-competitive for the alternative objective of makespan.\footnote{In our model, makespan is a relatively easier objective; we therefore focus primarily on completion time (i.e., latency), which is also the more practically important metric in LLM inference.}

In the remainder of this section, we give an overview of the key challenges and the main technical ingredients of our algorithm: \emph{regime decomposition}, which partitions jobs into geometric regimes; \emph{budget-doubling}, which handles unknown response lengths within each regime; and a \emph{routing meta-scheduler}, which combines the regimes with only constant-factor overhead.

\xhdr{From area order to regime decomposition.}
A natural starting point is the clairvoyant setting, where the scheduler knows all response lengths in advance.
Following prior work \citep{FYZ26}, we define the \emph{(memory-time) area} of a job as its total memory occupation over a successful run, i.e., $\AreaOf{i}\triangleq \initialLen_i\cdot\responseLen_i + \responseLen_i(\responseLen_i+1)/2$.
Below, we refer to memory-time area simply as \emph{area}.
Note that the optimal total completion time can be lower bounded by sorting all jobs by nondecreasing area and charging each job for the cumulative area that must be supplied before it can finish: the $r$-th smallest job cannot complete before the system has delivered the total area of the first $r$ jobs, which takes time proportional to the sum of their areas divided by the memory budget (see \Cref{lem:prelim:area-lower-bound}).
This bound holds because the system can supply at most $\kvmem$ units of memory-time area per round, so the aggregate area of all jobs must be paid for in time.

The area lower bound not only provides a tractable way to characterize the optimal clairvoyant benchmark; more importantly, it also suggests a sufficient condition for a constant approximation or competitiveness:
\begin{itemize}
    \item \textbf{(High memory utilization)} the schedule should keep a constant fraction of the memory occupied; and
    \item \textbf{(Area-order completion)} the completion order should be close to nondecreasing memory-time area order.
\end{itemize}
One might hope to achieve both with a single priority rule over job types---for example, greedy scheduling by area.\footnote{Specifically, put all jobs in a priority queue based on their area. At any time step, if the job at the top of the priority queue can be added to the current batch while respecting the memory constraint, remove it from the queue and add it to the processing batch.} Indeed, all known non-clairvoyant algorithms in the related restart and KV-cache scheduling models, and the majority of clairvoyant KV-cache scheduling algorithms, derive their schedules from a global ordering criterion~\citep{JSWW25,FYZ26,JJMMPZ25,CYZ25,KQYZ-26}.
However, we show that this is impossible even with full knowledge of all response lengths in \Cref{prop:single-rule-fails}.
The obstruction has two complementary forms.
If the scheduler prioritizes small-area jobs, it may leave the memory severely underutilized while a long small-prompt job monopolizes the batch.
If it prioritizes large-area jobs, it may block many small-area jobs behind a few large-prompt jobs that must run serially.
These two failure modes impose opposite pressures: high utilization and area-consistent ordering cannot be reconciled by one global priority.
This motivates a decomposition into geometric regimes that are treated separately.

We partition the jobs using a prompt cutoff $\PromptCutoff = \kvmem/4$ into three regimes:
\begin{itemize}
    \item \textbf{(Large jobs)} $\initialLen_i > \PromptCutoff$. These jobs are so wide that each one nearly fills the memory budget alone.
    \item \textbf{(Small prompt-heavy jobs)} $\initialLen_i \le \PromptCutoff$ and $\responseLen_i \le \initialLen_i$. The response is no longer than the prompt, so the memory footprint stays within a constant factor of the prompt length.
    \item \textbf{(Small response-heavy jobs)} $\initialLen_i \le \PromptCutoff$ and $\responseLen_i > \initialLen_i$. The memory footprint grows well beyond the prompt, and the area is dominated by the response length.
\end{itemize}
This three-way partition is the regime decomposition used by the algorithm:
jobs are classified by their geometric properties and sent to distinct scheduling pipelines, each specialized to a particular memory-growth profile.

Each regime, taken in isolation, admits a constant-competitive non-clairvoyant algorithm, which we describe in the next two parts.
The remaining challenge is that the non-clairvoyant scheduler does not know which regime a job belongs to, because the distinction between prompt-heavy and response-heavy depends on the unknown response length.
We explain how to resolve this difficulty when we discuss the meta-scheduler below.

\xhdr{Rectangle scheduling for large and prompt-heavy jobs.}
For large jobs and prompt-heavy jobs, the KV-cache footprint grows by at most a constant factor during execution.
Specifically, if $\responseLen_i \le \initialLen_i$, then the memory usage at every timestep lies between $\initialLen_i + 1$ and $2\initialLen_i$.
This allows us to abstract each job as a rectangle of fixed width $\rectWidth_i\triangleq 2\initialLen_i$ and unknown processing time $\processingTime_i\triangleq \responseLen_i$, where killing an active attempt discards all progress.\footnote{Here we use terminology from strip packing and resource-constrained scheduling~\citep{LMM02,JR21}.}
To our knowledge, this \emph{non-clairvoyant} version of rectangle strip scheduling has not been studied in the literature, and our results here may be of independent interest.

Our scheduler $\ALGRect$ (\Cref{alg:rectangle-strip}) maintains a priority queue of attempts ordered by their \emph{area budget} $\rectWidth_i \cdot 2^\Index$, where $2^\Index$ is the geometrically increasing length cap for the $\Index$-th attempt of job $i$.
At each event time, the scheduler greedily admits the minimum-budget attempt if it fits in the remaining memory.
If an attempt reaches its cap without completing, it is killed and replaced by a doubled attempt, which is added back to the priority queue with the updated area budget.

The analysis rests on a \emph{queue monotonicity} (\Cref{lem:rect-queue-monotonicity}): the minimum area budget among pending attempts is weakly increasing over time, and every active attempt has area budget no larger than every pending attempt.
This invariant lets us split the completion time of each job into active time and waiting time.
The active time---the total time during which some (possibly unsuccessful) attempt of job $i$ is executing, so that no attempt of that job is waiting in the queue---is bounded by $4\processingTime_i$ because the geometric caps up to and including the first successful one sum to less than $4\processingTime_i$.
The waiting time---the total time during which an attempt of job $i$ is in the priority queue---is controlled by a utilization bound: when all rectangle widths lie in a bounded range $[\WidthLowerBound, \WidthUpperBound]$, every waiting round has active memory of at least $\max\{\WidthLowerBound, \kvmem-\WidthUpperBound\}$.
For large jobs, $\WidthLowerBound = 2\PromptCutoff = \Omega(\kvmem)$, so utilization is constant, as desired; for prompt-heavy jobs, $\WidthUpperBound = 2\PromptCutoff$, and utilization is at least $\kvmem - 2\PromptCutoff = \Omega(\kvmem)$, again as desired.
Moreover, the queue monotonicity ensures that the completion order approximately follows the area order.
Consequently, the scheduler $\ALGRect$ (\Cref{alg:rectangle-strip}) is $36$-competitive when all jobs are large, and also when all jobs are small prompt-heavy.

Remarkably, even a few small response-heavy jobs suffice to break the guarantee: memory may still be well utilized, but jobs can now finish in the wrong order---with large-area jobs completing before small-area ones---because the queue monotonicity no longer holds. This ordering failure makes the competitive ratio unbounded.

\xhdr{Geometric slicing for response-heavy jobs.}
When the response length dominates the prompt, the memory footprint grows far beyond the prompt length, so the fixed-width rectangle view no longer applies.
Instead, the area of a response-heavy job is governed by its response length: for each job~$i$, the area $\AreaOf{i}\triangleq \initialLen_i\cdot\responseLen_i + \responseLen_i(\responseLen_i+1)/2$ lies between $\responseLen_i^2/2$ and $2\responseLen_i^2$.
Thus response-length order and area order agree up to a constant factor.

We handle these jobs with a scheduler $\ALGResp$ (\Cref{alg:response-heavy}) that extends the geometric-slicing approach of \citet{FYZ26} from identical to non-identical prompt lengths.
The algorithm proceeds in phases $\Index = 0, 1, \dots$ with geometrically increasing response caps $\sliceLen_{\Index} = 2^{\Index}$.
In each phase, every unfinished job with $\initialLen_i \le \sliceLen_{\Index}$ becomes \emph{eligible}.
All eligible jobs are then scheduled using the SPS (Staggered Pipeline Scheduling) subroutine of \citet{FYZ26}, which packs jobs with identical prompt and response lengths into a high-utilization batch.
We invoke SPS by treating all eligible jobs as sharing a common proxy prompt $\ProxyPrompt_{\Index} = \min\{\sliceLen_{\Index}, \kvmem-\sliceLen_{\Index}\}$ and the current cap $\sliceLen_{\Index}$ as their common response length.

The key insight is that within a phase, the shared proxy prompt lets SPS guarantee high memory utilization.
Moreover, since the caps grow geometrically, each job completes in the first phase whose cap exceeds its true response length.
This means the completion order induced by the phases approximately follows the response-length order, which in this regime is close to the area order.
Consequently, the scheduler is $236/3 \approx 78.67$-competitive when all jobs are response-heavy.

Remarkably, even a few prompt-heavy jobs are enough to break the guarantee.
If we keep the same phase-based eligibility (i.e., admitting a job whenever its prompt fits under the current response cap), the SPS schedule can leave memory underutilized because prompt-heavy jobs consume far less cache than their proxy length suggests.
In addition, response-length order and area order can diverge dramatically for such jobs: a short-response job with a long prompt may have large area, while a long-response job with a short prompt may have small area.
Either mismatch alone makes the competitive ratio of our scheduler $\ALGResp$ (\Cref{alg:response-heavy}) unbounded.

\xhdr{Meta-scheduling with routing.}
So far, we have described separate schedulers for large jobs, small prompt-heavy jobs, and small response-heavy jobs.
From an engineering perspective, if we had three replicas---each with its own memory budget of $\kvmem$---we could route each regime to a dedicated replica and immediately obtain an $O(1)$-competitive guarantee, provided the jobs were partitioned correctly.
The remaining challenge is to achieve the same guarantee on a single replica, without knowing the partition in advance.

If the three-way partition were known, we could combine the sub-schedulers via a black-box meta-scheduler.
The meta-scheduler $\ALGBlackBoxMeta$ (\Cref{alg:black-box-meta}) runs the sub-schedulers in round-robin order over doubling stages of length $2^\Index$.
Each sub-scheduler receives a fresh copy of its job set in each of its allotted rounds.
Completed jobs are treated as dummy jobs in later stages.
A per-job completion bound shows that this time-sharing loses only a constant factor relative to running each sub-scheduler in isolation.

In the non-clairvoyant setting, however, the partition is not known: for a job with small prompt, the scheduler cannot tell whether $\responseLen_i \le \initialLen_i$ or $\responseLen_i > \initialLen_i$ without running it.
We therefore use a routing meta-scheduler $\ALGRoute$ (\Cref{alg:white-box-meta}).
Small jobs are initially assigned to the prompt-heavy branch.
If a job completes before generating $\initialLen_i$ tokens, it is finished; if it reaches $\initialLen_i$ tokens without completing, it is \emph{certified} as response-heavy and routed to the response-heavy branch.
Large jobs are sent directly to the large-job branch.

The routing meta-scheduler uses the same doubling stages as the black-box version.
In each stage, it runs the prompt-heavy branch, then the response-heavy branch (which now includes any newly certified jobs), then the large branch.
The key observation is that a job certified in the prompt-heavy branch during stage $\Index$ is available to the response-heavy branch in the same stage, and the doubling lengths ensure that both certification and completion are accommodated within a constant factor of their isolated completion times (\Cref{lem:routing-loss}).
Combining the three branch guarantees yields the final $O(1)$-competitive guarantee for all instances with arbitrary prompt lengths and arbitrary response lengths stated in \Cref{thm:main}.

\xhdr{Variants and extensions.}
Beyond our main non-clairvoyant result stated in \Cref{thm:main}, we establish several complementary guarantees using our algorithmic framework to chart the broader landscape of KV-cache scheduling.
First, in the clairvoyant setting, our framework yields a $16$-approximation algorithm for all instances with arbitrary prompt lengths and arbitrary response lengths (\Cref{thm:clairvoyant-rectangle-packing}), and a $(3+o(1))$-approximation for the large-memory regime where every job satisfies $\initialLen_i + \responseLen_i = o(\kvmem)$ (\Cref{cor:clairvoyant-large-memory}).\footnote{The same approximation factor of $3$ for large-memory instances has been established in concurrent work by \citet{KQYZ-26}.}
Second, the same routing algorithm yields an $O(1)$-competitive ratio for the makespan objective (\Cref{thm:alternative-makespan}), showing that the structural benefits of regime decomposition extend beyond total completion time.
Third, we extend our results to online arrivals, where jobs arrive over time and reveal their prompt lengths upon arrival: we give an online variant of our routing meta-scheduler that remains $O(1)$-competitive for total completion time (\Cref{thm:alternative-online-completion}).\footnote{For the standard flow-time objective under adversarial arrivals, strong impossibility results hold: even restricted settings admit super-constant lower bounds on the competitive ratio \citep{JJMMPZ25}. Total completion time is therefore the natural latency objective for constant-competitive guarantees in the online non-clairvoyant setting.}

\xhdr{System insights.}
Modern LLM inference systems already allow global request routing across different replicas.
Our results provide a principled foundation for how this routing should be done: partition jobs according to their geometric structure---large jobs, small prompt-heavy jobs, and small response-heavy jobs.
This partition can be implemented at the global router level, where prompt lengths are known and response-based classifications can be updated dynamically as execution reveals job behavior.
Once jobs are routed to separate replicas, each local scheduler can simply run the corresponding sub-scheduler from our framework.
This avoids the overhead of the theoretical meta-scheduler, which must time-share a single memory budget across heterogeneous classes.
Consequently, such an implementation achieves a better competitive ratio than the single-replica theoretical bound, while remaining easy to deploy.

The geometric regime decomposition explains why the existing separation between routing and local scheduling is algorithmically useful.
Simple local rules, properly routed by geometric class, can achieve both worst-case robustness and practical efficiency.

\subsection{Organization}
\Cref{sec:preliminaries} defines the scheduling model and develops an area-based lower bound on the optimal completion time.
\Cref{sec:regimes} classifies jobs into three regimes---large, \PromptHeavy, and \ResponseHeavy---and motivates treating them separately through explicit hard instances.
\Cref{sec:rectangle-strip} gives the rectangle scheduler $\ALGRect$ (\Cref{alg:rectangle-strip}), which handles large and \PromptHeavy jobs by exploiting their nearly constant memory footprint, while \Cref{sec:response-heavy} gives the response scheduler $\ALGResp$ (\Cref{alg:response-heavy}), which handles the growing footprint of \ResponseHeavy jobs through area-geometric slicing with doubling response caps.
\Cref{sec:meta-scheduling} presents the black-box meta-scheduler $\ALGBlackBoxMeta$ (\Cref{alg:black-box-meta}) that time-shares the memory budget across independent sub-schedulers, and extends it to the routing meta-scheduler $\ALGRoute$ (\Cref{alg:white-box-meta}) that resolves the unknown job partition during execution, completing the proof of \Cref{thm:main}.
Finally, \Cref{sec:clairvoyant,sec:rectangle-strip-complete,sec:response-heavy-constant,sec:alternative-objectives} contain the clairvoyant approximation guarantees, deferred analyses for the rectangle and \ResponseHeavy schedulers, and extensions to makespan and online-arrival completion time.

\subsection{Related Work}

\xhdr{KV-cache scheduling.}
{\citet{JJMMPZ25} initiate the theoretical study of KV-cache scheduling for LLM inference by formulating the scheduling model, 
proving an $\Omega(\sqrt{\NumberJobs})$ lower bound for deterministic algorithms under adversarial online arrivals,
and giving the Memory-Constrained Shortest-First rule which achieves a $9216$-approximation in the offline identical-prompt clairvoyant setting.
In the clairvoyant setting, \citet{WYZ25} study the large-memory regime with heterogeneous prompt and response lengths and give a polynomial-time $(48+\varepsilon)$-approximation for any fixed $\varepsilon>0$.
Recently, concurrent work by \citet{KQYZ-26} improves the approximation ratio for the clairvoyant large-memory regime to $3+o(1)$.
For non-clairvoyant settings, \citet{CYZ25} study scheduling under output-length uncertainty and give an $O(\log \kvmem)$-competitive algorithm in the large-memory regime.
\citet{FYZ26} give a non-clairvoyant geometric-slicing algorithm for the identical-prompt offline batch model, with a competitive ratio of at most $61.92$ in general and $32$ in the large-memory regime.
Parallel lines of work~\citep{ALSW25,LDP25} study stochastic arrivals through fluid-guided online scheduling and queueing-throughput viewpoints with different optimization objectives.}

\xhdr{Non-clairvoyant scheduling.}
Classical non-clairvoyant scheduling is mainly studied in a preemptive pause-and-resume model, where processing times are unknown until completion and a suspended job keeps its completed work.
\citet{MPT94} introduced this model and gave deterministic and randomized upper and lower bounds; \citet{BL04} developed randomized multilevel-feedback algorithms for total flow time on single and parallel machines.
Later variants add limited side information, including approximate processing-time classes~\citep{BLMP04}, job-size predictions~\citep{IKQP23}, predictions for only a subset of jobs~\citep{BP24}, and progress-bar feedback~\citep{BCLS25}.
\citet{ImKM18} introduce the packing/polytope scheduling framework, where active jobs receive processing rates subject to packing constraints on the rate vector, and give non-clairvoyant algorithms including a constant-competitive guarantee for total weighted completion time.
\citet{JLM26} sharpen the proportional-fairness analysis for this framework, including important monotone special cases, under the same total weighted completion-time objective.

Another line of research studies restart models, where interrupting a job destroys its accumulated progress.
In clairvoyant online settings, restarts have been studied for single-machine total completion time and for completion-time and flow-time lower bounds~\citep{SP05,ES03}.
\citet{JSWW25} study single-machine scheduling with a weighted completion-time objective in a non-clairvoyant restart model, prove lower bounds for deterministic restart strategies, and give tight analyses for geometric restart strategies.

\xhdr{Completion-time scheduling.}
Total and weighted completion time are central min-sum objectives in scheduling.
For the basic single-machine weighted case, \citet{Smith56} gives an exact ordering principle by sorting jobs according to processing time per unit weight.
For richer machine environments, \citet{HSSW97} develop general offline and online approximation techniques for completion-time objectives, using linear-programming relaxations to guide list scheduling.
\citet{CMNS01} give ordering and rounding methods for average completion time with release dates, parallel machines, and precedence constraints; their algorithms derive an order from a preemptive or linear-programming relaxation and round it into a nonpreemptive schedule.
More recently, \citet{Li20} shows the strength of time-indexed linear-programming relaxations for total weighted completion time, including precedence-constrained machine scheduling; using completion-time structure as a guide, \citet{CIP25} give a gradient-descent meta-algorithm based on the residual weighted completion-time optimum and obtain scalable algorithms for weighted flow time.

\xhdr{Geometric strip packing \& scheduling.}
The memory-time view represents executions as growing geometric regions packed under a fixed memory width.
The KV-cache regions in this model are trapezoidal, while related geometric-packing literature studies rectangles and more general polygons, often with objectives such as the number of bins or the strip length.
\citet{LMM02} survey two-dimensional packing models, including bin packing and strip packing variants.
\citet{AKPP17} prove APX-hardness for strip packing, including the case with polynomially bounded input data.
\citet{Steinberg97} gives a $2$-approximation for rectangle strip packing.
\citet{JZ07} give a $(3+\epsilon)$-approximation for maximum-profit rectangle packing in a bounded rectangle.
More general polygon packing has also been studied: \citet{ABK17} give approximation algorithms for packing convex polygons into minimum-area containers, and \citet{KS23} improve approximation guarantees for translational packing of convex polygons, including polygon strip packing.

Resource-constrained scheduling gives a closely related strip view: a job with processing time and resource demand occupies a rectangle in the time-resource strip, so makespan corresponds to strip length and total completion time corresponds to the sum of rectangle endpoint times.
Under this view, \citet{JR21} give a tight $(3/2+\epsilon)$-approximation for the unweighted makespan objective on identical machines with one shared renewable resource, while \citet{CCZ24} give constant-approximation algorithms for the weighted total completion-time objective on capacitated parallel machines, using volume-by-weight ordering and a hybrid treatment of large resource demands.
 \section{Preliminaries}
\label{sec:preliminaries}

In this section, we define the scheduling model, the non-clairvoyant information structure, the total completion-time objective, and the benchmark bounds used throughout the paper.
Our notation mostly follows the offline batch setting studied in the recent line of work on LLM inference scheduling~\citep{JJMMPZ25,FYZ26}.

\xhdr{Jobs and scheduling actions.}
The system has a single serving instance with a KV-cache memory budget $\kvmem\in\naturals^{+}$.
There are $\NumberJobs$ jobs, denoted by $\JobSet\triangleq[\NumberJobs]$.
Job $i\in\JobSet$ has a known prompt length $\initialLen_i\in\naturals^{+}$ and an unknown response length $\responseLen_i\in\naturals^{+}$.
We assume that each job is individually feasible:
\begin{align*}
    \initialLen_i+\responseLen_i\le \kvmem,
    \qquad \forall i\in\JobSet .
\end{align*}
All jobs arrive at time zero.
Time is discrete and proceeds in rounds $t=0,1,2,\ldots$, where time $t$ denotes the beginning of round $t$.
At time $t$, each unfinished job is either active in a current attempt or inactive.
For each active job $i$, let $\curProgress_{i,t}$ denote the number of response tokens already decoded in its current uninterrupted attempt before round $t$ begins.
For notational convenience, define $\curProgress_{i,t}=0$ when job $i$ is inactive.
At the beginning of each round $t$, the scheduler chooses an active batch $\ActiveBatch_t\subseteq\JobSet$ of unfinished jobs.
Every job $i\in\ActiveBatch_t$ receives one unit of processing during the round.
If $\curProgress_{i,t}+1=\responseLen_i$, then job $i$ completes at the end of round $t$, and its completion time is $t+1$.
Before deciding $\ActiveBatch_t$, the scheduler may \emph{kill} any active unfinished job $i$, discarding all progress in its current attempt and setting $\curProgress_{i,t}=0$; every unfinished job in $\ActiveBatch_{t-1}\setminus\ActiveBatch_t$ is implicitly killed.

\xhdr{KV-cache memory feasibility.}
If job $i$ is active in round $t$, its KV-cache memory usage is
\begin{align*}
   \initialLen_i+\curProgress_{i,t}+1,
\end{align*}
where the term $+1$ accounts for the token being decoded in the current round.
A schedule is feasible if every active batch satisfies
\begin{align*}
    \sum\nolimits_{i\in\ActiveBatch_t}
    \left(\initialLen_i+\curProgress_{i,t}+1\right)
    \le \kvmem,
    \qquad \forall t\ge 0 .
\end{align*}
In other words, the total memory occupied by all active jobs (i.e., their fixed prompts $\sum\nolimits_{i\in\ActiveBatch_t}\initialLen_i$ plus the KV entries generated so far $\sum\nolimits_{i\in\ActiveBatch_t}\curProgress_{i,t}$ plus the tokens currently being decoded $|\ActiveBatch_t|$) must never exceed the memory budget $\kvmem$.

\xhdr{Non-clairvoyant setting and competitive analysis.}
The non-clairvoyant scheduler knows the number of jobs $\NumberJobs$, memory budget $\kvmem$, and the prompt lengths $\{\initialLen_i\}_{i\in\JobSet}$.
It does not know the response lengths $\{\responseLen_i\}_{i\in\JobSet}$ in advance.
The value of response length $\responseLen_i$ is revealed only when job $i$ completes.
The benchmark scheduler is clairvoyant and knows all response lengths in advance.

The objective is to minimize \emph{total completion time}, i.e., the sum of the completion times of all jobs.
We evaluate non-clairvoyant algorithms against the optimal clairvoyant schedule, which knows all response lengths in advance.
Let $\OPT(\JobSet)$ denote the minimum total completion time achievable by such a clairvoyant schedule\footnote{We allow the optimal clairvoyant scheduler to preempt or pause jobs (i.e., retain a job in memory without processing it). However, neither is necessary in an optimal solution: delaying a job's start time weakly dominates pausing, and preemption provides no advantage.} on the full instance~$\JobSet$.\footnote{With slight abuse of notation, we also use $\JobSet$ to denote both the set of jobs and the corresponding instance.}
For an algorithm $\ALG$, let $\ALG(\JobSet)$ denote its total completion time on~$\JobSet$ starting from time~$0$.
We say that $\ALG$ is $\approxratio$-competitive for $\approxratio\geq 1$ if, for every feasible instance,
\begin{align*}
    \ALG(\JobSet) \le \approxratio \cdot \OPT(\JobSet) .
\end{align*}

For the analysis, we extend these definitions to any subset $\SubJobSet\subseteq\JobSet$.
Let $\OptOf{\SubJobSet}$ denote the minimum total completion time achievable by a clairvoyant schedule that processes exactly the jobs in $\SubJobSet$, and let $\ALG(\SubJobSet)$ denote the total completion time of $\ALG$ on $\SubJobSet$ starting from time~$0$.

An important concept in our algorithms and analysis is the \emph{memory-time area}, which we abbreviate simply as \emph{area} when the context is clear.

\begin{definition}[Memory-time area]
For a job with prompt length $\initialLen$ and response length $\responseLen$, its \emph{memory-time area} is the cumulative KV-cache memory occupied over a successful run:
\begin{align*}
    \Area(\initialLen,\responseLen)
    \triangleq
    \initialLen\cdot \responseLen
    +\frac{\responseLen\cdot (\responseLen+1)}{2}.
\end{align*}
For job $i$, we write $\AreaOf{i} \triangleq \Area(\initialLen_i,\responseLen_i)$ for the total KV-cache resource consumed by one uninterrupted successful execution of job $i$.
\end{definition}

In addition to its role in our algorithms, the memory-time area provides a clear lower bound on the optimal total completion time.
The intuition is that the system can supply at most $\kvmem$ units of memory-time area per round, so the aggregate area of all jobs must be paid for in time; placing smaller-area jobs earlier minimizes the resulting sum of completion times.

\begin{lemma}[Area-order lower bound]
\label{lem:prelim:area-lower-bound}
For every $\SubJobSet\subseteq\JobSet$, if we order the jobs as $\pi(1),\ldots,\pi(\setsize{\SubJobSet})$ so that
$
    \AreaOf{\pi(1)}
    \le \cdots \le
    \AreaOf{\pi(\setsize{\SubJobSet})},
$
then
\begin{align*}
    \OptOf{\SubJobSet}
    \ge
    \frac{1}{\kvmem}
    \sum_{r=1}^{\setsize{\SubJobSet}}
    \left(\setsize{\SubJobSet}-r+1\right)
    \AreaOf{\pi(r)} .
\end{align*}
\end{lemma}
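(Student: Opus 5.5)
Fix an optimal clairvoyant schedule for $\SubJobSet$ and relabel the jobs by completion order, $\sigma(1),\ldots,\sigma(\setsize{\SubJobSet})$, so that $C_{\sigma(1)}\le\cdots\le C_{\sigma(\setsize{\SubJobSet})}$. The plan is to show a per-prefix volume bound and then rearrange.

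The first step is to prove that for every $r$,
\begin{align*}
    C_{\sigma(r)} \ge \frac{1}{\kvmem}\sum_{q=1}^{r}\AreaOf{\sigma(q)} .
\end{align*}
By time $C_{\sigma(r)}$, each of the jobs $\sigma(1),\ldots,\sigma(r)$ has completed a successful uninterrupted attempt. During that attempt, in the round where the job decodes its $(u+1)$-th token, it occupies $\initialLen+u+1$ units of memory. Summing over $u=0,\ldots,\responseLen-1$ gives exactly $\initialLen\responseLen+\responseLen(\responseLen+1)/2=\AreaOf{}$.

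These successful attempts all lie in rounds $0,\ldots,C_{\sigma(r)}-1$. The feasibility constraint caps the total memory in each round at $\kvmem$, so the memory-time they occupy is at most $\kvmem\cdot C_{\sigma(r)}$. Failed attempts, and any paused jobs that retain memory (as the footnote allows), only add nonnegative occupancy. A paused job still counts its memory, and its per-round footprint is at least as large as in the uninterrupted accounting, so the bound still holds.

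The second step is to sum over $r$:
\begin{align*}
    \OptOf{\SubJobSet}=\sum_r C_{\sigma(r)} \ge \frac{1}{\kvmem}\sum_{r}(\setsize{\SubJobSet}-r+1)\AreaOf{\sigma(r)} .
\end{align*}
The right-hand side assigns the decreasing weights $\setsize{\SubJobSet}-r+1$ to the areas in $\sigma$-order. By the rearrangement inequality, pairing decreasing weights with areas is minimized when the areas are sorted nondecreasingly, i.e., under the order $\pi$. This gives the claimed bound.

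The only delicate point is the first step's accounting under pausing or preemption. I would handle it by counting only the rounds in which a job actually decodes its tokens within the final successful attempt. Those per-round memory usages are exactly the area terms, and they are all counted within the per-round budget $\kvmem$.
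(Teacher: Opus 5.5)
Your proposal is correct and follows the paper's proof essentially step for step: you derive the per-prefix bound $C_{\sigma(r)} \ge \frac{1}{M}\sum_{q\le r} A_{\sigma(q)}$ from the per-round memory cap, sum over $r$, and use the rearrangement inequality to see that the weighted sum is minimized by the nondecreasing area order $\pi$. Your explicit accounting of the decoding rounds in the successful attempt, including under pausing, spells out what the paper asserts in one line.
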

\begin{proof}
Consider any clairvoyant feasible schedule for $\SubJobSet$ and let $\sigma(1),\ldots,\sigma(\setsize{\SubJobSet})$ be the jobs ordered by completion time in that schedule.
By the time job $\sigma(r)$ completes, the schedule must have supplied the successful memory-time area of the first $r$ completed jobs.
Since the system supplies at most $\kvmem$ units of memory-time area per round,
\begin{align*}
    \C_{\sigma(r)}
    \ge
    \frac{1}{\kvmem}
    \sum_{q=1}^{r}
    \AreaOf{\sigma(q)} .
\end{align*}
Summing over $r$ and exchanging the order of summation gives
\begin{align*}
    \sum_{r=1}^{\setsize{\SubJobSet}} \C_{\sigma(r)}
    \ge
    \frac{1}{\kvmem}
    \sum_{q=1}^{\setsize{\SubJobSet}}
    \bigl(\setsize{\SubJobSet}-q+1\bigr)\,
    \AreaOf{\sigma(q)} .
\end{align*}
The right-hand side is minimized when the areas are ordered nondecreasingly, i.e., when $\sigma=\pi$, which yields the claimed bound.
\end{proof}

We also use the following two standard lower bounds.

\begin{lemma}[Deletion bound]
\label{lem:prelim:deletion-bound}
If $\C_i^\star$ is the completion time of job $i$ in an optimal clairvoyant schedule for the full instance, then for any disjoint subsets $\JobSet_1,\ldots,\JobSet_k\subseteq\JobSet$,
\begin{align*}
    \sum_{\ell=1}^k \OptOf{\JobSet_\ell}
    \le
    \sum_{\ell=1}^k\sum_{i\in\JobSet_\ell}\C_i^\star
    \le
    \OPT (\JobSet) .
\end{align*}
\end{lemma}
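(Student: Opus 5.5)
The plan is to prove the two inequalities separately, both by restricting a fixed optimal clairvoyant schedule for the full instance $\JobSet$ to sub-instances. Fix an optimal clairvoyant schedule $S^\star$ for $\JobSet$ with completion times $\C_i^\star$, so that $\OPT(\JobSet)=\sum_{i\in\JobSet}\C_i^\star$.

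\textbf{First inequality.} For each $\ell$, build a schedule $S_\ell$ for the sub-instance $\JobSet_\ell$. At every round $t$, its active batch is $\ActiveBatch_t\cap\JobSet_\ell$, and each job in $\JobSet_\ell$ keeps exactly its kill and restart decisions from $S^\star$.
\begin{itemize}
\item \emph{Progress.} Each retained job's progress counter $\curProgress_{i,t}$ evolves exactly as in $S^\star$, since it depends only on that job's own attempts.
\item \emph{Feasibility.} The memory used at round $t$ is the sum of the nonnegative terms $\initialLen_i+\curProgress_{i,t}+1$ over a subset of $\ActiveBatch_t$. It is therefore at most the memory of $S^\star$ at round $t$, which is at most $\kvmem$.
\item \emph{Completion times.} Each $i\in\JobSet_\ell$ completes in $S_\ell$ at exactly $\C_i^\star$.
\end{itemize}
Hence $S_\ell$ is a feasible clairvoyant schedule for $\JobSet_\ell$, and $\OptOf{\JobSet_\ell}\le\sum_{i\in\JobSet_\ell}\C_i^\star$. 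Summing over $\ell$ gives the first inequality.

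\textbf{Second inequality.} The sets $\JobSet_1,\ldots,\JobSet_k$ are disjoint, so $\bigcup_\ell\JobSet_\ell\subseteq\JobSet$. Completion times are nonnegative, so
\begin{align*}
\sum_{\ell=1}^k\sum_{i\in\JobSet_\ell}\C_i^\star\le\sum_{i\in\JobSet}\C_i^\star=\OPT(\JobSet).
\end{align*}

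\textbf{Possible obstacle.} The only delicate point is if the benchmark is allowed to pause jobs, meaning a job is kept in memory without being processed (see the footnote on the optimal schedule). The restriction handles this the same way. A paused job's memory term is carried over unchanged, and removing the jobs outside $\JobSet_\ell$ only lowers memory. So $S_\ell$ stays in the same admissible class as $S^\star$, and the argument is unaffected. I expect no step to be a genuine obstacle; this is a monotonicity-under-deletion argument.
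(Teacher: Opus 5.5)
Your proof is correct and follows the paper's proof: restrict the optimal full-instance schedule to each $\JobSet_\ell$ (memory can only drop and completion times are unchanged), then use disjointness and nonnegativity of completion times for the second inequality. One small wording fix: disjointness is what guarantees each job is counted at most once in the double sum, whereas $\bigcup_\ell\JobSet_\ell\subseteq\JobSet$ holds regardless, but the conclusion is unaffected.
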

\begin{proof}
Restricting the optimal full-instance schedule to each subset $\JobSet_\ell$ yields a feasible schedule for $\JobSet_\ell$ with total completion time at most $\sum_{i\in\JobSet_\ell}\C_i^\star$, so $\OptOf{\JobSet_\ell}\le\sum_{i\in\JobSet_\ell}\C_i^\star$ for each $\ell$; summing over $\ell$ gives the first inequality.
The second inequality holds because $\sum_{i\in\JobSet}\C_i^\star=\OPT(\JobSet)$ by definition.
\end{proof}

\begin{lemma}[Processing lower bound]
\label{lem:prelim:processing-lower-bound}
For every $\SubJobSet\subseteq\JobSet$,
\begin{align*}
    \OptOf{\SubJobSet}
    \ge
    \sum_{i\in\SubJobSet}\responseLen_i .
\end{align*}
\end{lemma}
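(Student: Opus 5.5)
The plan is to argue job by job. Fix any feasible clairvoyant schedule for $\SubJobSet$, and let $\C_i$ denote the completion time of job $i\in\SubJobSet$ in that schedule. The goal is to show that $\C_i\ge\responseLen_i$ for every such $i$; summing over $i\in\SubJobSet$ and then taking the schedule to be optimal gives $\OptOf{\SubJobSet}\ge\sum_{i\in\SubJobSet}\responseLen_i$.

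The per-job bound comes straight from the model. Job $i$ completes only at the end of a round $t$ in which $\curProgress_{i,t}+1=\responseLen_i$. Within a single uninterrupted attempt, the counter $\curProgress_{i,\cdot}$ increases by exactly one in each round where $i\in\ActiveBatch_t$. A kill resets the counter to $0$, and pausing (which the benchmark is allowed to do) at most keeps it fixed. Consequently, the successful attempt must contain at least $\responseLen_i$ rounds in which $i$ is in the active batch. Rounds are indexed from $t=0$, so the last of these rounds is some $t\ge\responseLen_i-1$, which gives $\C_i=t+1\ge\responseLen_i$.

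No step here is a real obstacle. The only care needed is to check that the footnoted allowances for the benchmark (preemption and pausing) cannot make a job's progress grow by more than one token per round. Each round gives every active job exactly one unit of processing, so the bound holds for any feasible clairvoyant schedule.
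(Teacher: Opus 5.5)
Your proof is correct and is the same argument as the paper's: each job needs $\responseLen_i$ units of processing at one unit per round, so $\C_i\ge\responseLen_i$, and summing over $\SubJobSet$ gives the bound. Your extra check that kills and pauses cannot speed up progress, together with the round-indexing detail, is a careful but inessential refinement of the paper's one-line justification.
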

\begin{proof}
Each job $i\in\SubJobSet$ must receive $\responseLen_i$ units of processing and can receive at most one unit per round, so it cannot complete before time $\responseLen_i$; summing over all jobs in $\SubJobSet$ gives the claimed bound.
\end{proof}
 \section{From Area Order to Three Scheduling Cases}
\label{sec:regimes}

In the previous section we derived the area-order lower bound, which gives a tractable benchmark for the optimal clairvoyant completion time.
A natural next step is to ask whether a simple scheduling policy can simultaneously achieve high memory utilization and a completion order consistent with this lower bound.
In this section we show that, outside the large-memory regime, no single priority rule can satisfy both requirements.
This structural impossibility motivates our main algorithmic idea: a decomposition of the job set into three geometric regimes that are handled by specialized sub-schedulers.

The area lower bound in \Cref{sec:preliminaries} suggests that two properties are sufficient for a constant approximation in the clairvoyant setting:
\begin{itemize}
    \item \textbf{(High memory utilization)} the schedule should keep a constant fraction of the memory occupied; and
    \item \textbf{(Area-order completion)} the completion order should be close to nondecreasing memory-time area order.
\end{itemize}
High memory utilization ensures that the system delivers $\Omega(\kvmem)$ units of memory-time area in every round, so the total area of all jobs is paid for within a constant factor of the time needed in the optimal schedule.
Area-order completion ensures that the schedule respects the ordering that minimizes the weighted sum in the area lower bound (\Cref{lem:prelim:area-lower-bound}).
Together, they guarantee that the total completion time is bounded by a constant multiple of the lower bound, yielding a constant approximation.

In the clairvoyant large-memory setting where each job satisfies $\initialLen_i+\responseLen_i=o(\kvmem)$, these two properties can be achieved by area-order greedy, which gives a $3+o(1)$ approximation; see \Cref{cor:clairvoyant-large-memory} in \Cref{sec:clairvoyant-large-memory}.

\xhdr{The failure of a single priority rule.}
Outside the large-memory setting, a single global priority order cannot simultaneously preserve an area-consistent completion order and high memory utilization.

\begin{proposition}
    \label{prop:single-rule-fails}
    No clairvoyant algorithm that follows a single priority rule achieves
    a constant approximation.
    An algorithm follows a single priority rule if it fixes a priority order
    $\prec$ on job types $(\initialLen,\responseLen)$ and never runs a
    lower-priority job while an unfinished higher-priority job is feasible.
\end{proposition}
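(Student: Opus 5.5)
The plan is a case analysis on how the fixed order $\prec$ ranks a small family of \emph{probe types}. For each possible relative order of these types we exhibit an instance family, parametrized by $\kvmem\to\infty$, on which every schedule obeying $\prec$ has total completion time $\omega(1)\cdot\OPT$. Upper bounds on $\OPT$ come from explicit clairvoyant schedules. Lower bounds on the rule-following algorithm come from the constraint that a lower-priority job may run only when no unfinished higher-priority job fits.

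\textbf{Blocking case.} Take $X=(\kvmem/2,\kvmem/2)$ and $Y=(\kvmem/2,1)$. Here $\AreaOf{X}=\Theta(\kvmem^2)$ and $\AreaOf{Y}=\Theta(\kvmem)$. Any active $X$ uses more than $\kvmem/2$ memory, so no $Y$ can run beside it.

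Suppose $X\prec Y$, i.e.\ $X$ has the higher priority. Use one copy of $X$ and $n=\sqrt{\kvmem}$ copies of $Y$. At time $0$ the job $X$ is feasible, so the rule starts it, and all $n$ copies of $Y$ wait about $\kvmem/2$ rounds. This gives $\ALG\ge n\kvmem/2=\Theta(\kvmem^{3/2})$.

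The clairvoyant schedule instead runs the $Y$'s first, one per round, and then $X$. This costs $O(n^2+n+\kvmem)=O(\kvmem)$, so the ratio is $\Omega(\sqrt{\kvmem})$. The same template rules out any order that places some $\Theta(\kvmem)$-width, long-response type above a short type of width $\ge\kvmem/2$.

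\textbf{Utilization case.} We may now assume $\prec$ puts the short wide type $Y$ above the longer types. We must then show that area-consistent prioritization fails as well. The intended mechanism, as described in the text, is a long small-prompt job that monopolizes the batch.

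The concrete attempt uses $Z=(1,\kvmem-2)$. Its footprint eventually approaches $\kvmem$, but its average utilization over its lifetime is only $1/2$. We compare $Z$ with a type $W$ of moderate width, chosen so that many copies of $W$ pack perfectly in parallel but cannot coexist with a grown $Z$.

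If $Z\prec W$, then $Z$ is always feasible while unfinished, so it starts immediately. Once its footprint exceeds $\kvmem-\mathrm{width}(W)$, every copy of $W$ is blocked for $\Theta(\kvmem)$ rounds, which is far more than the $O(\kvmem)$ total the $W$'s need in $\OPT$.

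If instead $W\prec Z$, we invert the picture. We pick the parameters of $W$ so that $\AreaOf{W}$ exceeds the area of a different small type, and re-enter the blocking case with that pair.

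\textbf{Closing the argument and main obstacle.} Finally we combine the cases. Among finitely many probe types, every total order on them contains a pair triggering either the blocking instance or the monopolization instance. Each of these has ratio $\Omega(\kvmem^{c})$ for some fixed $c>0$, so no constant bound survives.

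The main obstacle is the second (utilization) family. The rule still lets lower-priority jobs run \emph{beside} a higher-priority one, so simply prioritizing small-area jobs does not by itself idle memory. The instance must therefore force the high-priority job's \emph{growing} footprint to exclude everything else, and this is exactly the regime that distinguishes response-heavy from prompt-heavy jobs.

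I would first try to tune the widths of $Z$ and $W$, and the number of copies of $W$. The goal is that either ordering of the pair $(Z,W)$ is bad, or, if one ordering is harmless, that the harmless ordering is forced to clash with the $X,Y$ blocking pair. If that fails, I would enlarge the probe family to three or four types and argue by pigeonhole on the induced order.
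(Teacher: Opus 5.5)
Your blocking case is correct, but it settles only the order $X\prec Y$ for your pair $X=(\kvmem/2,\kvmem/2)$, $Y=(\kvmem/2,1)$. The reverse order cannot be punished with this pair: no two of these jobs ever fit together, so the problem is serial, and running the unit-length $Y$'s first is optimal.

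Everything after that is a plan rather than a proof. $W$ is never specified, and the $Z\prec W$ subcase does not follow as written. The rule only forbids running a lower-priority job while a higher-priority one is feasible. With a single $Z$ already running, short copies of $W$ may run beside it throughout the long stretch where its footprint is small. They are excluded from the start only if $W$ has width essentially $\kvmem$. Also, ``blocked for $\Theta(\kvmem)$ rounds, far more than $O(\kvmem)$'' is not yet a ratio bound.

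The $W\prec Z$ subcase (``invert the picture'') and the closing claim that every total order on some finite probe family contains a bad pair are only asserted. Moreover, $Z=(1,\kvmem-2)$ has near-maximal area $\Theta(\kvmem^2)$ and two copies of $Z$ cannot complete overlapping runs. Ranking $Z$ low is therefore essentially the area order, with no evident way to punish it. That is why this subcase has to be pushed onto another pair, which you never exhibit.

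The missing idea is that a single pair of types for which \emph{both} orders fail already suffices, so no pigeonhole over a larger family is needed. The paper sets $\kvmem=L^3+1$ and uses two types:
a narrow long type $X=(1,L)$ with area $\Theta(L^2)$, and a full-width unit type $Y=(L^3,1)$ with area $\Theta(L^3)$. Since $Y$ uses all $\kvmem$ memory in its single round, it can never share a round with any other job.

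If $X\prec Y$, take one $X$ and $\lfloor\sqrt L\rfloor$ copies of $Y$. This gives $\ALG=\Omega(L^{3/2})$ against $\OPT=O(L)$. Here the monopolization comes from $Y$'s full width, not from any growth of $X$, whose footprint never exceeds $L+1$. This corrects your diagnosis that a growing footprint is needed.

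If $Y\prec X$, take $L^2-L+1$ copies of $X$, which fit simultaneously since $(L^2-L+1)(L+1)=\kvmem$, together with $\lfloor L^{3/2}\rfloor$ copies of $Y$. The serial $Y$'s delay every $X$, so $\ALG=\Omega(L^{7/2})$ against $\OPT=O(L^3)$.

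The crux is the scale $L=\Theta(\kvmem^{1/3})$. At this scale $X$ runs longer than $Y$ yet has smaller area, so neither shortest-first nor area-first can be right, and many copies of $X$ pack in parallel. Your types lack exactly this, because their responses are either $1$ or $\Theta(\kvmem)$.
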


\begin{proof}
Fix a large integer $L$ and set $\kvmem=L^3+1$.
Consider two job types
\[
    X=(1,L),
    \qquad
    Y=(L^3,1).
\]
Type $X$ has area $\Theta(L^2)$ and type $Y$ has area $\Theta(L^3)$.
Since a type-$Y$ job occupies $L^3+1=\kvmem$ memory in its only active round,
it cannot share any round with another job. Below we consider two possible cases of the priority rule.

\smallskip
\noindent\emph{Case 1: $X\prec Y$.}
    In this case, 
    take one type-$X$ job and $K=\floor{\sqrt{L}}$ type-$Y$ jobs.
    The completion order follows nondecreasing area,
    but memory utilization is at most $(L+1)/\kvmem=o(1)$ while the type-$X$ job
    runs alone for $L$ rounds, blocking all type-$Y$ jobs.
    Thus $\ALG=\Omega(KL)=\Omega(L^{3/2})$.
    Running the type-$Y$ jobs first gives
    $\OPT=O(K^2+L)=O(L)$, hence a ratio of $\Omega(\sqrt{L})$.

\smallskip
\noindent\emph{Case 2: $Y\prec X$.}
    In this case, 
    take $N=\kvmem/(L+1)=L^2-L+1$ type-$X$ jobs
    and $K=\floor{L^{3/2}}$ type-$Y$ jobs.
    Memory is fully utilized during type-$Y$ rounds,
    but the completion order reverses the area ranking:
    all $N$ small-area type-$X$ jobs are blocked until the $K$
    serial type-$Y$ jobs complete.
    Thus $\ALG\ge NK=\Omega(L^{7/2})$.
    Running all type-$X$ jobs together first
    (they fit since $N(L+1)\le\kvmem$) and then the type-$Y$ jobs
    serially gives $\OPT=O(L^3)$, hence a ratio of $\Omega(\sqrt{L})$.

\smallskip
\noindent
Combining the two cases together, since $L=\Theta(\kvmem^{1/3})$ is unbounded, no such single priority rule gives a constant approximation.
\end{proof}
The proposition rules out any single priority order, including natural global orders such as response-length order and area order.
In the non-clairvoyant setting, the obstruction is further compounded because response lengths are unknown and a killed attempt loses all accumulated progress.
The two cases impose opposite pressures on a non-clairvoyant scheduler.
On one hand, small-prompt jobs must sometimes yield memory to large-prompt jobs; otherwise short large-prompt jobs may remain undiscovered.
On the other hand, large-prompt jobs cannot be run too aggressively, because one such attempt may already fill the cache and block many small-prompt jobs.
The fundamental obstacle is serving both geometries without letting either monopolize the memory budget.
This points to a decomposition into geometric regimes: we first ask whether each regime, taken independently, admits a constant-factor guarantee.\footnote{The hard instance in \Cref{prop:single-rule-fails} mixes large prompt-heavy jobs with response-heavy jobs. In the two cases of the proposition, the failure of the constant ratio comes from violating high-memory utilization and area-order completion, respectively.}

\xhdr{\PromptHeavyCap jobs.}
For jobs whose prompt length dominates the response length, the growing memory use is close to a fixed width.
If $\responseLen_i\le\initialLen_i$, then during a successful attempt,
\begin{equation*}
    \initialLen_i+1
    \le
    \initialLen_i+\curProgress_{i,t}+1
    \le
    2\initialLen_i .
\end{equation*}
Hence the dynamic KV-cache memory use is within a constant factor of a fixed rectangle width.
This motivates the rectangle strip abstraction: job $i$ is represented by a fixed rectangle width $\rectWidth_i$, and the scheduler packs active rectangle widths under the same memory budget.

The fixed-width view gives good utilization when rectangle widths are uniformly small or uniformly large.
\begin{enumerate}
    \item
    Suppose every relevant rectangle width is at most $\WidthUpperBound$.
    Then utilization comes from greedy packing: if a waiting job cannot be inserted, then the active rectangles already use more than $\kvmem-\WidthUpperBound$ memory.
    Therefore the schedule has high utilization whenever $\kvmem-\WidthUpperBound = \Omega(\kvmem)$.

    \item
    Suppose every relevant rectangle width is at least $\WidthLowerBound$.
    Then any non-idle round uses at least $\WidthLowerBound$ memory.
    Therefore any work-conserving schedule has constant utilization whenever $\WidthLowerBound=\Omega(\kvmem)$.
\end{enumerate}
Mixing these widths in one schedule can destroy the utilization lower bound.
We therefore separate jobs by the prompt length first, using a prompt cutoff $\PromptCutoff \in (0,\kvmem/2)$: \emph{large} jobs with $\initialLen_i>\PromptCutoff$ are separated first, and the other \emph{small} jobs have bounded rectangle widths under the fixed-width view when $\responseLen_i \le \initialLen_i$.

\xhdr{\ResponseHeavyCap jobs.}
When $\responseLen_i > \initialLen_i$ instead, the memory footprint grows well beyond the prompt length and the fixed-width view no longer applies.
For these jobs, the area is controlled by the response length:
\begin{equation*}
    \frac{\responseLen_i^2}{2}
    \le
    \AreaOf{i}
    \le
    2\responseLen_i^2 .
\end{equation*}
Thus response-length order and area order agree up to constant factors.

\xhdr{Three scheduling cases.}
The above discussions give the following three regimes:
\begin{itemize}
\item \textbf{(Large jobs)}
    $\initialLen_i > \PromptCutoff$. These jobs are so wide that each one nearly fills the memory budget alone.
    \item \textbf{(Small prompt-heavy jobs)} $\initialLen_i \le \PromptCutoff$ and $\responseLen_i \le \initialLen_i$. The response is no longer than the prompt, so the memory footprint stays within a constant factor of the prompt length.
    \item \textbf{(Small response-heavy jobs)} $\initialLen_i \le \PromptCutoff$ and $\responseLen_i > \initialLen_i$. The memory footprint grows well beyond the prompt, and the area is dominated by the response length.
\end{itemize}

Each regime, taken in isolation, admits a constant-competitive non-clairvoyant algorithm. Specifically, \Cref{sec:rectangle-strip,sec:response-heavy} give constant-competitive algorithms for these regimes (\Cref{prop:rectangle-large-guarantee,prop:rectangle-prompt-heavy-guarantee} and \Cref{thm:response-heavy-ratio}).
The remaining challenge is that the non-clairvoyant scheduler does not know which regime a job belongs to, because the distinction between prompt-heavy and response-heavy depends on the unknown response length.
We explain how to resolve this difficulty when we discuss the meta-scheduler below.

If the three-way partition were known in advance, one could simply run each sub-scheduler on a dedicated replica (a three-fold resource augmentation) and obtain an $O(1)$-competitive guarantee immediately.
Without resource augmentation, the black-box meta-scheduler of \Cref{sec:meta-scheduling} time-shares a single memory budget across the three sub-schedulers, losing only a constant factor.
In the non-clairvoyant setting, the additional difficulty is that the partition is not known: the scheduler cannot distinguish small \PromptHeavy jobs from small \ResponseHeavy jobs without executing them.
We resolve this by routing jobs dynamically as their response lengths reveal themselves, using the routing meta-scheduler described in \Cref{sec:meta-scheduling}.

 \section{Non-Clairvoyant Rectangle Strip Scheduling}
\label{sec:rectangle-strip}
This section analyzes two of the three branches: \LargeBranch and \PromptBranch. These two branches share the following useful property: for each job, its memory growth can be bounded by a constant factor of its prompt length. In particular, in \LargeBranch, the rectangle width is at most $\kvmem$, which is $O(\initialLen_i)$ when the prompt cutoff $\PromptCutoff$ satisfies $\PromptCutoff=\Theta(\kvmem)$. In \PromptBranch, the width is at most $2 \initialLen_i$.

This property allows us to reduce the analysis of these two branches to a non-clairvoyant rectangle strip scheduling problem. In this problem, each job is represented by a rectangle with a known fixed width and an unknown processing time. The clairvoyant counterpart of this problem has been studied in \citet{CCZ24}. To the best of our knowledge, the non-clairvoyant version has not been investigated before. We prove an $O(1)$-competitive algorithm for this problem, which may also be of independent interest. Next, we define the non-clairvoyant rectangle strip scheduling problem.

\subsection{The Rectangle Strip Scheduling Problem}
\label{subsec:rectangle-strip-problem}

An instance of the rectangle strip problem consists of a job set $\JobSet$, a memory budget $\kvmem$, a known width $\rectWidth_i \leq \kvmem$, and an unknown processing time $\processingTime_i$ for each job $i\in\JobSet$.
At every round, the total width of the active attempts must be at most $\kvmem$.
An active attempt of job $i$ uses width $\rectWidth_i$ and completes the job after receiving $\processingTime_i$ consecutive processing rounds.
If an attempt is stopped earlier, its progress is lost.
The processing time $\processingTime_i$ is not known to the scheduler.
Let $\OptRectOf{\JobSet}$ denote the optimal clairvoyant total completion time for this rectangle strip instance.

\subsection{The Rectangle Strip Scheduler and Analysis}
\label{subsec:rectangle-strip-scheduler}

The scheduler \ALGRect processes each job through a sequence of capped attempts with geometrically increasing length caps.
Since $\processingTime_i$ is unknown, this prevents the schedule from spending too much time on any single job.
If the job does not complete within this cap, the attempt expires and is killed.
The next attempt restarts the job from zero progress with the doubled cap.
Thus the level-$\Index$ attempt of job $i$ can run for at most $2^\Index$ rounds.

Based on this doubling rule, \ALGRect is a greedy allocator over attempts.
For attempt $(i,\Index)$, define its area budget by
\begin{equation*}
    \BudgetOf{i}{\Index}
    \triangleq
    \rectWidth_i2^\Index .
\end{equation*}
This is a doubling estimate of the true rectangle area $\RectAreaOf{i}\triangleq\rectWidth_i\processingTime_i$.
A pending queue $\PendingQueue$ is maintained, containing all job attempts waiting to be started.
At every event time, the scheduler checks whether the pending attempt with the minimum area budget ($\min\PendingQueue$) can fit into the remaining memory.
If so, it starts the attempt and repeats the scan; otherwise, the scan stops.

The critical events are time $0$, a job completion, and an attempt expiration.
Without loss of generality, simultaneous events in \Cref{alg:rectangle-strip} are processed one at a time under an arbitrary fixed tie-breaking order:
each event time after time $0$ contains only one completion or one expiration.
The pseudocode is given in \Cref{alg:rectangle-strip}.

\begin{algorithm2e}
\caption{Budget-Doubling Greedy for Rectangle Strips ($\ALGRect$)}
\label{alg:rectangle-strip}
\SetKw{Break}{break}
\KwIn{Rectangle job set $\JobSet$, widths $\{\rectWidth_i\}_{i\in\JobSet}$, memory $\kvmem$.}
$\PendingQueue\leftarrow\{(i,0):i\in\JobSet\}$\tcp*{attempt $(i,\Index)$: length cap $2^\Index$, budget $\BudgetOf{i}{\Index}=\rectWidth_i2^\Index$}

\For{each event in order: time $0$, a completion, or an expiration}{
    \uIf{the current event is the completion of an active attempt}{
        Finalize that completed attempt\;
    }
    \ElseIf{the current event is the expiration of an active attempt $(i,\Index)$}{
        Kill $(i,\Index)$ and insert $(i,\Index+1)$ into $\PendingQueue$\;
    }
    \tcp{Greedy admission scan until no more jobs can be accepted.}
    \While{$\PendingQueue\neq\varnothing$}{
        $(i,\Index)\leftarrow\min \PendingQueue$\tcp*{the minimum budget attempt}
        \If{$(i,\Index)$ does not fit in the remaining memory}{
            \Break\;
        }
        Remove $(i,\Index)$ from $\PendingQueue$ and start it\;
    }
}
\end{algorithm2e}

\xhdr{Proof overview.}
In the analysis, we decompose the completion time of each job into its active time and queue waiting time.
The active time is immediate: the total active time of job $i$ over all its attempts is at most $4\processingTime_i$; it remains to bound the queue waiting time.
To relate the waiting time to the optimal solution, we use a memory-area argument, extending the clairvoyant argument from \Cref{sec:regimes}. This argument relies on the following two properties maintained by \Cref{alg:rectangle-strip}:
\begin{enumerate}[label=(\roman*)]
    \item attempts are processed in order of nearly nondecreasing rectangle area budget; and
    \item every round with a nonempty queue uses a constant fraction of the memory.
\end{enumerate}

Property (ii) turns queue waiting time into an area charge:
whenever an attempt waits, the greedy scan and the width conditions ensure that a constant fraction of the memory is occupied by active attempts.
Property (i) makes this charge compatible with the area order:
the active area charged to a waiting attempt must come only from attempts with no larger budget.
This condition lets us bound the total waiting contribution by the area-order lower bound.

The difficulty in proving property (i) is that attempts are created over time.
Although the greedy scan always starts a minimum-budget pending attempt when it fits, a replacement attempt $(i,\Index)$ appears only when $(i,\Index-1)$ expires.
While $(i,\Index-1)$ is active, the scan may have admitted other attempts whose budgets are larger than the budget of the replacement.
If $(i,\Index)$ remained pending, those active attempts would have a larger budget than a pending attempt, violating the desired order.

We rule out this situation using a crucial property of the rectangle strip model:
the width of each job is fixed across all attempts, even though the length cap $2^\Index$ increases with the level $\Index$.
When $(i,\Index-1)$ expires, it releases a block of memory of width $\rectWidth_i$.
The replacement $(i,\Index)$ has the same width $\rectWidth_i$, so this released memory is enough to run the replacement.
If the replacement budget is small enough to threaten the budget order, then it has priority over every older pending attempt.
The following greedy scan therefore starts the replacement immediately, before any older pending attempt can use the released memory.
The approximate budget order is formalized by queue monotonicity (\Cref{lem:rect-queue-monotonicity}): every pending attempt has budget at least that of every active attempt in each settled state.

This fixed-width argument is specific to the rectangle strip regime, which captures both \LargeBranch and \PromptBranch. In contrast, in the \ResponseHeavy regime, the effective width of a job may continue to grow, and therefore the reduction to fixed-width rectangle scheduling no longer applies. For this reason, \Cref{sec:response-heavy} uses a different approach to maintain a good competitive ratio. This separation explains why the branches require different scheduling policies.

To precisely state the property of our algorithm, we index completion and expiration events by $\ell=1,2,\dots$, and let $\ell=0$ denote the initial event at time $0$.
We call a state \emph{settled} when the greedy admission scan has stopped.
Let $S_\ell$ denote the settled state at the end of the outer \textbf{for} loop in \Cref{alg:rectangle-strip} after event $\ell$, and let $\PendingQueue(S_\ell)$ denote the pending queue in $S_\ell$.

\begin{lemma}[Queue monotonicity]
\label{lem:rect-queue-monotonicity}
Consider the minimum budget among all pending attempts at a settled state:
\begin{equation*}
    \min_{(i,\Index)\in\PendingQueue(S_\ell)}\BudgetOf{i}{\Index}.
\end{equation*}
This quantity is nondecreasing with $\ell=0,1,2,\dots$, with the convention that its value is $+\infty$ when $\PendingQueue(S_\ell)=\varnothing$.
Moreover, at any settled state $S_\ell$, every active attempt $(i,\Index)$ and every pending attempt $(j,\OtherIndex)$ satisfy
\begin{equation*}
    \BudgetOf{i}{\Index}
    \le
    \BudgetOf{j}{\OtherIndex}.
\end{equation*}
\end{lemma}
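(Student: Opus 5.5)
The plan is to prove both claims together by induction on the event index $\ell$, using the joint invariant
\begin{equation*}
    \max_{(i,\Index)\ \text{active in } S_\ell}\BudgetOf{i}{\Index}
    \;\le\;
    m_\ell
    \triangleq
    \min_{(j,\OtherIndex)\in\PendingQueue(S_\ell)}\BudgetOf{j}{\OtherIndex},
\end{equation*}
together with the monotonicity $m_\ell\le m_{\ell+1}$. Two simple facts about the greedy admission scan of \Cref{alg:rectangle-strip} drive the argument. First, the scan always removes the current minimum of $\PendingQueue$, so the budgets it admits within one scan are nondecreasing. Second, every attempt still pending when the scan stops has budget at least that of every attempt admitted during that scan. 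Between consecutive events nothing changes: no attempt is admitted or killed, and budgets are fixed per attempt. So it suffices to compare settled states.

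\emph{Base case} ($\ell=0$). Initially $\PendingQueue=\{(i,0)\}$ and nothing is active. By the two scan facts, every admitted attempt has budget at most every attempt that remains pending. Hence the invariant holds at $S_0$.

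\emph{Inductive step, completion event.} The completed attempt leaves the active set, and $\PendingQueue$ is unchanged, so just before the scan its minimum is still $m_\ell$. Every attempt admitted in this scan has budget at least $m_\ell$. Every old active attempt has budget at most $m_\ell$ by induction. Every newly admitted attempt has budget at most every attempt remaining pending. Since the pending set only loses elements, $m_{\ell+1}\ge m_\ell$.

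\emph{Inductive step, expiration event.} This is the main obstacle. The expired attempt $(i,\Index)$ is killed, and the replacement $(i,\Index+1)$, with budget $2\rectWidth_i2^\Index$, is inserted. The difficulty is that this new budget may be smaller than $m_\ell$, which could break monotonicity. I would split into two cases.
\begin{itemize}
\item If $\BudgetOf{i}{\Index+1}\ge m_\ell$, the queue minimum before the scan is still $m_\ell$, and the completion-case argument applies verbatim.
\item If $\BudgetOf{i}{\Index+1}<m_\ell$, the replacement is the unique minimum of $\PendingQueue$, so the scan examines it first. I will argue that it fits. In $S_\ell$ the active widths, including $\rectWidth_i$, summed to at most $\kvmem$. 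After the kill, the remaining active widths sum to at most $\kvmem-\rectWidth_i$, and no other attempt has been admitted in between. So $(i,\Index+1)$, which has the same width $\rectWidth_i$, fits and is started immediately. This fixed-width property is exactly the rectangle-strip feature being used.
\end{itemize}
In the second case, after admitting the replacement the pending set is exactly $\PendingQueue(S_\ell)$, with minimum $m_\ell$. The rest of the scan then proceeds as in the completion case. The replacement's budget is below $m_\ell$, which is at most all remaining pending budgets. Old actives are bounded by $m_\ell$ by induction, and later admissions are bounded by the final pending minimum by the scan facts. Finally, $m_{\ell+1}\ge m_\ell$, because the only inserted element with budget below $m_\ell$ was removed before the scan stopped.

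The convention $m_\ell=+\infty$ for an empty queue makes all these inequalities hold trivially in degenerate cases. Ties $\BudgetOf{i}{\Index+1}=m_\ell$ fall into the first case and cause no trouble. This completes the induction and yields both statements of the lemma.
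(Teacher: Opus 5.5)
Your proposal is correct and follows the paper's proof essentially step by step. Both argue by induction over settled states, treat the completion case as trivial, and split the expiration case on whether $\BudgetOf{i}{\Index+1}$ falls below the current pending minimum, using the released fixed width $\rectWidth_i$ to guarantee that the replacement is admitted immediately; your joint invariant (maximum active budget at most minimum pending budget) is just a compact restatement of the paper's two claims.
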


\begin{proof}
We prove the two claims by induction on $\ell$.

For the base case $\ell=0$, all attempts initially have level $0$ and are placed in the pending queue.
The greedy admission scan repeatedly starts a minimum-budget pending attempt until the queue is empty or the minimum-budget pending attempt no longer fits.
Thus, at the settled state $S_0$, every active attempt has budget at most every remaining pending attempt.

Now fix $\ell\ge 1$, assume both claims hold at settled state $S_{\ell-1}$, and consider event $\ell$.
We first show that the minimum pending budget does not decrease, by distinguishing two kinds of events.

\xhdr{Completion.}
Suppose event $\ell$ is a completion.
No new attempt is inserted before the greedy admission scan.
The scan only removes attempts from the pending queue and starts them.
Thus the minimum pending budget cannot decrease.

\xhdr{Expiration.}
Suppose event $\ell$ is the expiration of an active attempt $(i,r)$.
The scheduler kills this attempt and inserts its replacement $(i,r+1)$.
The replacement has the same width $\rectWidth_i$ and a larger budget.

If
\begin{equation*}
    \BudgetOf{i}{r+1}
    \ge
    \min_{(h,q)\in\PendingQueue(S_{\ell-1})}\BudgetOf{h}{q},
\end{equation*}
then inserting $(i,r+1)$ does not decrease the minimum pending budget.
The following greedy admission scan only removes pending attempts and starts them, so the minimum pending budget at $S_\ell$ is at least the minimum pending budget at $S_{\ell-1}$.

It remains to consider the case
\begin{equation*}
    \BudgetOf{i}{r+1}
    <
    \min_{(h,q)\in\PendingQueue(S_{\ell-1})}\BudgetOf{h}{q}.
\end{equation*}
Then $(i,r+1)$ has strictly smaller budget than every older pending attempt.
The expired attempt $(i,r)$ has just released width $\rectWidth_i$, and the replacement $(i,r+1)$ has the same width.
Therefore $(i,r+1)$ fits immediately after the expiration.
Since it is the minimum-budget pending attempt, the greedy admission scan starts it before any older pending attempt.
Thus $(i,r+1)$ cannot remain pending in $S_\ell$.
All attempts that remain pending in $S_\ell$ are older pending attempts from $\PendingQueue(S_{\ell-1})$, except that some of them may have been removed by the scan.
Therefore the minimum pending budget at $S_\ell$ is again at least the minimum pending budget at $S_{\ell-1}$.

\xhdr{Budget order.}
In both cases, the minimum pending budget does not decrease from $S_{\ell-1}$ to $S_\ell$.
An active attempt in $S_\ell$ either survived from $S_{\ell-1}$ or was started during the greedy admission scan after event $\ell$.
If it survived from $S_{\ell-1}$, the induction hypothesis bounds its budget by the minimum pending budget at $S_{\ell-1}$; since the minimum pending budget does not decrease, its budget is also at most every pending budget at $S_\ell$.
If it was started during the scan, then it was a minimum-budget pending attempt at the moment it was started.
Because the scan always removes a current minimum-budget pending attempt, any attempt started by the scan has budget at most every attempt that remains pending when the scan stops.
Thus every active attempt in $S_\ell$ has budget at most every pending attempt in $S_\ell$.
\end{proof}

Write $\RectAreaOf{i}\triangleq\rectWidth_i\processingTime_i$ for the rectangle area of job $i$.
Let
$
    \Index_i
    \triangleq
    \min\{\Index:2^\Index\ge \processingTime_i\}
$
be the first successful level.
Then
\begin{equation*}
    \RectAreaOf{i}
    \le
    \BudgetOf{i}{\Index_i}
    <
    2\RectAreaOf{i} .
\end{equation*}

\begin{lemma}[Per-job completion-time control]
\label{lem:rect-per-job}
Assume the rectangle instance on $\JobSet$ has waiting utilization $U$: in every settled state where $\PendingQueue \neq \varnothing$, the active attempts use at least $U$ total memory.
For every job $i\in\JobSet$,
\begin{equation*}
    \C_i^{\ALGRect}
    \le
    4\processingTime_i
    +
    \frac{4}{U}
    \sum_{j\in\JobSet}\min\{\RectAreaOf{i},\RectAreaOf{j}\}.
\end{equation*}
\end{lemma}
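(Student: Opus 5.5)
We split the time interval $[0,\C_i^{\ALGRect})$ into rounds where some attempt of job $i$ is active and rounds where an attempt of $i$ is pending in $\PendingQueue$. At every moment before completion, exactly one of these holds, since the replacement is inserted the instant the previous attempt expires. The active rounds total at most $\sum_{\Index\le \Index_i}2^\Index<2^{\Index_i+1}<4\processingTime_i$, because $2^{\Index_i-1}<\processingTime_i$. So it suffices to show that the number of waiting rounds $\WaitingTime_i$ is at most $\frac{4}{U}\sum_{j}\min\{\RectAreaOf{i},\RectAreaOf{j}\}$.

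\textbf{Step 1: charging a waiting round.} Consider a waiting round $t$, during which some attempt $(i,\Index)$ with $\Index\le\Index_i$ is pending. The state during $t$ is the settled state after the last event, and that state has a nonempty queue. By the utilization hypothesis, the active attempts use at least $U$ memory. By \Cref{lem:rect-queue-monotonicity}, every active attempt $(j,\OtherIndex)$ satisfies $\BudgetOf{j}{\OtherIndex}\le\BudgetOf{i}{\Index}\le\BudgetOf{i}{\Index_i}<2\RectAreaOf{i}$. Summing over waiting rounds gives
\[
U\cdot \WaitingTime_i\le \sum_{j}\sum_{\OtherIndex}\rectWidth_j\cdot\bigl|\{t\text{ waiting for } i: (j,\OtherIndex)\text{ active at }t\}\bigr|,
\]
where only attempts with $\BudgetOf{j}{\OtherIndex}<2\RectAreaOf{i}$ contribute.

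\textbf{Step 2: bounding job $j$'s contribution.} Attempt $(j,\OtherIndex)$ is active for at most $\min\{2^\OtherIndex,\processingTime_j\}$ rounds, so it contributes at most $\rectWidth_j\min\{2^\OtherIndex,\processingTime_j\}\le\BudgetOf{j}{\OtherIndex}$ memory-time. We then sum over the admissible levels $\OtherIndex$ of $j$. These are the levels $\OtherIndex\le\Index_j$ (later levels never run) with $\rectWidth_j2^\OtherIndex<2\RectAreaOf{i}$. Their budgets form a geometric series whose top term is less than $\min\{\BudgetOf{j}{\Index_j},2\RectAreaOf{i}\}$.

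The main obstacle is the constant. Crudely, the series gives $2\min\{2\RectAreaOf{j},2\RectAreaOf{i}\}=4\min\{\RectAreaOf{i},\RectAreaOf{j}\}$, which yields exactly the factor $4/U$. I would check this as follows:
\begin{itemize}
\item the last level $\Index_j$ contributes actual area $\rectWidth_j\processingTime_j=\RectAreaOf{j}$, not its full budget;
\item every earlier level has budget less than $\RectAreaOf{j}$, so levels $\OtherIndex<\Index_j$ sum to less than $2\RectAreaOf{j}$;
\item when the cap $2\RectAreaOf{i}$ binds, the budgets are below $2\RectAreaOf{i}$ and sum geometrically to less than $4\RectAreaOf{i}$.
\end{itemize}
Combining these, the contribution of $j$ is at most $4\min\{\RectAreaOf{i},\RectAreaOf{j}\}$ in all cases. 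The term $j=i$ is included harmlessly.

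Dividing by $U$ and adding the active-time bound gives the claim.
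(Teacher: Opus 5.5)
Your proposal is correct and follows essentially the same route as the paper. You split completion time into active time ($<4\processingTime_i$) and waiting time, charge each waiting round via queue monotonicity and the utilization hypothesis to active attempts of budget at most $\BudgetOf{i}{\Index_i}$, and bound each job $j$'s geometric series of budgets by $2\min\{\BudgetOf{i}{\Index_i},\BudgetOf{j}{\Index_j}\}\le 4\min\{\RectAreaOf{i},\RectAreaOf{j}\}$. That is exactly your "crude" bound, so the bulleted refinement to $\min\{3\RectAreaOf{j},4\RectAreaOf{i}\}$ is valid but not needed.
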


\begin{proof}
Let $\WaitingTime_i$ be the total time during which there exists an attempt of job $i$ pending before it starts.
The active time is bounded by the geometric caps:
\begin{equation*}
    \sum_{\Index=0}^{\Index_i}2^\Index
    <
    2^{\Index_i+1}
    \le
    4\processingTime_i .
\end{equation*}
Hence
\begin{equation*}
    \C_i^{\ALGRect}
    \le
    4\processingTime_i+\WaitingTime_i .
\end{equation*}
Consider any waiting round counted by $\WaitingTime_i$.
Let $(i,\OtherIndex)$ be the pending attempt of job $i$ in that round.
By \Cref{lem:rect-queue-monotonicity}, every active attempt $(j,\Index)$ satisfies
\begin{equation*}
    \BudgetOf{j}{\Index}
    \le
    \BudgetOf{i}{\OtherIndex}
    \le
    \BudgetOf{i}{\Index_i}.
\end{equation*}
The utilization assumption gives at least $U$ active memory in this round, all from attempts of budget at most $\BudgetOf{i}{\Index_i}$.
Therefore
\begin{equation*}
    U \WaitingTime_i
    \le
    \sum_{j\in\JobSet}
    \sum_{\substack{0\le \OtherIndex\le \Index_j\\ \BudgetOf{j}{\OtherIndex}\le \BudgetOf{i}{\Index_i}}}
    \BudgetOf{j}{\OtherIndex}.
\end{equation*}
For each fixed job $j$,
\begin{equation*}
    \sum_{\substack{0\le \OtherIndex\le \Index_j\\ \BudgetOf{j}{\OtherIndex}\le \BudgetOf{i}{\Index_i}}}
    \BudgetOf{j}{\OtherIndex}
    <
    2\min\{\BudgetOf{i}{\Index_i},\BudgetOf{j}{\Index_j}\}
    \le
    4\min\{\RectAreaOf{i},\RectAreaOf{j}\},
\end{equation*}
where the last inequality uses the successful budget bound above.
Combining the previous displays gives
\begin{equation*}
    \WaitingTime_i
    \le
    \frac{4}{U}
    \sum_{j\in\JobSet}\min\{\RectAreaOf{i},\RectAreaOf{j}\}.
\end{equation*}
Together with the active-time bound above, this proves the lemma.
\end{proof}

\begin{lemma}[Rectangle lower bounds]
\label{lem:rect-lower-bounds}
For every rectangle strip instance on $\JobSet$,
\begin{equation*}
    \OptRectOf{\JobSet}
    \ge
    \sum_{i\in\JobSet}\processingTime_i
\end{equation*}
and
\begin{equation*}
    \frac{1}{2\kvmem}
    \sum_{i,j\in\JobSet}\min\{\RectAreaOf{i},\RectAreaOf{j}\}
    \le
    \OptRectOf{\JobSet}.
\end{equation*}
\end{lemma}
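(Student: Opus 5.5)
The plan is to prove the two bounds separately, each by mirroring the corresponding lower bound from \Cref{sec:preliminaries} in the rectangle strip model.

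\textbf{Processing bound.} This follows exactly as in \Cref{lem:prelim:processing-lower-bound}. Any successful attempt of job $i$ needs $\processingTime_i$ consecutive rounds, so in any feasible schedule $\C_i\ge\processingTime_i$. Summing over $i\in\JobSet$ gives the first inequality.

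\textbf{Area bound.} I will follow the argument of \Cref{lem:prelim:area-lower-bound}.
\begin{enumerate}
\item Fix an optimal clairvoyant rectangle schedule and order the jobs by completion time as $\sigma(1),\ldots,\sigma(n)$.
\item Each round supplies at most $\kvmem$ units of width. By the completion of $\sigma(r)$, the successful rectangles of $\sigma(1),\ldots,\sigma(r)$ have all been fully paid for, so $\C_{\sigma(r)}\ge \frac{1}{\kvmem}\sum_{q\le r}\RectAreaOf{\sigma(q)}$.
\item Summing over $r$ and using the rearrangement argument, the total is at least $\frac{1}{\kvmem}\sum_{r=1}^{n}(n-r+1)\RectAreaOf{\pi(r)}$, where $\pi$ sorts the rectangle areas in nondecreasing order.
\end{enumerate}

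\textbf{Rewriting in the $\min$ form.} For the sorted order,
\[
\sum_{i,j}\min\{\RectAreaOf{i},\RectAreaOf{j}\}=\sum_{r}\bigl(2(n-r)+1\bigr)\RectAreaOf{\pi(r)},
\]
with ties broken consistently by index. This holds because the pair $\{\pi(r),\pi(q)\}$ with $q>r$ contributes $\RectAreaOf{\pi(r)}$ twice (ordered pairs), and the diagonal term contributes it once. Since $2(n-r)+1\le 2(n-r+1)$, we get
\[
\frac{1}{2\kvmem}\sum_{i,j}\min\{\RectAreaOf{i},\RectAreaOf{j}\}\le \frac{1}{\kvmem}\sum_r (n-r+1)\RectAreaOf{\pi(r)}\le \OptRectOf{\JobSet}.
\]

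\textbf{Main obstacle.} The only point needing care is step 2 in this model. The optimum may contain killed or paused attempts, but these only consume additional width-time, so the per-round capacity argument still holds. The pair-counting identity is routine bookkeeping.
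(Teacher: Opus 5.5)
Your proposal is correct and matches the paper's proof: the processing bound is immediate, the area bound reuses the argument of \Cref{lem:prelim:area-lower-bound} with rectangle areas, and the conversion to the $\min$ form uses the same identity $\sum_{i,j}\min\{\RectAreaOf{i},\RectAreaOf{j}\}=\sum_{r}(2(n-r)+1)\RectAreaOf{\pi(r)}\le 2\sum_{r}(n-r+1)\RectAreaOf{\pi(r)}$. Your remark that killed or paused attempts only use additional width-time, so the per-round capacity argument is unaffected, is correct; the paper leaves this point implicit.
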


\begin{proof}
The processing lower bound is immediate.
For the area lower bound, order the jobs so that
\begin{equation*}
    \RectAreaOf{\pi(1)}
    \le
    \RectAreaOf{\pi(2)}
    \le
    \cdots
    \le
    \RectAreaOf{\pi(\setsize{\JobSet})}.
\end{equation*}
The same area-order lower-bound argument (\Cref{lem:prelim:area-lower-bound}) applied to rectangle areas gives
\begin{equation*}
    \OptRectOf{\JobSet}
    \ge
    \frac1\kvmem
    \sum_{r=1}^{\setsize{\JobSet}}
    \left(\setsize{\JobSet}-r+1\right)\RectAreaOf{\pi(r)}.
\end{equation*}
Also,
\begin{equation*}
    \sum_{i,j\in\JobSet}\min\{\RectAreaOf{i},\RectAreaOf{j}\}
    =
    \sum_{r=1}^{\setsize{\JobSet}}
    \left(2(\setsize{\JobSet}-r)+1\right)\RectAreaOf{\pi(r)}
    \le
    2
    \sum_{r=1}^{\setsize{\JobSet}}
    \left(\setsize{\JobSet}-r+1\right)\RectAreaOf{\pi(r)}.
\end{equation*}
The claim follows.
\end{proof}

\begin{lemma}[Width utilization]
\label{lem:rect-width-range}
Suppose every job satisfies
\begin{equation*}
    \WidthLowerBound
    \le
    \rectWidth_i
    \le
    \WidthUpperBound.
\end{equation*}
Then the rectangle instance has waiting utilization
\begin{equation*}
    U
    =
    \max\{\WidthLowerBound,\kvmem-\WidthUpperBound\}.
\end{equation*}
\end{lemma}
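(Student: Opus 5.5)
Fix any settled state $S_\ell$ with $\PendingQueue(S_\ell)\neq\varnothing$. We will show that the total width of the active attempts is at least $\WidthLowerBound$ and at least $\kvmem-\WidthUpperBound$; their maximum is then a lower bound as well. Both bounds come from the stopping rule of the greedy admission scan in \Cref{alg:rectangle-strip}.

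\emph{The bound $\kvmem-\WidthUpperBound$.} Since the queue is nonempty in a settled state, the scan stopped at the break. So the minimum-budget pending attempt $(i,\Index)$ did not fit in the remaining memory. Writing $A$ for the total active width, this means $A+\rectWidth_i>\kvmem$. Using $\rectWidth_i\le\WidthUpperBound$ gives $A>\kvmem-\WidthUpperBound$.

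\emph{The bound $\WidthLowerBound$.} We first show that at least one attempt is active. If none were active, the remaining memory would be the full budget $\kvmem$. Every rectangle satisfies $\rectWidth_i\le\kvmem$ by the problem definition, so the minimum pending attempt would fit and the scan would not have stopped. This contradicts settledness. Hence some attempt of width at least $\WidthLowerBound$ is active, and $A\ge\WidthLowerBound$.

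There is no real obstacle here; the argument is just the stopping condition of the scan. The one point needing care is that the scan must examine the settled state at every event, including expirations, so that no settled state with a nonempty queue is missed. The event-driven loop in \Cref{alg:rectangle-strip} runs the scan after every event, so this holds. Between events the active set does not change, so the bound extends to every waiting round, which is how \Cref{lem:rect-per-job} uses it.
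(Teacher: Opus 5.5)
Your proposal is correct and follows essentially the same argument as the paper. The failed fit of $\min\PendingQueue$, whose width is at most $\WidthUpperBound$, gives active memory greater than $\kvmem-\WidthUpperBound$. The active set must be nonempty, since otherwise that attempt would fit because $\rectWidth_i\le\kvmem$, which gives active memory at least $\WidthLowerBound$. Your remark that the active set is unchanged between events, so the bound holds in every waiting round, is a sound addition and matches how \Cref{lem:rect-per-job} uses the lemma.
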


\begin{proof}
Consider a settled state in which an attempt is waiting.
Let $m$ be the total active memory.
Let $(i,\Index)=\min \PendingQueue$.
Since $(i,\Index)$ does not fit and has width at most $\WidthUpperBound$, we have
\begin{equation*}
    m
    >
    \kvmem-\WidthUpperBound .
\end{equation*}
Also, the active set is nonempty; otherwise $(i,\Index)$ would fit because $\rectWidth_i\le\kvmem$.
Since every active attempt has width at least $\WidthLowerBound$, we also have
\begin{equation*}
    m
    \ge
    \WidthLowerBound .
\end{equation*}
Thus the instance has waiting utilization $\max\{\WidthLowerBound,\kvmem-\WidthUpperBound\}$.
\end{proof}

\begin{theorem}[Rectangle strip guarantee]
\label{thm:rectangle-strip}
Suppose every job satisfies
\begin{equation*}
    \WidthLowerBound
    \le
    \rectWidth_i
    \le
    \WidthUpperBound.
\end{equation*}
Then
\begin{equation*}
    \ALGRect(\JobSet)
    \le
    \left(
        4
        +
        \frac{8\kvmem}
        {\max\{\WidthLowerBound,\kvmem-\WidthUpperBound\}}
    \right)
    \OptRectOf{\JobSet}.
\end{equation*}
\end{theorem}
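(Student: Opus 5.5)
The plan is to combine the three lemmas just established, summing the per-job bound over all jobs. By \Cref{lem:rect-width-range}, the width assumption gives waiting utilization $U=\max\{\WidthLowerBound,\kvmem-\WidthUpperBound\}$ in every settled state with a nonempty pending queue. This is exactly the hypothesis of \Cref{lem:rect-per-job}, so for every job $i$,
\begin{equation*}
    \C_i^{\ALGRect}\le 4\processingTime_i+\frac{4}{U}\sum_{j\in\JobSet}\min\{\RectAreaOf{i},\RectAreaOf{j}\}.
\end{equation*}

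Summing over $i\in\JobSet$ splits $\ALGRect(\JobSet)$ into two terms. The first is $4\sum_i\processingTime_i$, which \Cref{lem:rect-lower-bounds} bounds by $4\,\OptRectOf{\JobSet}$. The second is
\begin{equation*}
    \frac{4}{U}\sum_{i,j\in\JobSet}\min\{\RectAreaOf{i},\RectAreaOf{j}\}.
\end{equation*}
By the second inequality of \Cref{lem:rect-lower-bounds}, the double sum is at most $2\kvmem\,\OptRectOf{\JobSet}$. So the second term is at most $\frac{8\kvmem}{U}\OptRectOf{\JobSet}$. Adding the two terms gives the stated factor $4+8\kvmem/\max\{\WidthLowerBound,\kvmem-\WidthUpperBound\}$.

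The only step needing care is making sure the utilization notion in \Cref{lem:rect-per-job} matches what \Cref{lem:rect-width-range} delivers. The per-job lemma charges every round in which an attempt of job $i$ is pending. Such rounds lie between events, so the state during them is a settled state with $\PendingQueue\neq\varnothing$, which is exactly where \Cref{lem:rect-width-range} applies.

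A second check is that the area lower bound holds in the rectangle model. Each successful attempt of job $i$ occupies width $\rectWidth_i$ for $\processingTime_i$ rounds, and the strip supplies at most $\kvmem$ units of width per round. This is already recorded in \Cref{lem:rect-lower-bounds}, so no further work is needed and the proof reduces to the summation above.
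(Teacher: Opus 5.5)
Your proposal is correct and matches the paper's proof: take $U=\max\{\WidthLowerBound,\kvmem-\WidthUpperBound\}$ from \Cref{lem:rect-width-range}, sum \Cref{lem:rect-per-job} over all jobs, and bound the two resulting terms with the two inequalities of \Cref{lem:rect-lower-bounds}. Your two extra checks are sound and are left implicit in the paper: pending rounds occur in settled states with a nonempty queue, and the area lower bound holds in the rectangle model.
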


\begin{proof}
Let $U=\max\{\WidthLowerBound,\kvmem-\WidthUpperBound\}$.
By \Cref{lem:rect-width-range}, the rectangle instance has waiting utilization $U$.
Summing \Cref{lem:rect-per-job} over all jobs gives
\begin{equation*}
    \ALGRect(\JobSet)
    \le
    4\sum_{i\in\JobSet}\processingTime_i
    +
    \frac4U
    \sum_{i,j\in\JobSet}\min\{\RectAreaOf{i},\RectAreaOf{j}\}.
\end{equation*}
The theorem follows from \Cref{lem:rect-lower-bounds}.
\end{proof}

\begin{remark}
For a general rectangle strip instance with no width range assumption, the same scheduler can be applied to two width classes and combined by a black-box doubling meta-scheduler defined in \Cref{sec:meta-known-partition}.
This gives a constant-competitive algorithm for the full rectangle strip problem; the complete statement and proof are in \Cref{sec:rectangle-strip-complete}.
\end{remark}

\subsection{Applying to Homogeneous KV-Cache Regimes}
\label{subsec:rectangle-branches}

The \LargeBranch and the \PromptBranch are both applications of \ALGRect to homogeneous KV-cache regimes.
When every job satisfies the same regime condition, the corresponding branch is an ordinary rectangle strip schedule with a direct completion-time guarantee.
They are defined as follows.
\begin{itemize}
    \item \textbf{\LargeBranchCap.} Use $\ALGRect$ with rectangle width $\rectWidth_i=\kvmem$ and processing time $\processingTime_i=\responseLen_i$.
    \item \textbf{\PromptBranchCap.} Use $\ALGRect$ with rectangle width $\rectWidth_i=2\initialLen_i$ and processing time $\processingTime_i=\responseLen_i$.
\end{itemize}

\begin{proposition}[Large homogeneous guarantee]
\label{prop:rectangle-large-guarantee}
If every job satisfies $\initialLen_i>\PromptCutoff$, then the \LargeBranch schedule has total completion time at most
\begin{equation*}
    \left(4+\frac{8\kvmem}{\PromptCutoff}\right)
    \OptOf{\JobSet}.
\end{equation*}
\end{proposition}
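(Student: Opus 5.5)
The plan is not to invoke \Cref{thm:rectangle-strip} directly, since that would only compare against $\OptRectOf{\JobSet}$. Instead, I would go back to the per-job bound of \Cref{lem:rect-per-job} and charge its waiting term to the KV-cache area lower bound of \Cref{lem:prelim:area-lower-bound}. The argument has three steps: check that the schedule is a valid KV-cache schedule, apply the per-job bound with $U=\kvmem$, and convert rectangle areas into KV-cache areas.

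\emph{Step 1: the \LargeBranch schedule is a feasible KV-cache schedule with the same completion times.} Every rectangle has width $\rectWidth_i=\kvmem$, so $\ALGRect$ runs at most one attempt at a time. An attempt of job $i$ in the KV model uses memory $\initialLen_i+\curProgress_{i,t}+1\le \initialLen_i+\responseLen_i\le\kvmem$, by individual feasibility. Killing an attempt at its cap discards its progress, which is exactly the restart semantics of the rectangle model. Hence the KV schedule coincides with the rectangle schedule, where $\processingTime_i=\responseLen_i$, and every job has the same completion time in both.

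\emph{Step 2: per-job bound.} All widths equal $\kvmem$, so \Cref{lem:rect-width-range} applies with $\WidthLowerBound=\WidthUpperBound=\kvmem$ and gives $U=\kvmem$. Summing \Cref{lem:rect-per-job} over all jobs yields
\begin{equation*}
    \ALGRect(\JobSet)\le 4\sum_{i}\responseLen_i+\frac{4}{\kvmem}\sum_{i,j}\min\{\kvmem\responseLen_i,\kvmem\responseLen_j\}.
\end{equation*}
By \Cref{lem:prelim:processing-lower-bound}, the first term is at most $4\,\OptOf{\JobSet}$.

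\emph{Step 3: converting rectangle area to KV area.} This is the only step that really uses the hypothesis $\initialLen_i>\PromptCutoff$. Since $\AreaOf{i}\ge \initialLen_i\responseLen_i>\PromptCutoff\responseLen_i$, we have $\kvmem\responseLen_i\le(\kvmem/\PromptCutoff)\AreaOf{i}$. Because $\min$ is monotone in each argument, it follows that
\begin{equation*}
    \min\{\kvmem\responseLen_i,\kvmem\responseLen_j\}\le(\kvmem/\PromptCutoff)\min\{\AreaOf{i},\AreaOf{j}\}.
\end{equation*}
Next I would reuse the counting identity from the proof of \Cref{lem:rect-lower-bounds}, now applied to the KV areas $\AreaOf{i}$. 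Together with \Cref{lem:prelim:area-lower-bound}, it gives
\begin{equation*}
    \sum_{i,j}\min\{\AreaOf{i},\AreaOf{j}\}\le 2\kvmem\,\OptOf{\JobSet}.
\end{equation*}
Chaining these, the waiting term is at most
\begin{equation*}
    \frac{4}{\kvmem}\cdot\frac{\kvmem}{\PromptCutoff}\cdot 2\kvmem\,\OptOf{\JobSet}=\frac{8\kvmem}{\PromptCutoff}\,\OptOf{\JobSet}.
\end{equation*}
Adding the $4\,\OptOf{\JobSet}$ from Step 2 gives the claimed constant $4+8\kvmem/\PromptCutoff$.

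I expect the main subtlety to be this area conversion. Comparing against $\OptRectOf{\JobSet}$ would be useless, because the width-$\kvmem$ abstraction overstates memory and so $\OptRectOf{\JobSet}$ can exceed $\OptOf{\JobSet}$. The bound has to be routed through the KV-area lower bound, and the hypothesis $\initialLen_i>\PromptCutoff$ is exactly what makes the loss only a factor of $\kvmem/\PromptCutoff$.
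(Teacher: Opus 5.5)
Your proposal is correct and follows essentially the same route as the paper: it bypasses the conclusion of \Cref{thm:rectangle-strip}, sums \Cref{lem:rect-per-job} with $U=\kvmem$ from \Cref{lem:rect-width-range}, converts $\kvmem\responseLen_i\le(\kvmem/\PromptCutoff)\AreaOf{i}$, and charges the two terms to the processing and KV-area lower bounds. Your Step 1 feasibility check and your remark on why $\OptRectOf{\JobSet}$ cannot serve as the benchmark spell out points the paper states in a single line.
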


\begin{proof}
By per-job feasibility, every rectangle completion is true completion in the original KV-cache instance.
Taking $\WidthLowerBound=\WidthUpperBound=\kvmem$ satisfies the width assumptions in \Cref{thm:rectangle-strip}.
Also, for every job $i$,
\begin{equation*}
    \RectAreaOf{i}
    =
    \kvmem\responseLen_i
    \le
    \frac{\kvmem}{\PromptCutoff}
    \initialLen_i\responseLen_i
    \le
    \frac{\kvmem}{\PromptCutoff}
    \AreaOf{i}.
\end{equation*}
As in the proof of \Cref{thm:rectangle-strip}, by \Cref{lem:rect-per-job} and \Cref{lem:rect-width-range},
\begin{equation*}
    \ALGRect(\JobSet)
    \le
    4\sum_{i\in\JobSet}\responseLen_i
    +
    \frac4M
    \sum_{i,j\in\JobSet}\min\{\RectAreaOf{i},\RectAreaOf{j}\}
    \le
    4\sum_{i\in\JobSet}\responseLen_i
    +
    \frac4M \frac{M}{\PromptCutoff}
    \sum_{i,j\in\JobSet}\min\{\AreaOf{i},\AreaOf{j}\}
\end{equation*}
As in the proof of \Cref{lem:rect-lower-bounds}, by the area-order lower bound argument in \Cref{lem:prelim:area-lower-bound},
\begin{equation*}
    \sum_{i\in\JobSet}\responseLen_i
    \le
    \OptOf{\JobSet},
\end{equation*}
and
\begin{equation*}
    \frac{1}{2\kvmem}
    \sum_{i,j\in\JobSet}\min\{\AreaOf{i},\AreaOf{j}\}
    \le
    \OptOf{\JobSet}.
\end{equation*}
Thus we conclude
\begin{equation*}
    \ALGRect(\JobSet)
    \le
    \left(4+\frac{8\kvmem}{\PromptCutoff}\right)
    \OptOf{\JobSet}.
\end{equation*}
\end{proof}

\begin{proposition}[\PromptHeavyCap homogeneous guarantee]
\label{prop:rectangle-prompt-heavy-guarantee}
If every job satisfies $\initialLen_i\le\PromptCutoff$ and $\responseLen_i\le\initialLen_i$, then the \PromptBranch schedule has total completion time at most
\begin{equation*}
    \left(4+\frac{16\kvmem}{\kvmem-2\PromptCutoff}\right)
    \OptOf{\JobSet}.
\end{equation*}
\end{proposition}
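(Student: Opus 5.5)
The proof follows the template of \Cref{prop:rectangle-large-guarantee}. Run $\ALGRect$ with widths $\rectWidth_i=2\initialLen_i$ and processing times $\processingTime_i=\responseLen_i$, and bound the result through \Cref{lem:rect-per-job} and \Cref{lem:rect-width-range}.

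The first step is to check that the rectangle schedule is a feasible KV-cache schedule and that rectangle completions are true completions. Since $\responseLen_i\le\initialLen_i$, every attempt of job $i$ is killed or finishes before its progress exceeds $\responseLen_i-1$. Its true memory use is therefore $\initialLen_i+\curProgress_{i,t}+1\le\initialLen_i+\responseLen_i\le 2\initialLen_i=\rectWidth_i$. Summing over active attempts keeps the total under $\kvmem$, and an attempt that reaches $\responseLen_i$ rounds completes the job. We also need $\rectWidth_i\le\kvmem$, which holds because $2\initialLen_i\le 2\PromptCutoff<\kvmem$.

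Next, take $\WidthUpperBound=2\PromptCutoff$, so \Cref{lem:rect-width-range} gives waiting utilization $U\ge\kvmem-2\PromptCutoff$. Summing \Cref{lem:rect-per-job} over jobs yields
\begin{equation*}
\ALGRect(\JobSet)\le 4\sum_i\responseLen_i+\frac{4}{\kvmem-2\PromptCutoff}\sum_{i,j}\min\{\RectAreaOf{i},\RectAreaOf{j}\}.
\end{equation*}

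The remaining step, and the only real point, is comparing rectangle area to true area. Here $\RectAreaOf{i}=2\initialLen_i\responseLen_i\le 2\AreaOf{i}$, since $\AreaOf{i}\ge\initialLen_i\responseLen_i$. Because $\min$ is monotone, $\sum_{i,j}\min\{\RectAreaOf{i},\RectAreaOf{j}\}\le 2\sum_{i,j}\min\{\AreaOf{i},\AreaOf{j}\}$. The latter sum is at most $4\kvmem\,\OptOf{\JobSet}$: reorder the double-min sum as $\sum_r(2(n-r)+1)\AreaOf{\pi(r)}$ exactly as in \Cref{lem:rect-lower-bounds}, then apply \Cref{lem:prelim:area-lower-bound}. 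This bounds the second term by $\frac{4}{\kvmem-2\PromptCutoff}\cdot 2\cdot 2\kvmem\,\OptOf{\JobSet}=\frac{16\kvmem}{\kvmem-2\PromptCutoff}\OptOf{\JobSet}$. The first term is at most $4\,\OptOf{\JobSet}$ by \Cref{lem:prelim:processing-lower-bound}. Adding the two gives the claimed constant.
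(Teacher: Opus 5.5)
Your proposal is correct and follows the paper's proof essentially step for step. Like the paper, you run \ALGRect{} with widths $2\initialLen_i$, take $\WidthUpperBound=2\PromptCutoff$ in the width-utilization lemma, use $\RectAreaOf{i}=2\initialLen_i\responseLen_i\le 2\AreaOf{i}$, and finish with the processing and area-order lower bounds. Your explicit checks that the true memory $\initialLen_i+\curProgress_{i,t}+1$ never exceeds the rectangle width $2\initialLen_i$, and that $2\initialLen_i\le 2\PromptCutoff<\kvmem$, spell out a feasibility point the paper only asserts.
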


\begin{proof}
Under the \PromptHeavy condition, every rectangle completion is true completion in the original KV-cache instance.
Every rectangle width is at most $2\PromptCutoff$.
Taking $\WidthLowerBound=0$ and $\WidthUpperBound=2\PromptCutoff$ satisfies the width assumptions in \Cref{thm:rectangle-strip}.
Moreover,
\begin{equation*}
    \RectAreaOf{i}
    =
    2\initialLen_i\responseLen_i
    \le
    2\AreaOf{i}.
\end{equation*}
The same argument as Large branch (\Cref{prop:rectangle-large-guarantee}) gives
\begin{align*}
    \ALGRect(\JobSet)
    &\le 4\sum_{i\in\JobSet}\responseLen_i
      + \frac{4}{M-2\PromptCutoff}
        \sum_{i,j\in\JobSet}\min\{\RectAreaOf{i},\RectAreaOf{j}\} \\
    &\le 4\sum_{i\in\JobSet}\responseLen_i
      + \frac{8}{M-2\PromptCutoff}
        \sum_{i,j\in\JobSet}\min\{\AreaOf{i},\AreaOf{j}\} \\
    &\le \left(4+\frac{16\kvmem}{\kvmem-2\PromptCutoff}\right)
        \OptOf{\JobSet}.
\end{align*}
\end{proof}

\begin{remark}
For $\PromptCutoff=\kvmem/4$, the constants in \Cref{prop:rectangle-large-guarantee,prop:rectangle-prompt-heavy-guarantee} are both equal to $36$.
\end{remark}
 \section{Area-Geometric Slicing for \ResponseHeavyCap Jobs}
\label{sec:response-heavy}

\ResponseHeavyCap jobs create a different obstacle from the rectangle strip setting of \Cref{sec:rectangle-strip}.
Their memory footprint can grow far beyond the prompt length, so the fixed-width queue-monotonicity argument from \Cref{sec:rectangle-strip} no longer applies.

This section uses a structural fact absent from the \PromptHeavy regime: response length determines area up to constant factors.
Every job $i$ considered in this section satisfies
\begin{align*}
    \initialLen_i<\responseLen_i,
    \qquad
    \initialLen_i+\responseLen_i\le\kvmem .
\end{align*}
Under this condition, memory-time area and the squared response length are equivalent up to constants:
\begin{align*}
    \frac{\responseLen_i^2}{2}
    \le
    \AreaOf{i}
    \le
    2\responseLen_i^2 .
\end{align*}
Thus ordering jobs by response caps also orders their area scales within constant factors.
This equivalence permits each phase to replace the true prompts with a common prompt upper bound with only a constant-factor loss.

The scheduler $\ALGResp$ uses geometric slicing to exploit this equivalence.
The geometric slicing approach of~\cite{FYZ26} was developed for the identical-prompt setting.
Although prompts may differ across jobs here, the area equivalence above allows each phase of $\ALGResp$ to impose a common response cap and replace heterogeneous prompts with a common proxy prompt.
The resulting proxy jobs are scheduled by the $\SPS$ subroutine described below.
This construction realizes the two sufficient conditions from \Cref{sec:regimes}: $\SPS$ provides high memory utilization within each phase, and the geometric response caps ensure that the completion order is consistent with nearly nondecreasing area.

\xhdr{The SPS subroutine.}
The Staggered Pipeline Schedule ($\SPS$) is a deterministic subroutine from \citet{FYZ26} for scheduling an ordered batch of identical jobs.
An $\SPS$ instance consists of $m\ge1$ identical jobs, each with common prompt length $\initialLen\ge0$ and slice length $\sliceLen\ge1$.
Given a degree of parallelism $\paral\ge1$, the schedule assigns the job of rank $j\in\{0,\ldots,m-1\}$ the start offset
\begin{equation*}
    \StartTimeOf{j}
    =
    \floor{\frac{j\sliceLen}{\paral}},
\end{equation*}
and runs that job for $\sliceLen$ rounds.
Consecutive jobs are therefore spaced roughly $\sliceLen/\paral$ rounds apart, so at most $\paral$ jobs overlap at any time.
The key property is efficiency: by staggering the start times, $\SPS$ can choose a nearly optimal parallelism.

\begin{fact}[SPS properties from \citet{FYZ26}]
\label{fact:rh:sps}
For a degree of parallelism $\paral$, slice length $\sliceLen$, and common prompt length $\initialLen$, the peak memory of the infinite $\SPS$ schedule is
\begin{align*}
    \MemPeak(\paral,\sliceLen,\initialLen)
    \triangleq
    \initialLen\paral
    +
    \frac{\sliceLen\paral+\sliceLen+\paral-\gcd(\sliceLen,\paral)}{2}.
\end{align*}
Hence, if $\MemPeak(\paral,\sliceLen,\initialLen)\le\kvmem$, then every finite $\SPS$ prefix is feasible.
In particular, whenever $\sliceLen+\initialLen\le\kvmem$, the maximum feasible degree of parallelism satisfies
\begin{equation*}
    \paral^*(\sliceLen,\initialLen)
    \triangleq
    \max\{\paral\in\naturals^{+}:\MemPeak(\paral,\sliceLen,\initialLen)\le\kvmem\}
    \ge
    \left\lfloor
        \frac{2\kvmem-\sliceLen+1}{2\initialLen+\sliceLen+1}
    \right\rfloor
    \ge 1.
\end{equation*}
\end{fact}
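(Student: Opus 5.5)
The plan is to compute the memory profile of the infinite $\SPS$ schedule exactly, then derive the lower bound on $\paral^*$ by elementary algebra.

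\textbf{Step 1: exactly $\paral$ jobs are active in steady state.} Since $\StartTimeOf{j+\paral}=\floor{(j\paral+\paral\sliceLen)/\paral}\cdot$ simplifies to $\StartTimeOf{j}+\sliceLen$, the job of rank $j+\paral$ starts exactly when the job of rank $j$ finishes its $\sliceLen$ rounds. Hence, for each residue class of ranks modulo $\paral$, at most one job is active at any round $t$, and exactly one is active once that class has started. So at most $\paral$ jobs are active at any time. In round $t$, an active job $j$ uses memory $\initialLen+(t-\StartTimeOf{j})+1$.

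\textbf{Step 2: locate the peak.} Between consecutive distinct start times, every active job grows by one per round. At a start time, a job of footprint $\initialLen+\sliceLen$ is replaced by a new job of footprint $\initialLen+1$, so memory drops. Therefore the peak is attained in the round just before some start time. Concretely, take $t=\StartTimeOf{j_0}+\sliceLen-1$ with active window $j_0,\dots,j_0+\paral-1$; this window is full, so prefixes with fewer active jobs are dominated. The memory in this round is
\[
\initialLen\paral+\paral(\StartTimeOf{j_0}+\sliceLen)-\sum_{i=0}^{\paral-1}\StartTimeOf{j_0+i}.
\]

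\textbf{Step 3: bound the window sum.} Superadditivity of floors gives $\StartTimeOf{j_0+i}\ge \StartTimeOf{j_0}+\floor{i\sliceLen/\paral}$, with equality at $j_0=0$. So the memory is at most
\[
\initialLen\paral+\paral\sliceLen-\sum_{i=0}^{\paral-1}\floor{i\sliceLen/\paral},
\]
and this value is attained. The classical identity
\[
\sum_{i=0}^{\paral-1}\floor{i\sliceLen/\paral}=\frac{(\sliceLen-1)(\paral-1)+\gcd(\sliceLen,\paral)-1}{2}
\]
(proved by pairing $i$ with $\paral-i$, or by counting lattice points under a line) then yields exactly $\MemPeak(\paral,\sliceLen,\initialLen)$. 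I expect the main care to be needed here and in Step 2, namely handling coincident start times when $\paral>\sliceLen$ and checking that the maximizing round is the one just before a start. Feasibility of every finite prefix follows because a prefix's active set at each round is a subset of the infinite schedule's active set.

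\textbf{Step 4: lower bound on $\paral^*$.} Use $\gcd(\sliceLen,\paral)\ge1$ to get
\[
\MemPeak\le \initialLen\paral+\frac{\sliceLen\paral+\sliceLen+\paral-1}{2}.
\]
This is at most $\kvmem$ iff $\paral(2\initialLen+\sliceLen+1)\le 2\kvmem-\sliceLen+1$. Hence the stated floor is a feasible degree of parallelism, provided it is positive. Finally, $\sliceLen+\initialLen\le\kvmem$ gives $2\kvmem-\sliceLen+1\ge 2\initialLen+\sliceLen+1$, so the floor is at least $1$.
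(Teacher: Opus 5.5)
The paper does not prove this Fact: the peak formula is imported from \citet{FYZ26}, and the bound on $\paral^*$ is the routine algebra of your Step~4. Your proposal is therefore a self-contained derivation rather than a reconstruction, and it is correct in substance. The following all check out:
the tiling identity $\StartTimeOf{j+\paral}=\StartTimeOf{j}+\sliceLen$ (your display has a typo; it should be $\floor{(j\sliceLen+\paral\sliceLen)/\paral}$);
the superadditivity bound, with equality at $j_0=0$, realized in round $\sliceLen-1$, whose active set is exactly ranks $0,\dots,\paral-1$ because $\StartTimeOf{\paral-1}\le\sliceLen-1<\StartTimeOf{\paral}$;
the floor-sum identity;
and Step~4.
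What your route adds over the citation is an explicit check that the formula uses this paper's memory convention: a job uses $\initialLen+1$ in its first round and $\initialLen+\sliceLen$ in its last. \Cref{lem:rh:phase-feasible} relies on exactly this.

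The one step to repair is the justification in Step~2. It is not true that memory drops at every start time. If $d$ jobs start in round $t+1$ while $\paral$ jobs are active, memory changes by $\paral-d\sliceLen$. This is positive when $\paral>\sliceLen$, $\sliceLen\nmid\paral$, and $d=\floor{\paral/\sliceLen}$. For instance, with $\paral=3$ and $\sliceLen=2$ the start times are $0,0,1,2,2,3,4,4,\dots$, and the memory in rounds $1,2,3$ is $3\initialLen+5$, $3\initialLen+4$, $3\initialLen+5$, so it rises into the start round $3$.

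What you actually need is the other direction. If no job starts in round $t+1$, then no job finishes at the end of round $t$, since such a finish means $\StartTimeOf{j+\paral}=t+1$. So the active set is unchanged and memory strictly increases. Hence every round is dominated by a later round $t$ with a start at $t+1$. Take $j$ minimal with $\StartTimeOf{j}=t+1$. When $j\ge\paral$, the active set in round $t$ is exactly ranks $j-\paral,\dots,j-1$, which is your window with $j_0=j-\paral$. When $j<\paral$, round $t$ precedes round $\sliceLen-1$ and is dominated by it, as you indicate. With this replacement the proof is complete.
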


The \ALGResp scheduler (\Cref{alg:response-heavy}) uses geometrically increasing response caps.
Fix a scaling factor $\responseScalar>1$.
Let $\LastRespPhase\triangleq\ceil{\log_{\responseScalar}\kvmem}$.
In phase $\Index$, the response cap is
\begin{equation*}
    \sliceLen_{\Index}
    =
    \min\{\ceil{\responseScalar^{\Index}},\kvmem\}.
\end{equation*}
For each phase $\Index$, define the proxy prompt
\begin{equation*}
    \ProxyPrompt_{\Index}
    =
    \min\{\sliceLen_{\Index},\kvmem-\sliceLen_{\Index}\}
\end{equation*}
and choose the maximum feasible proxy parallelism $\paral_{\Index} = \paral^*(\sliceLen_{\Index}, \ProxyPrompt_{\Index})$.
The \SPS call uses the induced order on its current job set from a deterministic order on the input job set; a job that reaches the cap without completing is killed.

\begin{algorithm2e}
\caption{Response-Geometric SPS ($\ALGResp$)}
\label{alg:response-heavy}
\KwIn{Job set $\JobSet$, known prompt lengths $\{\initialLen_i\}$, memory $\kvmem$, scaling factor $\responseScalar>1$.}
$\UnfinishedJobs\leftarrow\JobSet$\;
\For{$\Index=0,1,\ldots,\LastRespPhase$}{
    Set $\sliceLen_{\Index}\leftarrow\min\{\ceil{\responseScalar^{\Index}},\kvmem\}$\;
    Set $\ProxyPrompt_{\Index}\leftarrow\min\{\sliceLen_{\Index},\kvmem-\sliceLen_{\Index}\}$\;
    Set $\paral_{\Index}\leftarrow \paral^*(\sliceLen_{\Index}, \ProxyPrompt_{\Index})$\;
    $\remainJobs_{\Index}\leftarrow\{i\in\UnfinishedJobs:\initialLen_i\le\sliceLen_{\Index}\}$\;
    Run the $\SPS$ \citep{FYZ26} for prompt $\ProxyPrompt_{\Index}$, slice $\sliceLen_{\Index}$, and parallelism $\paral_{\Index}$ on $\remainJobs_{\Index}$\;
    \tcp{When the proxy slot of job $i$ starts, run the real job $i$.}
    \tcp{Kill job $i$ if it does not complete after $\sliceLen_{\Index}$ rounds.}
    Remove completed jobs from $\UnfinishedJobs$\;
}
\end{algorithm2e}

\begin{lemma}[Feasibility of \ALGResp]
\label{lem:rh:phase-feasible}
Every phase of \Cref{alg:response-heavy} is feasible.
\end{lemma}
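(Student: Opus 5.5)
The plan is a pointwise domination argument: within phase $\Index$, every real job occupies no more memory than the proxy job whose $\SPS$ slot it fills. The peak bound of \Cref{fact:rh:sps} then transfers to the real schedule.

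First, I will confirm that the proxy instance is valid for \Cref{fact:rh:sps}. By construction $\ProxyPrompt_{\Index}=\min\{\sliceLen_{\Index},\kvmem-\sliceLen_{\Index}\}\le \kvmem-\sliceLen_{\Index}$. Hence $\sliceLen_{\Index}+\ProxyPrompt_{\Index}\le\kvmem$, so $\paral^*(\sliceLen_{\Index},\ProxyPrompt_{\Index})\ge1$ is well defined. The choice $\paral_{\Index}=\paral^*$ gives $\MemPeak(\paral_{\Index},\sliceLen_{\Index},\ProxyPrompt_{\Index})\le\kvmem$. By \Cref{fact:rh:sps}, every finite prefix of this $\SPS$ schedule is then feasible for the proxy jobs. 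In particular, this holds for the schedule on the $|\remainJobs_{\Index}|$ proxy jobs, since running fewer jobs only removes memory.

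Second, I will compare each real job with its proxy slot round by round. Suppose job $i$ occupies the slot starting at $\StartTimeOf{j}$. At every round $t$ of that slot, the job has progress $\curProgress=t-\StartTimeOf{j}$ equal to the proxy's progress. Its memory is $\initialLen_i+\curProgress+1$, while the proxy's is $\ProxyPrompt_{\Index}+\curProgress+1$. If the job completes early, or is killed at the cap, it uses no memory afterwards, which is still at most the proxy's usage. The cap guarantees that the job never runs beyond the slot length $\sliceLen_{\Index}$.

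So it suffices to show $\initialLen_i\le\ProxyPrompt_{\Index}$ for every $i\in\remainJobs_{\Index}$, and I expect this to be the main obstacle. Eligibility gives $\initialLen_i\le\sliceLen_{\Index}$. The other half, $\initialLen_i\le\kvmem-\sliceLen_{\Index}$, needs extra input, because individual feasibility only bounds $\initialLen_i\le\kvmem-\responseLen_i$. I would handle this in two ways:
\begin{enumerate}[label=(\alph*)]
\item Invoke the standing assumption of this section that jobs are response-heavy, $\initialLen_i<\responseLen_i$. Then $\initialLen_i\le\kvmem-\responseLen_i$ whenever $\responseLen_i\ge\sliceLen_{\Index}$. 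If instead $\responseLen_i<\sliceLen_{\Index}$, the job finishes within the phase, but its prompt could exceed $\kvmem-\sliceLen_{\Index}$.
\item Give a cleaner argument for that leftover case. The job occupies at most $\initialLen_i+\responseLen_i\le\kvmem$ at its last active round. I would compare this against the proxy's memory at the same progress level, but that comparison fails in general.
\end{enumerate}
So what I would try first is a sharper reading. A job $i\in\remainJobs_{\Index}$ with $\initialLen_i>\kvmem-\sliceLen_{\Index}$ has $\initialLen_i>\ProxyPrompt_{\Index}$ only when $\sliceLen_{\Index}>\kvmem/2$. In that regime $\ProxyPrompt_{\Index}=\kvmem-\sliceLen_{\Index}$, and I expect $\paral_{\Index}=1$ from the peak formula. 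For $\paral=1$ the peak is $\initialLen+\sliceLen$, so single-job execution is feasible: the job runs alone and uses at most $\initialLen_i+\responseLen_i\le\kvmem$ at all times.

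Finally, with $\paral_{\Index}=1$ checked in that regime and the pointwise domination established otherwise, I will sum over active slots. This gives real memory at most the proxy memory at most $\MemPeak\le\kvmem$ in every round, which proves feasibility.
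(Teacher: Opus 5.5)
Your final argument is correct and is essentially the paper's proof. When $\tau_r\le M/2$, the proxy prompt equals $\tau_r\ge s_i$, so each real job is pointwise dominated by its proxy slot and the SPS peak bound applies. When $\tau_r>M/2$, the peak formula forces $k_r=1$: indeed $\mathrm{Peak}(k,\tau_r,M-\tau_r)\ge kM-(k-1)\tau_r/2\ge (k+1)M/2>M$ for all $k\ge 2$, which is the computation you still need to carry out. Jobs then run one at a time, and individual feasibility $s_i+o_i\le M$ suffices. Your detours (a) and (b) can be dropped, since the response-heavy assumption is not needed for feasibility.
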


\begin{proof}
Fix phase $\Index$ and write $\sliceLen=\sliceLen_{\Index}$ and $\paral=\paral_{\Index}$.
Every scheduled job satisfies $\initialLen_i\le\sliceLen$.

If $\sliceLen>\kvmem/2$, we have
\begin{equation*}
    \MemPeak(2,\sliceLen,\kvmem-\sliceLen)
    =
    2(\kvmem-\sliceLen)
    +
    \frac{2\sliceLen+\sliceLen+2-\gcd(\sliceLen,2)}2
    \ge
    2\kvmem-\sliceLen+1
    > \kvmem,
\end{equation*}
hence $\paral=1$; each job runs sequentially, and per-job feasibility $\initialLen_i+\responseLen_i\le\kvmem$ bounds the memory in every round.

If $\sliceLen\le\kvmem/2$, then $\ProxyPrompt_{\Index}=\sliceLen\ge\initialLen_i$, so the real memory $\initialLen_i+u+1$ at progress $u$ does not exceed the proxy memory $\ProxyPrompt_{\Index}+u+1$; the proxy $\SPS$ schedule is feasible by \Cref{fact:rh:sps} and the choice of $\paral$.
\end{proof}

\begin{lemma}[Completion of \ALGResp]
\label{lem:rh:completion-cap}
Let $\Index$ be the first phase with $\sliceLen_{\Index}\ge\responseLen_i$.
Job $i$ completes in phase $\Index$.
\end{lemma}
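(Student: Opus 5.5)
We need to show: at phase $\Index$ (the first with $\sliceLen_{\Index}\ge\responseLen_i$), job $i$ is still unfinished, is eligible, actually gets started, and completes within its slot. The argument has three steps.

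\textbf{Step 1: the phase exists.} Per-job feasibility gives $\responseLen_i\le\kvmem-\initialLen_i<\kvmem$. Since $\LastRespPhase=\ceil{\log_{\responseScalar}\kvmem}$, we have $\sliceLen_{\LastRespPhase}=\kvmem\ge\responseLen_i$. So the first phase $\Index$ with $\sliceLen_{\Index}\ge\responseLen_i$ exists and satisfies $\Index\le\LastRespPhase$.

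\textbf{Step 2: job $i$ is in $\remainJobs_{\Index}$.*} In this section every job satisfies $\initialLen_i<\responseLen_i\le\sliceLen_{\Index}$, so job $i$ passes the eligibility filter $\initialLen_i\le\sliceLen_{\Index}$. It remains to check that job $i$ has not already completed. In any earlier phase $\Index'<\Index$ we have $\sliceLen_{\Index'}<\responseLen_i$ by the minimality of $\Index$. Every attempt is killed after $\sliceLen_{\Index'}$ rounds, and progress is lost, so no earlier attempt can accumulate $\responseLen_i$ consecutive tokens. Hence $i\in\UnfinishedJobs$ at the start of phase $\Index$, and therefore $i\in\remainJobs_{\Index}$.

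\textbf{Step 3: the slot runs to completion.} $\SPS$ is run on the finite set $\remainJobs_{\Index}$ and assigns every job a start offset $\StartTimeOf{j}$ together with a slot of $\sliceLen_{\Index}$ consecutive rounds. Job $i$ therefore receives an uninterrupted run of $\sliceLen_{\Index}\ge\responseLen_i$ rounds. By \Cref{lem:rh:phase-feasible} this run is memory-feasible: in the case $\sliceLen_{\Index}>\kvmem/2$ the schedule is sequential, and in the case $\sliceLen_{\Index}\le\kvmem/2$ the real memory is dominated by the proxy memory. So the job is never killed for memory reasons. It completes after exactly $\responseLen_i$ rounds of its slot, before the cap-kill would be triggered.

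The main point needing care is that the algorithm only kills at the cap or at the end of the slot. This holds because the $\SPS$ schedule is fixed in advance and feasibility comes from the peak-memory bound in \Cref{fact:rh:sps}, not from any dynamic eviction.
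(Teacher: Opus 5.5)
Your proof is correct and follows the same route as the paper: earlier caps $\sliceLen_{\OtherIndex}<\responseLen_i$ rule out completion, response dominance $\initialLen_i<\responseLen_i\le\sliceLen_{\Index}$ makes job $i$ eligible, and its uninterrupted slot of length $\sliceLen_{\Index}\ge\responseLen_i$ lets it finish. Two things you make explicit are left implicit in the paper: that $\sliceLen_{\LastRespPhase}=\kvmem$, so the phase exists, and the appeal to \Cref{lem:rh:phase-feasible} for memory feasibility of that slot.
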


\begin{proof}
For any phase $\OtherIndex<\Index$, the cap satisfies $\sliceLen_{\OtherIndex}<\responseLen_i$, so job $i$ receives fewer than $\responseLen_i$ rounds and cannot complete.
In phase $\Index$, response dominance gives $\initialLen_i<\responseLen_i\le\sliceLen_{\Index}$, so job $i$ is eligible.
The phase runs job $i$ for up to $\sliceLen_{\Index}$ rounds.
Since $\sliceLen_{\Index}\ge\responseLen_i$, the job completes in that phase.
\end{proof}

The following theorem gives the \ResponseHeavy guarantee used by the meta-scheduler.

\begin{restatable}{theorem}{thmResponseHeavyRatio}
\label{thm:response-heavy-ratio}
For every fixed $\responseScalar>1$, the scheduler $\ALGResp$ satisfies
\begin{equation*}
    \ALGResp(\JobSet)
    =
    O_{\responseScalar}(1)\,\OptOf{\JobSet}
\end{equation*}
on every \ResponseHeavy instance $\JobSet$.
In particular, for $\responseScalar=2$,
\begin{equation*}
    \ALGResp(\JobSet)
    \le
    \frac{236}{3}\,\OptOf{\JobSet} \approx  
    78.67\,\OptOf{\JobSet}.
\end{equation*}
\end{restatable}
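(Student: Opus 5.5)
We bound each job's completion time by the total duration of the phases up to and including its completion phase, and then charge these phase durations against the area-order lower bound (\Cref{lem:prelim:area-lower-bound}) and the processing lower bound (\Cref{lem:prelim:processing-lower-bound}). For job $i$, let $\Index(i)$ be the first phase with $\sliceLen_{\Index}\ge\responseLen_i$. By \Cref{lem:rh:completion-cap}, job $i$ completes in phase $\Index(i)$, so $\C_i^{\ALGResp}\le\sum_{\OtherIndex\le\Index(i)}D_{\OtherIndex}$, where $D_{\OtherIndex}$ is the length of phase $\OtherIndex$. Since the SPS staggering spaces starts by about $\sliceLen_{\OtherIndex}/\paral_{\OtherIndex}$ rounds, we have
\begin{equation*}
    D_{\OtherIndex}\le \frac{\setsize{\remainJobs_{\OtherIndex}}\,\sliceLen_{\OtherIndex}}{\paral_{\OtherIndex}}+\sliceLen_{\OtherIndex}.
\end{equation*}
The additive $\sliceLen_{\OtherIndex}$ terms sum geometrically over $\OtherIndex\le\Index(i)$ to $O_{\responseScalar}(\sliceLen_{\Index(i)})=O_{\responseScalar}(\responseLen_i)$. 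Summed over jobs, these terms are charged to $\sum_i\responseLen_i\le\OptOf{\JobSet}$.

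The main term is the pipelined one. First, we use \Cref{fact:rh:sps} to show that SPS is efficient within each phase, i.e.\ $\sliceLen_{\OtherIndex}/\paral_{\OtherIndex}=O(\sliceLen_{\OtherIndex}^2/\kvmem)$. When $\sliceLen\le\kvmem/2$, the proxy prompt equals $\sliceLen$, and $\paral^*\ge\lfloor(2\kvmem-\sliceLen+1)/(3\sliceLen+1)\rfloor$, which is $\Omega(\kvmem/\sliceLen)$. When $\sliceLen>\kvmem/2$, we have $\paral=1$ and $\sliceLen\le 2\sliceLen^2/\kvmem$ directly. Hence the pipelined part of phase $\OtherIndex$ costs $O(\setsize{\remainJobs_{\OtherIndex}}\sliceLen_{\OtherIndex}^2/\kvmem)$. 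Second, we exchange sums:
\begin{equation*}
    \sum_i\sum_{\OtherIndex\le\Index(i)}\frac{\setsize{\remainJobs_{\OtherIndex}}\,\sliceLen_{\OtherIndex}^2}{\kvmem}
    =\frac1\kvmem\sum_{\OtherIndex}\setsize{\{i:\Index(i)\ge\OtherIndex\}}\cdot\setsize{\remainJobs_{\OtherIndex}}\cdot\sliceLen_{\OtherIndex}^2 .
\end{equation*}
Each job $j\in\remainJobs_{\OtherIndex}$ is unfinished at phase $\OtherIndex$, so $\Index(j)\ge\OtherIndex$, and it is eligible only if $\initialLen_j\le\sliceLen_{\OtherIndex}$. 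The sum is therefore at most
\begin{equation*}
    \frac1\kvmem\sum_{i,j}\sum_{\OtherIndex\le\min\{\Index(i),\Index(j)\}}\sliceLen_{\OtherIndex}^2
    =O_{\responseScalar}\Bigl(\frac1\kvmem\sum_{i,j}\min\{\sliceLen_{\Index(i)},\sliceLen_{\Index(j)}\}^2\Bigr),
\end{equation*}
using geometric growth of $\sliceLen_{\OtherIndex}^2$. Since $\sliceLen_{\Index(i)}<\responseScalar\responseLen_i+1\le 2\responseScalar\responseLen_i$ and $\responseLen_i^2\le2\AreaOf{i}$ for response-heavy jobs, the bracket is $O_{\responseScalar}\bigl(\sum_{i,j}\min\{\AreaOf{i},\AreaOf{j}\}/\kvmem\bigr)$. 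As in the proof of \Cref{lem:rect-lower-bounds}, this quantity is at most $2\OptOf{\JobSet}$ by \Cref{lem:prelim:area-lower-bound}. This yields $O_{\responseScalar}(1)$.

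The main obstacle is extracting the explicit constant $236/3$ for $\responseScalar=2$. Doing so requires careful handling of three things: the floors in $\paral^*$ (small $\kvmem/\sliceLen$ ratios, where $\lfloor\cdot\rfloor$ loses a factor close to $2$), the ceilings in $\sliceLen_{\OtherIndex}$, and the cutoff at $\sliceLen=\kvmem/2$. I would first prove a clean inequality $\sliceLen/\paral\le c\,\sliceLen^2/\kvmem+c'\sliceLen$, absorbing the lower-order part into the processing bound. I would then optimize the resulting geometric sums $\sum_{\OtherIndex\le\Index}4^{\OtherIndex}\le\frac43 4^{\Index}$, which is where the factor $3$ in the denominator should come from. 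I would also use the more precise form $\sum_{i,j}\min=\sum_r(2(n-r)+1)A_{\pi(r)}$, so that the diagonal terms are charged to the processing bound rather than doubled.
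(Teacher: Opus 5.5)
Your argument for the $O_{\responseScalar}(1)$ claim is correct and is essentially the paper's proof. It uses the same ingredients: the phase-length bound $D_{\OtherIndex}\le\sliceLen_{\OtherIndex}+\sliceLen_{\OtherIndex}\setsize{\remainJobs_{\OtherIndex}}/\paral_{\OtherIndex}$; the exchange of sums (the paper writes $\ALGResp(\JobSet)\le\sum_{\Index}D_{\Index}\NumJobsGe{\Index}$ with $\setsize{\remainJobs_{\Index}}\le\NumJobsGe{\Index}$); the linear part charged to $\sum_i\responseLen_i$; and the quadratic part charged pairwise to the smaller response class via $\AreaOf{i}\ge\responseLen_i^2/2$. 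Charging directly to $\min\{\AreaOf{i},\AreaOf{j}\}$ instead of through the paper's class quantity $\ClassAreaLB$ is only a change of bookkeeping.

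The gap is the explicit constant. You do not prove $236/3$, and the route you sketch for it has a step that fails. An additive term in $\sliceLen/\paral\le c\,\sliceLen^2/\kvmem+c'\sliceLen$ cannot be absorbed into the processing bound. The spacing is multiplied by $\setsize{\remainJobs_{\OtherIndex}}$ and then paid by every job still waiting, which produces $\sum_{\OtherIndex}c'\sliceLen_{\OtherIndex}\NumJobsGe{\OtherIndex}^2$. For $N$ jobs with $\initialLen_i=1$, $\responseLen_i=2$ and $\kvmem\ge 3N$, this is $\Theta(N^2)$, while $\OptOf{\JobSet}=2N$.

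The missing ingredient is the clean efficiency bound $\sliceLen_{\Index}/\paral_{\Index}\le\frac72\sliceLen_{\Index}^2/\kvmem$, with no additive term. The paper gets it from the maximality of $\paral_{\Index}$, in three cases:
\begin{itemize}
\item If $\sliceLen>\kvmem/2$, then $\paral=1$ and $\kvmem<2\sliceLen$.
\item If $\sliceLen\le\kvmem/2$ and $\paral=1$, infeasibility of $\paral=2$ gives $\kvmem<\MemPeak(2,\sliceLen,\sliceLen)$, hence $\kvmem\le 7\sliceLen/2$ after a parity check.
\item If $\paral\ge 2$, then $\kvmem<\MemPeak(\paral+1,\sliceLen,\sliceLen)\le(3\paral\sliceLen+4\sliceLen+\paral)/2\le 7\paral\sliceLen/2$.
\end{itemize}
Your floor bound also works if you treat it exactly rather than via $\lfloor x\rfloor\ge x/2$. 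If $\lfloor x\rfloor=m\ge 1$, then $2\kvmem<(3m+4)\sliceLen+m$, so $\kvmem\le\frac72 m\sliceLen$. The floor costs nothing, contrary to your diagnosis.

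With $c=7/2$, your own chain gives exactly $236/3$ for $\responseScalar=2$, with no diagonal refinement needed:
\begin{itemize}
\item Linear part: $\sum_{\OtherIndex\le\Index(i)}\sliceLen_{\OtherIndex}<2^{\Index(i)+1}\le 4\responseLen_i$.
\item Quadratic part: for a pair with $m=\min\{\Index(i),\Index(j)\}\ge 1$, we have $2^{m-1}=\sliceLen_{m-1}<\min\{\responseLen_i,\responseLen_j\}$. Hence $\sum_{\OtherIndex\le m}\sliceLen_{\OtherIndex}^2<\frac43 4^m<\frac{16}{3}\min\{\responseLen_i,\responseLen_j\}^2\le\frac{32}{3}\min\{\AreaOf{i},\AreaOf{j}\}$. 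The case $m=0$ is trivial.
\item Area bound: $\frac1\kvmem\sum_{i,j}\min\{\AreaOf{i},\AreaOf{j}\}\le 2\,\OptOf{\JobSet}$.
\end{itemize}
The total is $4+\frac72\cdot\frac{32}{3}\cdot 2=\frac{236}{3}$.
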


\xhdr{Proof overview.}
Let $\sliceLen_{-1}=0$, and let $\JobSet_{\Index}=\{i\in\JobSet:\sliceLen_{\Index-1}<\responseLen_i\le\sliceLen_{\Index}\}$ and $\NumJobsClass_{\Index}=\setsize{\JobSet_{\Index}}$ for $\Index=0,1,\ldots,\LastRespPhase$.
Write $\NumJobsGe{\Index}\triangleq\sum_{\OtherIndex=\Index}^{\LastRespPhase}\NumJobsClass_{\OtherIndex}$ for the number of jobs in classes $\Index,\Index+1,\ldots,\LastRespPhase$.
By \Cref{lem:rh:completion-cap}, every job in $\JobSet_{\Index}$ completes in phase $\Index$.
Let $\PhaseDuration_{\Index}$ denote the duration of phase $\Index$.
By the start-offset rule of $\SPS$, the last start time in phase $\Index$ is less than
$\sliceLen_{\Index}\NumJobsGe{\Index}/\paral_{\Index}$.
Thus
\begin{equation*}
    \PhaseDuration_{\Index}
    \le
    \sliceLen_{\Index}
    +
    \frac{\sliceLen_{\Index}}{\paral_{\Index}}\NumJobsGe{\Index}.
\end{equation*}
The lower bound on $\paral^*$ implies $\paral_{\Index}=\Omega(\kvmem/\sliceLen_{\Index})$ when $\sliceLen_{\Index}\le\kvmem/2$, while $\paral_{\Index}=1$ when $\sliceLen_{\Index}>\kvmem/2$.
The choice of $\paral_{\Index}$ gives $\sliceLen_{\Index}/\paral_{\Index}=O(\sliceLen_{\Index}^2/\kvmem)$, so
\begin{equation*}
    \ALGResp(\JobSet)
    =
    O\left(
        \sum_{\Index=0}^{\LastRespPhase}
        \sliceLen_{\Index}\NumJobsGe{\Index}
        +
        \frac{1}{\kvmem}
        \sum_{\Index=0}^{\LastRespPhase}
        \sliceLen_{\Index}^2\NumJobsGe{\Index}^2
    \right).
\end{equation*}
The linear sum is charged to the processing lower bound because the caps grow geometrically.
For the quadratic sum, expand the square over ordered pairs of jobs and charge each pair to the smaller response class.
Response dominance gives $\AreaOf{i}=\Theta(\responseLen_i^2)$, so this pairwise charge is bounded by the area-order lower bound.

The complete proof is deferred to \Cref{sec:response-heavy-constant}.
 \section{Meta-Scheduling with Routing}
\label{sec:meta-scheduling}

If the cases in \Cref{sec:regimes} were known in advance, the scheduler could run the corresponding sub-schedulers through a black-box meta-scheduler on one shared memory resource.
This known-partition reduction is given in \Cref{sec:meta-known-partition}. The black-box meta-scheduler runs the sub-schedulers in a round-robin order over doubling stages and preserves the total completion-time objective up to a constant factor.

The non-clairvoyant setting, however, does not directly provide this partition.
For jobs with $\initialLen_i\le\PromptCutoff$, the distinction between $\responseLen_i\le\initialLen_i$ (\PromptHeavy) and $\responseLen_i>\initialLen_i$ (\ResponseHeavy) depends on the unknown response length.
The meta-scheduler therefore needs to \textit{route} small jobs between sub-schedulers: a small job first enters the \PromptBranch; this branch either completes the job or reaches the boundary $\initialLen_i$ and certifies $\responseLen_i>\initialLen_i$.
Certified jobs are then routed to the \ResponseBranch.
The routing meta-scheduler in \Cref{sec:routing-meta-scheduler} uses the same doubling stages as the black-box meta-scheduler to combine these two events and gives a constant-competitive guarantee.

\subsection{The Black-Box Meta-Scheduler}
\label{sec:meta-known-partition}

The meta-scheduler in \Cref{alg:black-box-meta} merges a constant number of job partitions into one schedule and loses only a constant factor in the competitive guarantee.
In each fresh invocation of a sub-scheduler, every job that has already completed in the meta-scheduler is still included as a \emph{dummy job}. They follow exactly the schedule produced by the sub-scheduler, but their scheduled processing is ignored. This allows the meta-scheduler to invoke each sub-scheduler as an unchanged black box, without requiring any modification or any structural assumptions about its behavior.

\begin{algorithm2e}
\caption{Black-Box Meta-Scheduler ($\ALGBlackBoxMeta$)}
\label{alg:black-box-meta}
\KwIn{Job partition $\JobSet_1,\ldots,\JobSet_k$ and corresponding sub-schedulers $\ALG_1,\ldots,\ALG_k$.}
\For{stage $\Index=0,1,2,\ldots$}{
    \For{$\ell=1,2,\ldots,k$}{
        Run a fresh copy of $\ALG_\ell$ on $\JobSet_\ell$ for $2^\Index$ rounds\;
        \tcp{Completed jobs remain in the fresh copy as dummy jobs.}
        \tcp{Their scheduled processing is idle.}
    }
}
\end{algorithm2e}

\begin{lemma}[Black-box per-job completion control]
\label{lem:black-box-meta}
Suppose that $\JobSet$ is partitioned into disjoint sets $\JobSet_1,\ldots,\JobSet_k$.
For each $\ell=1,\ldots,k$, let $\C_i^\ell$ be the completion time of job $i\in\JobSet_\ell$ in the fresh run $\ALG_\ell(\JobSet_\ell)$.
Then the meta-scheduler $\ALGBlackBoxMeta$ in \Cref{alg:black-box-meta} completes every job $i\in\JobSet_\ell$ by time
\begin{equation*}
    \C_i^{\ALGBlackBoxMeta(\JobSet)}
    \le
    \bigl(2(k+\ell-1)+1\bigr)\C_i^\ell
    \le
    (4k-1)\C_i^\ell .
\end{equation*}
\end{lemma}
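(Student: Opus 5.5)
The argument picks out the first stage in which the allotted window of branch $\ell$ is long enough to contain job $i$'s whole fresh-run completion, and then sums the lengths of all windows up to that point. Fix $i\in\JobSet_\ell$ and set $c=\C_i^\ell$. Let $\Index^\star\triangleq\min\{\Index: 2^\Index\ge c\}$, so that $2^{\Index^\star}<2c$ when $c\ge 2$, and $\Index^\star=0$ when $c=1$.

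The first step is a simulation claim. In stage $\Index^\star$, the meta-scheduler runs a fresh copy of $\ALG_\ell$ on $\JobSet_\ell$ for $2^{\Index^\star}\ge c$ rounds. The dummy-job convention means this fresh copy makes exactly the same decisions as the isolated run $\ALG_\ell(\JobSet_\ell)$, because every job, dummy or not, follows the sub-scheduler's own schedule. The claim therefore has two parts.
\begin{itemize}
\item The sub-scheduler's view is identical. A non-clairvoyant sub-scheduler only observes completions at the times dictated by the true response lengths, and a dummy job is treated as completing at the same time it would in the fresh run. I will state this explicitly as the intended semantics of dummy jobs.
\item Feasibility holds. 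Idling a dummy job's memory only lowers usage, so the memory constraint is still respected.
\end{itemize}
Since $c\le 2^{\Index^\star}$, job $i$ completes within this window unless it already completed earlier, and in either case it is finished by the end of branch $\ell$'s window in stage $\Index^\star$.

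Next I bound the end of that window. Stages $0,\dots,\Index^\star-1$ each run $k$ windows, giving total length $k(2^{\Index^\star}-1)$. Stage $\Index^\star$ adds $\ell\cdot 2^{\Index^\star}$ up to the end of branch $\ell$'s window. Hence
\begin{equation*}
\C_i^{\ALGBlackBoxMeta(\JobSet)}\le (k+\ell)2^{\Index^\star}-k .
\end{equation*}
If $c\ge 2$, I use $2^{\Index^\star}\le 2c-1$, which holds since $2^{\Index^\star-1}<c$ gives $2^{\Index^\star}\le 2c-2$ for integers. Then $(k+\ell)2^{\Index^\star}-k\le 2(k+\ell)c-2(k+\ell)-k$, which is at most $(2(k+\ell-1)+1)c$ whenever $c\ge1$; this should be a routine check. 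If $c=1$, the bound is $\ell\le(2(k+\ell-1)+1)$, which is immediate. Finally, $\ell\le k$ gives $2(k+\ell-1)+1\le 4k-1$.

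The main obstacle is the simulation claim, in particular making rigorous that a fresh copy restarted in a later stage reproduces the isolated trajectory up to time $c$, even though some jobs have become dummies and jobs started in earlier stages were killed at window boundaries. I would handle this by noting that each window starts from time $0$ of a fresh copy with all jobs at zero progress. The meta-level kills at window ends merely truncate that fresh copy, and the dummy semantics make the copy's trajectory a deterministic function of $\JobSet_\ell$ alone. The trajectory is therefore identical to $\ALG_\ell(\JobSet_\ell)$ on its first $2^{\Index^\star}$ rounds.
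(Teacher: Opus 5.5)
Your setup (the first stage $r^\star$ with $2^{r^\star}\ge c$, plus the simulation claim that the fresh copy reproduces $\ALG_\ell(\JobSet_\ell)$) matches the paper's. The final arithmetic step, however, is false, so as written you do not get the stated constant. You bound the completion time by the \emph{end} of branch $\ell$'s window, $(k+\ell)2^{r^\star}-k$. You then claim $2(k+\ell)c-2(k+\ell)-k\le(2(k+\ell)-1)c$ for all $c\ge 1$, but this is equivalent to $c\le 3k+2\ell$ and fails for large $c$.

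Concretely, take $k=\ell=1$ and $c=9$. Then $r^\star=4$, the stage-$4$ window runs from time $15$ to time $31$, and $31>27=3c$. Since $2^{r^\star}$ equals $2c-2$ whenever $c=2^m+1$, the window-end bound can give at best a factor of about $2(k+\ell)$, i.e.\ $4k$ instead of $4k-1$. That is one more than the lemma claims.

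The fix uses more of your own simulation claim. In stage $r^\star$, the fresh copy completes job $i$ exactly $c$ rounds after the window \emph{starts}, or $i$ was already completed earlier. The window starts at time $k(2^{r^\star}-1)+(\ell-1)2^{r^\star}$, so the meta-scheduler completes $i$ by
\[
k(2^{r^\star}-1)+(\ell-1)2^{r^\star}+c\;\le\;(k+\ell-1)\,2^{r^\star}+c\;\le\;\bigl(2(k+\ell-1)+1\bigr)c ,
\]
using $2^{r^\star}\le 2c$, which holds for every $c\ge 1$. This is exactly the paper's argument. It also makes your case split between $c=1$ and $c\ge 2$ unnecessary. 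In the example above it gives completion by time $15+9=24\le 27$.
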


\begin{proof}
For each $i\in\JobSet_\ell$, let $\Index$ be the first stage with $2^\Index\ge \C_i^\ell$, at which job $i$ completes.
The $\ell$-th call of stage $\Index$ starts after at most another $(\ell-1)2^\Index$ rounds.
Therefore
\begin{equation*}
    \C_i^{\ALGBlackBoxMeta(\JobSet)}
    \le
    k(2^\Index-1)+(\ell-1)2^\Index+\C_i^\ell
    \le
    \bigl(2(k+\ell-1)+1\bigr)\C_i^\ell
    \le
    (4k-1)\C_i^\ell .
\end{equation*}
The claim follows.
\end{proof}

\begin{corollary}[Black-box competitive guarantee]
\label{cor:black-box-meta}
Under the assumptions and notation of \Cref{lem:black-box-meta}, suppose that each sub-scheduler $\ALG_\ell$ satisfies
\begin{equation*}
    \ALG_\ell(\JobSet_\ell)
    \le
    \approxratio_\ell \cdot \OptOf{\JobSet_\ell}.
\end{equation*}
Let $\approxratio=\max_{\ell=1,\ldots,k}\approxratio_\ell$.
Then the meta-scheduler $\ALGBlackBoxMeta$ in \Cref{alg:black-box-meta} satisfies
\begin{equation*}
    \ALGBlackBoxMeta(\JobSet)
    \le
    (4k-1)\approxratio \cdot \OptOf{\JobSet}.
\end{equation*}
\end{corollary}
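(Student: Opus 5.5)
The plan is to sum the per-job bound of \Cref{lem:black-box-meta} over all jobs and then compare each sub-scheduler's cost with the global optimum via the deletion bound. First, I will check that \Cref{lem:black-box-meta} applies as stated. Each fresh copy of $\ALG_\ell$ sees the same job set $\JobSet_\ell$, with completed jobs kept as dummies. Hence its behavior during the first $2^\Index$ rounds of stage $\Index$ coincides with the first $2^\Index$ rounds of the isolated run $\ALG_\ell(\JobSet_\ell)$, except that dummy processing is idle.

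The idle dummy processing only lowers memory usage, so feasibility is preserved. Any job $i$ with $\C_i^\ell\le 2^\Index$ therefore completes within the $\ell$-th call of stage $\Index$, unless it already completed earlier. For determinism of the sub-scheduler, I would note that it sees identical information in every fresh copy, because the completion feedback of a dummy job can be simulated: the meta-scheduler already knows that job's response length. This is the only subtle point, and I expect it to be the main (mild) obstacle. The per-job bound then gives $\C_i^{\ALGBlackBoxMeta(\JobSet)}\le (4k-1)\C_i^\ell$ for every $i\in\JobSet_\ell$.

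Summing over $i\in\JobSet_\ell$ and over $\ell$ yields
\begin{equation*}
    \ALGBlackBoxMeta(\JobSet)\le (4k-1)\sum_{\ell=1}^k \ALG_\ell(\JobSet_\ell)\le (4k-1)\sum_{\ell=1}^k \approxratio_\ell\,\OptOf{\JobSet_\ell}\le (4k-1)\approxratio\sum_{\ell=1}^k\OptOf{\JobSet_\ell}.
\end{equation*}
Finally, since $\JobSet_1,\ldots,\JobSet_k$ are disjoint, \Cref{lem:prelim:deletion-bound} gives $\sum_{\ell}\OptOf{\JobSet_\ell}\le\OPT(\JobSet)$, which completes the argument. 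No further calculation is needed beyond these two inequalities.
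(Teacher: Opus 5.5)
Your proposal is correct and follows essentially the same route as the paper. You sum the per-job bound of \Cref{lem:black-box-meta} to get $(4k-1)\sum_\ell \ALG_\ell(\JobSet_\ell)$, replace each $\approxratio_\ell$ by $\approxratio$, and finish with the deletion bound \Cref{lem:prelim:deletion-bound}. Your extra remark about simulating dummy-job completion feedback is not needed, since the corollary takes \Cref{lem:black-box-meta} as given, but it is a sound sanity check on why that lemma applies.
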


\begin{proof}
By \Cref{lem:black-box-meta},
\begin{equation*}
    \ALGBlackBoxMeta(\JobSet)
    \le
    (4k-1)
    \sum_{\ell=1}^k
    \ALG_\ell(\JobSet_\ell).
\end{equation*}
The assumed sub-scheduler guarantees give
\begin{equation*}
    \ALGBlackBoxMeta(\JobSet)
    \le
    (4k-1)\approxratio
    \sum_{\ell=1}^k
    \OptOf{\JobSet_\ell}.
\end{equation*}
Since $\JobSet_1,\ldots,\JobSet_k$ form a partition of $\JobSet$,
$\sum_{\ell=1}^k\OptOf{\JobSet_\ell}\le\OptOf{\JobSet}$ holds by \Cref{lem:prelim:deletion-bound}.
\end{proof}

\subsection{The Routing Meta-Scheduler}
\label{sec:routing-meta-scheduler}

The black-box reduction assumes that every job is already assigned to the correct sub-scheduler.
The non-clairvoyant algorithm does not have this partition: for jobs with $\initialLen_i\le\PromptCutoff$, it must decide online whether a job is \PromptHeavy or \ResponseHeavy.

The \LargeBranch and \PromptBranch use the rectangle strip scheduler $\ALGRect$ defined in \Cref{sec:rectangle-strip}.
They differ only in how completion in the rectangle strip instance is interpreted in the original KV-cache instance.
\begin{itemize}
    \item \textbf{\LargeBranchCap $\LargeJobs$.} Jobs with large prompts are run by the full-width rectangle scheduler; every rectangle completion is true completion.
    \item \textbf{\PromptBranchCap $\PromptPool$.} Jobs in the prompt pool are run by the capped rectangle scheduler.
    In each run, the current pool $\PromptPool$ is represented by the capped instance $\CappedPromptPool$, with the original prompt lengths and response lengths capped at $\min\{\responseLen_i,\initialLen_i\}$.
    Completion in $\CappedPromptPool$ is either true completion in the original instance or certification that $\responseLen_i>\initialLen_i$; certified jobs move to $\ResponsePool$.
    \item \textbf{\ResponseBranchCap $\ResponsePool$.} Certified \ResponseHeavy jobs are run by the \ResponseHeavy scheduler $\ALGResp$.
\end{itemize}

$\LargeJobs$ is known in advance;
$\PromptPool$ is initialized to $\{i\in\JobSet:\initialLen_i\le\PromptCutoff\}$ and $\ResponsePool$ is initially empty.
During the algorithm, certified but unfinished jobs move from $\PromptPool$ to $\ResponsePool$.
As in the black-box meta-scheduler, completed jobs remain in their branch pool and are treated as dummy jobs in fresh runs; their scheduled processing is idle.\footnote{This is only a notational convention. The same analysis also applies if completed jobs are removed from their branch pools, because each later branch call is on a subset of the corresponding reference instance.}

\begin{algorithm2e}
\caption{Routing Meta-Scheduler ($\ALGRoute$)}
\label{alg:white-box-meta}
\KwIn{Job set $\JobSet$, memory $\kvmem$, prompt cutoff $\PromptCutoff$, and branch schedulers \LargeBranch, \PromptBranch, \ResponseBranch.}
$\LargeJobs\leftarrow\{i\in\JobSet:\initialLen_i>\PromptCutoff\}$ \tcp*{\LargeBranch}
$\PromptPool\leftarrow\{i\in\JobSet:\initialLen_i\le\PromptCutoff\}$ \tcp*{\PromptBranch}
$\ResponsePool\leftarrow\varnothing$ \tcp*{\ResponseBranch}
\For{stage $\Index=0,1,2,\ldots$}{
    Run a fresh copy of the \PromptBranch scheduler on $\PromptPool$ for $2^\Index$ rounds\;
    Move every newly certified job from $\PromptPool$ to $\ResponsePool$\;
    Run a fresh copy of the \ResponseBranch scheduler on $\ResponsePool$ for $2^\Index$ rounds\;
    Run a fresh copy of the \LargeBranch scheduler on $\LargeJobs$ for $2^\Index$ rounds\;
}
\end{algorithm2e}

\xhdr{Analysis ingredients.}
For the analysis, define
\begin{equation*}
    \SmallJobs
    \triangleq
    \JobSet\setminus\LargeJobs
    =
    \{i\in\JobSet:\initialLen_i\le\PromptCutoff\}
    ,
    \qquad
    \TruePromptPool
    \triangleq
    \{i\in\SmallJobs:\responseLen_i\le\initialLen_i\},
    \qquad
    \TrueResponsePool
    \triangleq
    \{i\in\SmallJobs:\responseLen_i>\initialLen_i\}.
\end{equation*}
Let $\CappedSmallJobs$ denote the auxiliary capped instance on $\SmallJobs$: it has the same jobs and prompt lengths as $\SmallJobs$, and response length $\min\{\responseLen_i,\initialLen_i\}$ for each job $i$.
Thus every capped prompt-pool instance $\CappedPromptPool$ that appears during the algorithm is a subinstance of $\CappedSmallJobs$.
The instance $\CappedSmallJobs$ is a relaxation of $\SmallJobs$: every feasible schedule for $\SmallJobs$ is feasible for $\CappedSmallJobs$, so $\OptOf{\CappedSmallJobs}\le\OptOf{\SmallJobs}$.

The branch calls made by the meta-scheduler are controlled by bounds defined on the corresponding full sets.
\begin{itemize}
    \item For the rectangle branches, let $\Clarge_i$ ($i\in\LargeJobs$) and $\Cprompt_i$ ($i\in\SmallJobs$) be the per-job bounds from \Cref{lem:rect-per-job} for the \LargeBranch instance on $\LargeJobs$ and the capped instance $\CappedSmallJobs$, respectively.
    The sum in \Cref{lem:rect-per-job} only decreases under job deletion, so these bounds remain valid for every rectangle branch call on a current subset.
    \item For the \ResponseBranch, let $\Cresponse_i$ be the completion time of $i$ in $\ALGResp(\TrueResponsePool)$.
    By \Cref{lem:response-heavy-subset-control}, $\Cresponse_i$ also bounds the completion time of $i$ in every \ResponseBranch call on a certified subset containing $i$.
\end{itemize}

\begin{lemma}[Routing loss]
\label{lem:routing-loss}
Consider any job $i\in\TrueResponsePool$.
Then \Cref{alg:white-box-meta} completes $i$ by time
\begin{equation*}
    \C_i^\ALGRoute
    \le
    \max\{8\Cprompt_i+\Cresponse_i,\;9\Cresponse_i\}.
\end{equation*}
\end{lemma}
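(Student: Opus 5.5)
The plan is to locate a single stage $\Index$ of \Cref{alg:white-box-meta} in which both events happen: job $i$ is certified during the \PromptBranch call, and it then completes during the \ResponseBranch call of the same stage. After that, the bound reduces to counting how many rounds elapse before that point.

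Define two thresholds. Let $\Index_1$ be the first stage with $2^{\Index_1}\ge\Cprompt_i$, and let $\Index_2$ be the first stage with $2^{\Index_2}\ge\Cresponse_i$. Set $\Index=\max\{\Index_1,\Index_2\}$. Since completion times are at least $1$, minimality gives $2^{\Index_1}<2\Cprompt_i$ and $2^{\Index_2}<2\Cresponse_i$; this also covers the edge case $\Index_j=0$.

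First, certification. Since $i\in\TrueResponsePool$, we have $\responseLen_i>\initialLen_i$. Hence in the capped instance $\CappedSmallJobs$, job $i$ has response length $\initialLen_i$, and its completion there is exactly certification.
\begin{itemize}
\item At every stage, the current pool $\PromptPool$ is a subset of $\SmallJobs$, and completed jobs remain as dummies.
\item By the remark preceding the lemma, the bound of \Cref{lem:rect-per-job} is monotone under deletion. So in any fresh \PromptBranch call, job $i$, if it is still in $\PromptPool$, finishes its capped run by time $\Cprompt_i$ into the call.
\item Every call in stage $\Index\ge\Index_1$ lasts $2^{\Index}\ge\Cprompt_i$ rounds.
\end{itemize}
Therefore $i$ is certified no later than the end of the \PromptBranch call of stage $\Index$. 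It is then moved to $\ResponsePool$ before the \ResponseBranch call of stage $\Index$ begins. It cannot have completed earlier in the prompt branch, since $\responseLen_i>\initialLen_i$.

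Second, completion. The \ResponseBranch call of stage $\Index$ runs a fresh copy of $\ALGResp$ on a certified set containing $i$, possibly together with dummy jobs. By \Cref{lem:response-heavy-subset-control}, job $i$ completes within $\Cresponse_i\le 2^{\Index}$ rounds of the start of that call. This is where I expect the main subtlety: the certified set at stage $\Index$ is an arbitrary subset of $\TrueResponsePool$ containing $i$, and the dummies must not change the schedule of $i$. That is exactly why the black-box dummy convention and the subset-control lemma are used, rather than a monotonicity property of $\ALGResp$ itself. If $i$ was certified in an earlier stage, it may complete even sooner, but the bound only needs stage $\Index$.

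Finally, count the elapsed time. Stages $0,\dots,\Index-1$ consist of three calls each, for a total of $3(2^{\Index}-1)$ rounds. The \PromptBranch call of stage $\Index$ adds $2^{\Index}$ more. Hence
\begin{equation*}
\C_i^{\ALGRoute}\le 4\cdot 2^{\Index}+\Cresponse_i .
\end{equation*}
Split into two cases.
\begin{itemize}
\item If $\Index=\Index_1$, then $4\cdot2^{\Index}<8\Cprompt_i$, so $\C_i^{\ALGRoute}\le 8\Cprompt_i+\Cresponse_i$.
\item If $\Index=\Index_2$, then $4\cdot2^{\Index}<8\Cresponse_i$, so $\C_i^{\ALGRoute}\le 9\Cresponse_i$.
\end{itemize}
In either case the bound is at most $\max\{8\Cprompt_i+\Cresponse_i,\;9\Cresponse_i\}$, as claimed.
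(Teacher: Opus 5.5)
Your proposal is correct and is essentially the paper's proof. It uses the same stage thresholds $\Index_1,\Index_2$ (the paper's $\rprompt_i,\rresponse_i$), the same count $3(2^{\Index}-1)+2^{\Index}$ of rounds before the \ResponseBranch call of stage $\Index$, and your split by which of $\Index_1,\Index_2$ attains $\Index=\max\{\Index_1,\Index_2\}$ is exactly the paper's split into $\rprompt_i\le\rresponse_i$ versus $\rprompt_i>\rresponse_i$. One wording nit: \Cref{lem:response-heavy-subset-control} is itself a subset-monotonicity property of $\ALGResp$, so there is no real contrast between invoking it and invoking monotonicity of $\ALGResp$.
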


\begin{proof}
Let $\rprompt_i$ be the first stage with $2^{\rprompt_i}\ge\Cprompt_i$, and let $\rresponse_i$ be the first stage with $2^{\rresponse_i}\ge\Cresponse_i$.
By minimality, $2^{\rprompt_i}<2\Cprompt_i$ and $2^{\rresponse_i}<2\Cresponse_i$.

Each stage $\OtherIndex$ consists of three calls, each of length $2^\OtherIndex$.
Hence the total time before stage $\Index$ is
\begin{equation*}
    3\sum_{\OtherIndex=0}^{\Index-1}2^\OtherIndex
    =
    3(2^\Index-1).
\end{equation*}
The \PromptBranch call in stage $\Index$ ends by time $3(2^\Index-1)+2^\Index$, and the \ResponseBranch call in the same stage starts at that time.

If $\rprompt_i\le\rresponse_i$, then job $i$ has been certified by the time the \ResponseBranch call of stage $\rresponse_i$ starts.
By \Cref{lem:response-heavy-subset-control}, this call completes $i$ within $\Cresponse_i$ additional rounds.
Therefore
\begin{equation*}
    \C_i^\ALGRoute
    \le
    3(2^{\rresponse_i}-1)+2^{\rresponse_i}+\Cresponse_i
    \le
    9\Cresponse_i .
\end{equation*}
If $\rprompt_i>\rresponse_i$, then job $i$ is certified during the \PromptBranch call of stage $\rprompt_i$ and is included in the following \ResponseBranch call of the same stage.
Since $\rprompt_i>\rresponse_i$, this \ResponseBranch call has length at least $\Cresponse_i$.
Again by \Cref{lem:response-heavy-subset-control},
\begin{equation*}
    \C_i^\ALGRoute
    \le
    3(2^{\rprompt_i}-1)+2^{\rprompt_i}+\Cresponse_i
    \le
    8\Cprompt_i+\Cresponse_i .
\end{equation*}
Combining the two cases proves the lemma.
\end{proof}

\thmMain*

\begin{proof}
Set $\PromptCutoff=\kvmem/4$ and $\responseScalar=2$.
Run the routing meta-scheduler $\ALGRoute$ (\Cref{alg:white-box-meta}).
We first bound each job's completion time by the corresponding branch cost, and then compare the branch sums with $\OPT$.

By \Cref{lem:black-box-meta}, applied to the \PromptBranch and \LargeBranch calls,
\begin{equation*}
    \C_i^\ALGRoute\le7\Cprompt_i
    \quad (i\in\TruePromptPool),
    \qquad
    \C_i^\ALGRoute\le11\Clarge_i
    \quad (i\in\LargeJobs).
\end{equation*}
Together with \Cref{lem:routing-loss}, this gives
\begin{equation*}
    \ALGRoute(\JobSet)
    \le
    11\sum_{i\in\LargeJobs}\Clarge_i
    +
    8\sum_{i\in\SmallJobs}\Cprompt_i
    +
    9\sum_{i\in\TrueResponsePool}\Cresponse_i .
\end{equation*}
With $\PromptCutoff=\kvmem/4$, \Cref{prop:rectangle-large-guarantee,prop:rectangle-prompt-heavy-guarantee} give
\begin{equation*}
    \sum_{i\in\LargeJobs}\Clarge_i
    \le
    36\,\OptOf{\LargeJobs},
    \qquad
    \sum_{i\in\SmallJobs}\Cprompt_i
    \le
    36\,\OptOf{\CappedSmallJobs}
    \le
    36\,\OptOf{\SmallJobs}.
\end{equation*}
For the response branch, \Cref{thm:response-heavy-ratio} and the deletion bound (\Cref{lem:prelim:deletion-bound}) give
\begin{equation*}
    \sum_{i\in\TrueResponsePool}\Cresponse_i
    =
    \ALGResp(\TrueResponsePool)
    \le
    \frac{236}{3}\,\OptOf{\TrueResponsePool}
    \le
    \frac{236}{3}\,\OptOf{\SmallJobs}.
\end{equation*}
Since $\LargeJobs$ and $\SmallJobs$ partition $\JobSet$, combining with the deletion bound (\Cref{lem:prelim:deletion-bound}) gives
\begin{equation*}
    \ALGRoute(\JobSet)
    \le
    396\,\OptOf{\LargeJobs}
    +
    (288+708)\,\OptOf{\SmallJobs}
    \le
    996\,\OPT .
\end{equation*}
\end{proof}

\begin{remark}[Computational complexity of \ALGRoute]
Although the scheduling model describes decisions round by round,
the total number of discrete events across \ALGRect, \ALGResp, and \ALGRoute
is bounded by $\mathrm{poly}(\NumberJobs, \log \kvmem)$.
An event-driven implementation that jumps between successive events
therefore runs in polynomial time.
\end{remark} \section{Conclusion and Discussion}
\label{sec:conclusion}

We have presented the first constant-competitive non-clairvoyant scheduling algorithm for batched LLM inference under a hard KV-cache memory budget with arbitrary prompt lengths and arbitrary response lengths. The core algorithmic idea is a regime-aware routing framework that decomposes jobs into three geometric regimes---large jobs, small prompt-heavy jobs, and small response-heavy jobs---and handles each with a specialized sub-scheduler. A meta-scheduler time-shares the memory budget across these regimes and dynamically routes jobs as their execution reveals their true geometric class. This yields an $O(1)$-competitive guarantee for total completion time, and the same framework extends to other variants and extensions such as makespan minimization and online arrivals.

From a systems perspective, our results provide a principled foundation for a design principle already present in modern serving stacks: global request routing across replicas. Rather than introducing routing complexity inside a single replica, our framework shows that partitioning jobs by their geometric structure---large jobs, prompt-heavy jobs, and response-heavy jobs---and dispatching each class to a dedicated replica with a simple specialized local scheduler can simultaneously achieve worst-case robustness and practical efficiency. Because the theoretical meta-scheduler must time-share a single memory budget across heterogeneous classes, a multi-replica implementation actually attains a better competitive ratio than our single-replica bound, while remaining relatively simpler to deploy with existing continuous-batching and block-based cache managers. In this sense, simple local rules, properly routed by geometric class, suffice for both theoretical and practical performance.

\bibliographystyle{plainnat}
\bibliography{refs}

\appendix
\section{Clairvoyant Scheduling Guarantees}
\label{sec:clairvoyant}

The clairvoyant benchmark has three useful forms.
Area-order greedy is nearly optimal when every job is small compared with the memory budget.
For general instances, splitting jobs by full rectangle width and combining the two resulting schedules gives a constant approximation.
A direct rectangle packing argument gives a smaller constant by using a known strip packing guarantee as a black box.
The schedules below reserve the full rectangle width of each active job:
\begin{equation*}
    \FullWidthOf{i}
    \triangleq
    \initialLen_i+\responseLen_i,
\end{equation*}
i.e., reserving the full rectangle of each job.

\subsection{Large-Memory Case}
\label{sec:clairvoyant-large-memory}
The large-memory case is a KV-cache analogue of the small-demand WSVF guarantee for capacitated machines~\cite[Theorem~1]{CCZ24}.

\begin{algorithm2e}[H]
\caption{Clairvoyant Area-Order Greedy (\Greedy)}
\label{alg:clairvoyant-area-order}
\KwIn{Job set $\JobSet$ and memory budget $\kvmem$.}
Order the jobs in $\JobSet$ as $\pi(1),\ldots,\pi(\setsize{\JobSet})$ so that $\AreaOf{\pi(1)}\le\cdots\le\AreaOf{\pi(\setsize{\JobSet})}$\;
$\ActiveBatch\leftarrow\varnothing$ and $q\leftarrow 1$\;
\For{round $t=0,1,2,\ldots$ until ($q>\setsize{\JobSet}$ and $\ActiveBatch=\varnothing$)}{
    \While{$q\le\setsize{\JobSet}$ and
    $\sum_{i\in\ActiveBatch}\FullWidthOf{i}
    +\FullWidthOf{\pi(q)}\le\kvmem$}{
        Start job $\pi(q)$ and add it to $\ActiveBatch$\;
        $q\leftarrow q+1$\;
    }
    Process every job in $\ActiveBatch$ for one round, and remove completed jobs from $\ActiveBatch$\;
}
\end{algorithm2e}

\begin{theorem}[Bounded-width area-order greedy]
\label{thm:clairvoyant-bounded-width}
Fix $\varepsilon\in(0,1)$.
Suppose that every job in $\JobSet$ satisfies
\begin{equation*}
    \FullWidthOf{i}
    \le
    \varepsilon\kvmem .
\end{equation*}
Then the schedule produced by $\Greedy$ satisfies
\begin{equation*}
    \Greedy(\JobSet)
    \le
    \left(1+\frac{2}{1-\varepsilon}\right)\,\OptOf{\JobSet}.
\end{equation*}
\end{theorem}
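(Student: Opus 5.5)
We follow the template of \Cref{lem:rect-per-job}: decompose each job's completion time into its processing time plus the time it waits before starting, and charge the waiting time to area. Since \Greedy is clairvoyant and reserves the full width $\FullWidthOf{i}$ for each job, every started job runs uninterruptedly to completion. Hence $\C_{\pi(r)}^{\Greedy}=\StartTimeOf{\pi(r)}+\responseLen_{\pi(r)}$. Feasibility holds because actual memory $\initialLen_i+\curProgress_{i,t}+1\le\FullWidthOf{i}$. Summing over jobs, the processing part is at most $\sum_i\responseLen_i\le\OptOf{\JobSet}$ by \Cref{lem:prelim:processing-lower-bound}. It remains to show $\sum_r\StartTimeOf{\pi(r)}\le\frac{2}{1-\varepsilon}\OptOf{\JobSet}$.

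\textbf{Bounding the start time.} Fix $r$. Because \Greedy starts jobs strictly in area order, every round $t<\StartTimeOf{\pi(r)}$ satisfies two properties. First, all active jobs are among $\pi(1),\dots,\pi(r-1)$. Second, the next job in the order is some $\pi(q)$ with $q\le r$ that does not fit, so the reserved width satisfies $\sum_{i\in\ActiveBatch_t}\FullWidthOf{i}>\kvmem-\varepsilon\kvmem$. Summing over rounds before $\StartTimeOf{\pi(r)}$, we get
\begin{equation*}
(1-\varepsilon)\kvmem\,\StartTimeOf{\pi(r)}\le\sum_{q<r}\FullWidthOf{\pi(q)}\responseLen_{\pi(q)} .
\end{equation*}
Next we relate the reserved rectangle to the true area: $\FullWidthOf{i}\responseLen_i=\initialLen_i\responseLen_i+\responseLen_i^2\le 2\AreaOf{i}$, since $\AreaOf{i}\ge\initialLen_i\responseLen_i+\responseLen_i^2/2$. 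Therefore
\begin{equation*}
\StartTimeOf{\pi(r)}\le\frac{2}{(1-\varepsilon)\kvmem}\sum_{q<r}\AreaOf{\pi(q)} .
\end{equation*}

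\textbf{Summing.} Summing over $r$ gives
\begin{equation*}
\frac{2}{(1-\varepsilon)\kvmem}\sum_{q}(\setsize{\JobSet}-q)\AreaOf{\pi(q)}\le\frac{2}{1-\varepsilon}\OptOf{\JobSet}
\end{equation*}
by \Cref{lem:prelim:area-lower-bound}. Adding the processing term yields the stated factor $1+\frac{2}{1-\varepsilon}$.

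The main obstacle is the utilization claim in the first property: that every round before $\StartTimeOf{\pi(r)}$ has reserved width above $(1-\varepsilon)\kvmem$. The argument requires that the blocking job is the queue head, which holds because \Greedy never skips ahead in the order. It also requires that the scan is re-run each round after completions, so a round with low utilization would have admitted the head. A secondary point is that we charge rounds using reserved width times rounds, rather than actual memory. This is valid because each job contributes exactly $\FullWidthOf{i}\responseLen_i$ of reserved width-time over its run.
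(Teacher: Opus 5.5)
Your proposal is correct and follows essentially the same route as the paper's proof. Both bound $\StartTimeOf{\pi(r)}$ by noting that every round before $\pi(r)$ starts has reserved width above $(1-\varepsilon)\kvmem$, coming only from $\pi(1),\ldots,\pi(r-1)$; both convert $\FullWidthOf{i}\responseLen_i\le 2\AreaOf{i}$, sum over ranks, and finish with the processing and area-order lower bounds.
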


\begin{proof}
Let $\StartTimeOf{\pi(r)}$ be the start time of job $\pi(r)$.
Before time $\StartTimeOf{\pi(r)}$, if job $\pi(r)$ has not yet been admitted, then the next unstarted job has reserved width at most $\varepsilon\kvmem$.
Thus, after the greedy admission step in each such round, the active reserved width is larger than $(1-\varepsilon)\kvmem$.
All active jobs before time $\StartTimeOf{\pi(r)}$ belong to $\{\pi(1),\ldots,\pi(r-1)\}$.
Therefore,
\begin{equation*}
    \StartTimeOf{\pi(r)}
    \le
    \frac{1}{(1-\varepsilon)\kvmem}
    \sum_{q=1}^{r-1}
    \FullWidthOf{\pi(q)}\responseLen_{\pi(q)} .
\end{equation*}
Since job $\pi(r)$ completes $\responseLen_{\pi(r)}$ rounds after it starts,
\begin{equation*}
    \C_{\pi(r)}
    \le
    \responseLen_{\pi(r)}
    +
    \frac{1}{(1-\varepsilon)\kvmem}
    \sum_{q=1}^{r-1}
    \FullWidthOf{\pi(q)}\responseLen_{\pi(q)} .
\end{equation*}
Summing over all ranks gives
\begin{equation*}
    \Greedy(\JobSet)
    \le
    \sum_{i\in\JobSet}\responseLen_i
    +
    \frac{1}{(1-\varepsilon)\kvmem}
    \sum_{q=1}^{\NumberJobs}
    (\NumberJobs-q)
    \FullWidthOf{\pi(q)}\responseLen_{\pi(q)} .
\end{equation*}
For every job $i$,
\begin{equation*}
    \FullWidthOf{i}\responseLen_i
    =
    \initialLen_i\responseLen_i+\responseLen_i^2
    \le
    2\AreaOf{i}.
\end{equation*}
Hence
\begin{equation*}
    \Greedy(\JobSet)
    \le
    \sum_{i\in\JobSet}\responseLen_i
    +
    \frac{2}{(1-\varepsilon)\kvmem}
    \sum_{q=1}^{\NumberJobs}
    (\NumberJobs-q)\AreaOf{\pi(q)} .
\end{equation*}
Since $\pi$ is the nondecreasing area order, \Cref{lem:prelim:processing-lower-bound,lem:prelim:area-lower-bound} give
\begin{equation*}
    \Greedy(\JobSet)
    \le
    \left(1+\frac{2}{1-\varepsilon}\right)\,\OptOf{\JobSet}.
\end{equation*}
\end{proof}

\begin{corollary}[Large-memory area-order greedy]
\label{cor:clairvoyant-large-memory}
If
\(
    \max_{i\in\JobSet}
    {\FullWidthOf{i}}
    =
    o(\kvmem),
\)
then the clairvoyant area-order schedule in \Cref{thm:clairvoyant-bounded-width} is a $(3+o(1))$-approximation.
Equivalently,
\begin{equation*}
    \Greedy(\JobSet)
    \le
    (3+o(1))\,\OPT(\JobSet) .
\end{equation*}
\end{corollary}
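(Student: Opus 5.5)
This follows from \Cref{thm:clairvoyant-bounded-width} by letting $\varepsilon$ track the largest job width. Set
\begin{equation*}
    \varepsilon \triangleq \frac{\max_{i\in\JobSet}\FullWidthOf{i}}{\kvmem}.
\end{equation*}
By hypothesis $\varepsilon=o(1)$. In particular, for all sufficiently large $\kvmem$ (or along whatever asymptotic sequence of instances the $o(\cdot)$ refers to) we have $\varepsilon\in(0,1)$. Every job then satisfies $\FullWidthOf{i}\le\varepsilon\kvmem$ by construction, so \Cref{thm:clairvoyant-bounded-width} applies directly and gives
\begin{equation*}
    \Greedy(\JobSet)\le\Bigl(1+\frac{2}{1-\varepsilon}\Bigr)\OptOf{\JobSet}.
\end{equation*}

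It remains to expand the constant. For $\varepsilon\le 1/2$,
\begin{equation*}
    \frac{2}{1-\varepsilon}=2+\frac{2\varepsilon}{1-\varepsilon}\le 2+4\varepsilon.
\end{equation*}
Hence the ratio is at most $3+4\varepsilon=3+o(1)$. Since $\OptOf{\JobSet}=\OPT(\JobSet)$ by definition, this is exactly the claimed bound.

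The only point needing care is interpretive rather than technical: the $o(1)$ must be read asymptotically. Concretely, for every $\eta>0$ there is a threshold such that whenever the maximum width is at most that fraction of $\kvmem$, the ratio is at most $3+\eta$; the threshold $\varepsilon\le\eta/4$ (together with $\varepsilon\le1/2$) suffices. The boundary cases also pose no problem. If $\varepsilon=0$, the instance is empty, which is impossible since $\FullWidthOf{i}\ge2$. If $\varepsilon$ is bounded away from $1$, the theorem applies verbatim. No further obstacle is expected.
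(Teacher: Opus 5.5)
Your proposal is correct and is essentially the paper's argument: the paper gives no separate proof and treats the corollary as the specialization of \Cref{thm:clairvoyant-bounded-width} with $\varepsilon=\max_{i}\FullWidthOf{i}/\kvmem\to 0$. Your bound $1+\frac{2}{1-\varepsilon}\le 3+4\varepsilon$ for $\varepsilon\le 1/2$ makes the $o(1)$ term explicit. One minor point: your $\varepsilon=0$ remark is worded oddly, since a nonempty instance always has $\varepsilon>0$ because $\FullWidthOf{i}\ge 2$, and an empty instance is trivial; this does not affect the argument.
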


\subsection{General Case via Black-Box Meta-Scheduling}
\label{sec:clairvoyant-general}

For a general clairvoyant instance, view each job as a rectangle with width $\FullWidthOf{i}$ and height $\responseLen_i$.
The jobs are separated by rectangle width.
The small-width part uses \Greedy with the true full widths.
The large-width part uses the same \Greedy routine on an auxiliary instance in which every job reserves rectangle width $\kvmem$.
The two schedules are merged by the black-box meta-scheduler of \Cref{sec:meta-known-partition}.

\begin{algorithm2e}[H]
\caption{Clairvoyant Width-Splitting Scheduler (\ClairvoyantMix)}
\label{alg:clairvoyant-width-mix}
\KwIn{Job set $\JobSet$, memory $\kvmem$, and threshold $\WidthThreshold\in(0,1)$.}
Set $\JobSet_{\le\WidthThreshold}\leftarrow\{i\in\JobSet:\FullWidthOf{i}\le\WidthThreshold\kvmem\}$\;
Set $\JobSet_{>\WidthThreshold}\leftarrow\JobSet\setminus\JobSet_{\le\WidthThreshold}$\;
Let $\JobSet_{>\WidthThreshold}^{\kvmem}$ be $\JobSet_{>\WidthThreshold}$ with reserved rectangle width $\kvmem$ for each job\;
Run \Cref{alg:black-box-meta} with sub-schedulers $\Greedy(\JobSet_{\le\WidthThreshold})$ and $\Greedy(\JobSet_{>\WidthThreshold}^{\kvmem})$\;
\end{algorithm2e}

\begin{lemma}[Full-width auxiliary greedy call]
\label{lem:clairvoyant-full-width}
For $\WidthThreshold=1/2$, the auxiliary instance $\JobSet_{>\WidthThreshold}^{\kvmem}$ defined in \Cref{alg:clairvoyant-width-mix} satisfies
\begin{equation*}
    \Greedy(\JobSet_{>\WidthThreshold}^{\kvmem})
    \le
    4\,\OptOf{\JobSet_{>\WidthThreshold}}.
\end{equation*}
\end{lemma}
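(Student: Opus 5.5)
The plan is to exploit the fact that in the auxiliary instance $\JobSet_{>\WidthThreshold}^{\kvmem}$ every job reserves the entire budget $\kvmem$. As a result, \Greedy degenerates into a serial schedule that follows the area order, and each job's processing time can be charged to its area through the width condition $\FullWidthOf{i}>\kvmem/2$.

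\emph{Step 1 (structure of the greedy schedule).} Let $\pi(1),\ldots,\pi(n)$ be the nondecreasing area order on $\JobSet_{>\WidthThreshold}$ used by \Greedy, with $n=\setsize{\JobSet_{>\WidthThreshold}}$. Each job reserves width $\kvmem$, so the admission test $\sum_{i\in\ActiveBatch}\FullWidthOf{i}+\FullWidthOf{\pi(q)}\le\kvmem$ passes only when $\ActiveBatch=\varnothing$. Hence jobs run one at a time in the order $\pi$, each uninterrupted for $\responseLen_{\pi(q)}$ rounds, and there is no idle time. This gives
\begin{equation*}
    \C_{\pi(r)}=\sum_{q=1}^{r}\responseLen_{\pi(q)},
    \qquad
    \Greedy(\JobSet_{>\WidthThreshold}^{\kvmem})=\sum_{q=1}^{n}(n-q+1)\,\responseLen_{\pi(q)} .
\end{equation*}
Feasibility in the original instance follows from per-job feasibility $\initialLen_i+\responseLen_i\le\kvmem$.

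\emph{Step 2 (charging processing time to area).} For every job with $\FullWidthOf{i}=\initialLen_i+\responseLen_i>\kvmem/2$, we have
\begin{equation*}
    \AreaOf{i}=\initialLen_i\responseLen_i+\frac{\responseLen_i(\responseLen_i+1)}{2}
    \ge\frac{(\initialLen_i+\responseLen_i)\responseLen_i}{2}
    >\frac{\kvmem}{4}\responseLen_i ,
\end{equation*}
that is, $\responseLen_i<4\AreaOf{i}/\kvmem$.

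\emph{Step 3 (comparison with the lower bound).} Substituting Step 2 into Step 1 gives
\begin{equation*}
    \Greedy(\JobSet_{>\WidthThreshold}^{\kvmem})\le\frac{4}{\kvmem}\sum_{q=1}^{n}(n-q+1)\AreaOf{\pi(q)} .
\end{equation*}
Since $\pi$ is exactly the nondecreasing area order, \Cref{lem:prelim:area-lower-bound} applied to $\SubJobSet=\JobSet_{>\WidthThreshold}$ bounds the right-hand side by $4\,\OptOf{\JobSet_{>\WidthThreshold}}$.

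The only delicate point is Step 2: finding the right inequality that converts $\responseLen_i$ into area with constant exactly $4$. Dropping $\responseLen_i/2$ and using $\initialLen_i\ge\initialLen_i/2$ is sufficient. Everything else is a direct consequence of serialization and the fact that the greedy order coincides with the order in the area lower bound.
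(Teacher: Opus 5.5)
Your Steps 2 and 3 are the paper's computation, but Step 1 puts the wrong order on \Greedy's auxiliary run. In $\JobSet_{>\WidthThreshold}^{\kvmem}$ each job is a rectangle of reserved width $\kvmem$, so its area key is $\kvmem\responseLen_i$ rather than $\AreaOf{i}$; this is how the paper's proof reads the auxiliary call. \Greedy therefore runs the jobs serially in nondecreasing $\responseLen_i$. Call this order $\sigma$. It can differ from the true-area order $\pi$: for large even $\kvmem$, the jobs $(\initialLen,\responseLen)=(\kvmem-2,2)$ and $(\kvmem/2,3)$ are ranked oppositely by the two keys.

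With $\sigma$ in place of $\pi$, Step 3 fails. Termwise conversion only gives $\Greedy(\JobSet_{>\WidthThreshold}^{\kvmem})<\frac{4}{\kvmem}\sum_{q}(n-q+1)\AreaOf{\sigma(q)}$. The nondecreasing order $\pi$ minimizes $\sum_q(n-q+1)\AreaOf{\cdot}$ over all permutations, so this bound is at least the area lower-bound expression. The comparison therefore points the wrong way.

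The missing step is a Smith/SPT exchange argument on the processing-time side. For serial schedules without idle time, $\sum_q(n-q+1)\responseLen_{\sigma(q)}$ is minimized by nondecreasing $\responseLen$ (rearrangement inequality). Hence $\Greedy(\JobSet_{>\WidthThreshold}^{\kvmem})\le\sum_q(n-q+1)\responseLen_{\pi(q)}$. Only after moving to $\pi$ should you apply your bound $\responseLen_{\pi(q)}<4\AreaOf{\pi(q)}/\kvmem$ and then the area lower bound; this is exactly the paper's proof.

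If one instead reads \Greedy as keeping the true-area key and changing only the reservation, your argument is complete as written. Adding the SPT comparison makes it correct under either reading.
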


\begin{proof}
Set $m\triangleq\setsize{\JobSet_{>\WidthThreshold}}$.
Let $\pi(1),\ldots,\pi(m)$ be the nondecreasing true area order of $\JobSet_{>\WidthThreshold}$.
In the auxiliary instance, every job reserves $\kvmem$ and has area key $\kvmem\responseLen_i$.
Thus \Greedy runs one job at a time in nondecreasing $\responseLen_i$.
This order has no larger total completion time than the sequential schedule that follows the order $\pi$, so
\begin{equation*}
    \Greedy(\JobSet_{>\WidthThreshold}^{\kvmem})
    \le
    \sum_{r=1}^{m}
    (m-r+1)\responseLen_{\pi(r)} .
\end{equation*}
For every $i\in\JobSet_{>\WidthThreshold}$,
\begin{equation*}
    \AreaOf{i}
    \ge
    \frac{\FullWidthOf{i}\responseLen_i}{2}
    >
    \frac{\kvmem\responseLen_i}{4}.
\end{equation*}
It follows that
\begin{equation*}
    \Greedy(\JobSet_{>\WidthThreshold}^{\kvmem})
    \le
    \frac{4}{\kvmem}
    \sum_{r=1}^{m}
    (m-r+1)\AreaOf{\pi(r)}
    \le
    4\,\OptOf{\JobSet_{>\WidthThreshold}},
\end{equation*}
where the last inequality follows from \Cref{lem:prelim:area-lower-bound} applied to $\JobSet_{>\WidthThreshold}$.
\end{proof}

\begin{theorem}
\label{thm:clairvoyant-width-mix}
The schedule produced by \ClairvoyantMix satisfies
\begin{equation*}
    \ClairvoyantMix(\JobSet)
    \le
    28\OPT (\JobSet).
\end{equation*}
\end{theorem}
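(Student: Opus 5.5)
The plan is to combine the two sub-scheduler guarantees through the black-box meta-scheduler with $k=2$ and $\WidthThreshold=1/2$. First I need a guarantee for $\Greedy(\JobSet_{\le\WidthThreshold})$. Every job in that part satisfies $\FullWidthOf{i}\le \kvmem/2$, so \Cref{thm:clairvoyant-bounded-width} applies with $\varepsilon=1/2$ and gives
\begin{equation*}
    \Greedy(\JobSet_{\le\WidthThreshold})\le \left(1+\frac{2}{1/2}\right)\OptOf{\JobSet_{\le\WidthThreshold}} = 5\,\OptOf{\JobSet_{\le\WidthThreshold}}.
\end{equation*}
Second, \Cref{lem:clairvoyant-full-width} gives $\Greedy(\JobSet_{>\WidthThreshold}^{\kvmem})\le 4\,\OptOf{\JobSet_{>\WidthThreshold}}$.

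Before invoking the meta-scheduler, I have to check one point. The auxiliary schedule for $\JobSet_{>\WidthThreshold}^{\kvmem}$ reserves width $\kvmem$, so it runs one job at a time. It is therefore feasible in the true KV-cache model, since $\initialLen_i+\responseLen_i\le\kvmem$. The $\Greedy$ schedule on the small part reserves the full width $\FullWidthOf{i}$, which dominates the true memory $\initialLen_i+\curProgress_{i,t}+1$. Hence both are feasible KV-cache schedules whose completion times are the ones bounded above. Fresh copies inside stages are deterministic clairvoyant replays, and completed jobs become dummy jobs, so \Cref{lem:black-box-meta} applies verbatim.

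Then I apply \Cref{cor:black-box-meta} with $k=2$ and $\approxratio=\max\{5,4\}=5$:
\begin{equation*}
    \ClairvoyantMix(\JobSet)\le (4k-1)\approxratio\,\OPT(\JobSet)=7\cdot 5\,\OPT(\JobSet)=35\,\OPT(\JobSet).
\end{equation*}
This is weaker than the claimed $28$, so I will use the finer per-part constant in \Cref{lem:black-box-meta}, namely $2(k+\ell-1)+1$. Ordering the calls so that the small-width part goes first ($\ell=1$) costs factor $5$ on it, and the large part ($\ell=2$) costs factor $7$. The two contributions are $5\cdot 5=25$ and $7\cdot 4=28$, so the total is at most
\begin{equation*}
    25\,\OptOf{\JobSet_{\le\WidthThreshold}}+28\,\OptOf{\JobSet_{>\WidthThreshold}}\le 28\bigl(\OptOf{\JobSet_{\le\WidthThreshold}}+\OptOf{\JobSet_{>\WidthThreshold}}\bigr)\le 28\,\OPT(\JobSet)
\end{equation*}
by \Cref{lem:prelim:deletion-bound}.

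The main obstacle is this constant bookkeeping. The corollary as stated is too lossy, so I must use the position-dependent bound of \Cref{lem:black-box-meta} and fix the call order in \Cref{alg:black-box-meta}, with the small-width greedy first. If the order were reversed, the best this argument gives is $\max\{7\cdot 5,\,5\cdot 4\}=35$.
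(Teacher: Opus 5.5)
Your proof is correct and follows the paper's argument: the factor $5$ comes from the bounded-width greedy bound at $\varepsilon=1/2$ and the factor $4$ from \Cref{lem:clairvoyant-full-width}. With the small-width branch placed first, the position-dependent factors $5$ and $7$ of \Cref{lem:black-box-meta} and the deletion bound combine to give $\max\{25,28\}=28$; your explicit feasibility check for the reserved-width schedules is a detail the paper leaves implicit.
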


\begin{proof}
For $\WidthThreshold=1/2$, the same proof as \Cref{thm:clairvoyant-bounded-width} gives
\begin{equation*}
    \Greedy(\JobSet_{\le\WidthThreshold})
    \le
    5\,\OptOf{\JobSet_{\le\WidthThreshold}},
\end{equation*}
and \Cref{lem:clairvoyant-full-width} gives
\begin{equation*}
    \Greedy(\JobSet_{>\WidthThreshold}^{\kvmem})
    \le
    4\,\OptOf{\JobSet_{>\WidthThreshold}}.
\end{equation*}
The branch-dependent bound in \Cref{lem:black-box-meta} gives a factor of $5$ for the first branch and a factor of $7$ for the second branch.
Since \Cref{alg:clairvoyant-width-mix} places the small-width branch first, it gives
\begin{equation*}
    \ClairvoyantMix(\JobSet)
    \le
    5\Greedy(\JobSet_{\le\WidthThreshold})
    +
    7\Greedy(\JobSet_{>\WidthThreshold}^{\kvmem}) .
\end{equation*}
Together with \Cref{lem:prelim:deletion-bound}, this implies
\begin{equation*}
    \ClairvoyantMix(\JobSet)
    \le
    25\,\OptOf{\JobSet_{\le\WidthThreshold}}
    +
    28\,\OptOf{\JobSet_{>\WidthThreshold}}
    \le
    28\OPT (\JobSet).
\end{equation*}
This proves the theorem.
\end{proof}

\subsection{General Case with Direct Rectangle-Packing}
\label{sec:clairvoyant-rectangle-packing}

This subsection gives a second clairvoyant bound using a stronger geometric primitive.
The previous two subsections use area-order scheduling directly;
here we call a strip packing algorithm as a black box, which gives a stronger makespan guarantee for each stage.
The conversion from stage makespan guarantees to total completion time follows the doubling framework of the single-machine capacitated scheduling algorithm of \cite[Theorem~7.1]{CCZ24}.
Two simplifications apply in the present unweighted rectangle instance:
\begin{itemize}
    \item The knapsack selection step for weighted rectangles becomes the exact prefix after sorting by rectangle area.
    \item For rectangles with width at most $\kvmem$, height at most $L$, and total area at most $\kvmem L$,
    the $2$-approximation algorithm by \cite{Steinberg97} packs them in height at most $2L$.
\end{itemize}
These two changes give an $8$-approximation for the rectangle instance, improving the $12+\varepsilon$ factor from the corresponding framework of \cite{CCZ24}.
The conversion from KV-cache jobs to rectangles loses another factor of $2$ against the area lower bound, yielding the $16$-approximation below.

\begin{algorithm2e}
\caption{Clairvoyant Rectangle-Packing Scheduler (\PackAndSchedule)}
\label{alg:clairvoyant-rectangle-packing}
\KwIn{Job set $\JobSet$ and memory $\kvmem$.}
\For{stage $\Index=0,1,2,\ldots$ until all jobs have completed}{
    Let $\EligibleJobs_{\Index}\leftarrow\{i\in\JobSet:\responseLen_i\le 2^\Index\}$, ordered by nondecreasing $\FullWidthOf{i}\responseLen_i$\;
    Let $\AlgCompletedJobs_{\Index}$ be the longest prefix of $\EligibleJobs_{\Index}$ whose total rectangle area is at most $\kvmem 2^\Index$\;
    Pack the rectangles $\{(\FullWidthOf{i},\responseLen_i):i\in\AlgCompletedJobs_{\Index}\}$ in a $\kvmem$-width, $2^{\Index+1}$-height strip by Steinberg's algorithm \citep{Steinberg97}\;
    Run the packed schedule \tcp*{Completed jobs are treated as dummy jobs.}
}
\end{algorithm2e}

\begin{theorem}
\label{thm:clairvoyant-rectangle-packing}
The schedule produced by \PackAndSchedule satisfies
\begin{equation*}
    \PackAndSchedule(\JobSet)
    \le
    16\OPT(\JobSet).
\end{equation*}
\end{theorem}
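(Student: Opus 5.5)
The plan is to first prove an $8$-approximation against the rectangle area lower bound, and then convert to KV-cache areas. Write $R_i\triangleq\FullWidthOf{i}\responseLen_i$ for the rectangle area of job $i$; since $R_i=\initialLen_i\responseLen_i+\responseLen_i^2\le 2\AreaOf{i}$, any bound of the form $\frac{c}{\kvmem}\sum_r(\setsize{\JobSet}-r+1)R_{\pi(r)}$ converts via \Cref{lem:prelim:area-lower-bound} into $2c\cdot\OPT$. Reserving full width $\FullWidthOf{i}$ makes any rectangle packing KV-feasible. Stage $\Index$ therefore runs within its allotted height $2^{\Index+1}$: the prefix $\AlgCompletedJobs_\Index$ has heights at most $2^\Index$ and total area at most $\kvmem2^\Index$, so Steinberg's guarantee applies. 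Stage $\Index$ thus ends by time $\sum_{q\le\Index}2^{q+1}<2^{\Index+2}$. Every job of $\AlgCompletedJobs_\Index$ that has not yet finished completes by then.

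Next I would count completions. For a threshold time $t$, let $N(t)$ be the number of jobs the algorithm has not completed by $t$. By layer-cake, the total completion time is $\sum_{t\ge0}N(t)$. Define $N^\ast(T)$ analogously for the lower-bound quantity $\frac1\kvmem\sum_r(\setsize{\JobSet}-r+1)R_{\pi(r)}$. This quantity equals $\sum_r \frac1\kvmem\sum_{q\le r}R_{\pi(q)}$, i.e. the sum of ``fractional completion times'' $\tilde C_r\triangleq\frac1\kvmem\sum_{q\le r}R_{\pi(q)}$. Here $\pi$ is the order by $R$, and I am using $R$-order rather than $\Area$-order. That is fine, because sorting by $R$ minimizes the prefix-sum expression for $R$, and $R\le2\Area$ pointwise gives $\sum_r(\setsize{\JobSet}-r+1)R_{\pi(r)}\le 2\sum_r(\setsize{\JobSet}-r+1)\AreaOf{\sigma(r)}$ for the area order $\sigma$.

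The key claim is the following. Suppose job $j$ has $\responseLen_j\le 2^\Index$ and $\tilde C$-rank such that all jobs with $R$ up to $R_j$ and $\responseLen\le2^\Index$ have total area at most $\kvmem2^\Index$. Then $j$ is in $\AlgCompletedJobs_\Index$ and is done by $2^{\Index+2}$. Hence $j$ completes by $4\cdot 2^{\Index}$, where $\Index$ is the least index with $2^\Index\ge\max\{\responseLen_j,\tilde C_j\}$. This gives $C_j\le 8\max\{\responseLen_j,\tilde C_j\}$ up to the rounding factor $2$. The point is that $\sum_{q:R_q\le R_j,\responseLen_q\le 2^\Index}R_q\le \kvmem\tilde C_j\le\kvmem2^\Index$, provided ties in the ordering are consistent.

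A naive sum of maxima gives $8(\sum\responseLen_j+\sum\tilde C_j)$, and hence $16\cdot2$ overall, which is too weak. The main obstacle is therefore to avoid adding the processing bound and the area bound separately. My first attempt is to compare against $\OPT$ directly, through the mean-busy-time / doubling argument of \cite{CCZ24}. In any schedule, the jobs finished by time $2^\Index$ have $\responseLen\le2^\Index$ and total area at most $\kvmem2^\Index$. The prefix chosen greedily by $R$ is the maximum-cardinality such set, so the algorithm has completed at least as many jobs by $2^{\Index+2}$ as the optimal rectangle schedule by $2^\Index$. This yields $N(2^{\Index+2})\le N^\ast_{\text{rect}}(2^\Index)$. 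Integrating with doubling steps then gives a factor $4$ from the time scaling and a factor $2$ from discretizing to powers of two, which is $8$ against the rectangle optimum. Finally, the rectangle optimum is at most $2\OPT$: by the \Cref{lem:prelim:area-lower-bound} argument, the rectangle counting bound $N^\ast_{\text{rect}}$ is dominated by the cardinality bound built from $\Area$, using $R\le2\Area$ and a time scaling by $2$. This yields $16$ in total.
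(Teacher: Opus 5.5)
Your skeleton matches the paper's. Steinberg's bound puts stage $\Index$ inside height $2^{\Index+1}$, so all of $\AlgCompletedJobs_\Index$ is finished before $4\cdot 2^\Index$. The area-sorted prefix is the maximum-cardinality subset of $\EligibleJobs_\Index$ with rectangle area at most $\kvmem 2^\Index$. The constant is assembled as $4\cdot 2\cdot 2$.

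\textbf{The gap.} The problem is the intermediate object you route through. The step $\PackAndSchedule(\JobSet)\le 8\,\OptRectOf{\JobSet}$ is fine, but the last link $\OptRectOf{\JobSet}\le 2\,\OPT(\JobSet)$ is false in general. Take $\kvmem=2L$ and $n$ identical jobs with $\initialLen_i=1$ and $\responseLen_i=L$.
\begin{itemize}
\item Every full width is $\FullWidthOf{i}=L+1>\kvmem/2$, so any rectangle schedule is serial and $\OptRectOf{\JobSet}=Ln(n+1)/2$.
\item In the KV model, start a job every $g=\lceil (L+3)/3\rceil$ rounds. Since $3g>L$, at most three jobs are alive at once. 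Memory peaks at an oldest job's last round, at $(L+1)+(L+1-g)+(L+1-2g)=3L+3-3g\le 2L$.
\item This KV schedule has total completion time $g\,n(n-1)/2+nL\approx Ln^2/6$.
\end{itemize}
Hence $\OptRectOf{\JobSet}\ge(3-o(1))\,\OPT(\JobSet)$ as $n$ and then $L$ grow.

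Your justification also points the wrong way. A cardinality bound limits how many jobs any schedule can have finished by time $T$. It can therefore only lower-bound completion times; it can never upper-bound the rectangle optimum, which would require constructing a rectangle schedule.

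\textbf{The repair.} You already have everything needed: skip $\OptRectOf{\JobSet}$ and compare the algorithm with the KV optimum directly through the cardinality bound.
\begin{itemize}
\item The jobs an optimal KV schedule finishes by time $2^{\Index-1}$ have $\responseLen_i\le 2^{\Index}$ and KV area at most $\kvmem 2^{\Index-1}$.
\item Hence $\sum\FullWidthOf{i}\responseLen_i\le 2\sum\AreaOf{i}\le\kvmem 2^{\Index}$. So these jobs are a feasible set for the stage-$\Index$ selection, and there are at most $\setsize{\AlgCompletedJobs_\Index}$ of them.
\item $\setsize{\AlgCompletedJobs_\Index}$ is nondecreasing in $\Index$, because the feasible family only grows.
\item So every rank $k\in(\setsize{\AlgCompletedJobs_{\Index-1}},\setsize{\AlgCompletedJobs_\Index}]$ is reached by the algorithm before $4\cdot 2^\Index$, and by the optimum only after $2^{\Index-2}$. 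This gives the factor $16$ rank by rank.
\end{itemize}
This is exactly the paper's proof. Your factor ``$2$ from $R\le 2\Area$ with a time scaling by $2$'' is legitimate inside this counting comparison, but not as an inequality between the two optima.
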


\begin{proof}
After stage $\Index$, the algorithm has completed at least $\setsize{\AlgCompletedJobs_{\Index}}$ jobs by time
\begin{equation*}
    2\sum_{\OtherIndex=0}^{\Index}2^\OtherIndex
    <
    4\cdot 2^\Index .
\end{equation*}
The optimal schedule completes at most $\setsize{\AlgCompletedJobs_{\Index}}$ jobs by time $2^{\Index-1}$.
To see that, let $\OptCompletedJobs_{\Index}$ be the jobs completed by an optimal schedule by that time.
Every job in $\OptCompletedJobs_{\Index}$ has $\responseLen_i\le 2^\Index$, and these jobs have total rectangle area at most
\begin{equation*}
    \sum_{i\in\OptCompletedJobs_{\Index}} \FullWidthOf{i}\responseLen_i
    \le
    2\sum_{i\in\OptCompletedJobs_{\Index}}\AreaOf{i}
    \le
    \kvmem 2^\Index .
\end{equation*}
By the maximality of the prefix $\AlgCompletedJobs_{\Index}$, this implies $\setsize{\OptCompletedJobs_{\Index}}\le\setsize{\AlgCompletedJobs_{\Index}}$.

For ranks $\setsize{\AlgCompletedJobs_{\Index-1}}+1,\ldots,\setsize{\AlgCompletedJobs_{\Index}}$ with $\Index\ge0$ (using the convention $\AlgCompletedJobs_{-1}=\varnothing$), the algorithm completes these ranks by time less than $4\cdot2^\Index$, while the optimal schedule cannot complete these ranks before time $2^{\Index-2}$.
Summing over all ranks gives $\PackAndSchedule(\JobSet)\le16\OPT$.
\end{proof}

\begin{remark}
The same argument improves the single-machine weighted capacitated scheduling bound of \cite[Theorem~7.1]{CCZ24} from $12+\varepsilon$ to $8+\varepsilon$.
It keeps their resource-augmented knapsack step, adds the natural scale eligibility condition on processing times ($\EligibleJobs_{\Index}$ in \Cref{alg:clairvoyant-rectangle-packing}), and replaces the factor-$3$ packing step based on \cite{JZ07} by Steinberg's factor-$2$ strip packing theorem.
\end{remark}
 \section{Complete Non-Clairvoyant Rectangle Strip Scheduling}
\label{sec:rectangle-strip-complete}

The rectangle strip guarantee in \Cref{thm:rectangle-strip} assumes a useful width range.
This section removes that assumption by mixing two rectangle strip schedulers through the black-box meta-scheduler.

\begin{algorithm2e}
\caption{Complete Rectangle Strip Scheduler ($\ALGRectMix$)}
\label{alg:rectangle-strip-complete}
\KwIn{Rectangle job set $\JobSet$, widths $\{\rectWidth_i\}_{i\in\JobSet}$, memory $\kvmem$, and threshold $\WidthThreshold\in(0,1)$.}
Set $\JobSet_{\le\WidthThreshold}\leftarrow\{i\in\JobSet:\rectWidth_i\le\WidthThreshold\kvmem\}$\;
Set $\JobSet_{>\WidthThreshold}\leftarrow\{i\in\JobSet:\rectWidth_i>\WidthThreshold\kvmem\}$\;
Run the black-box meta-scheduler with $\ALGRect$ on $\JobSet_{\le\WidthThreshold}$ and $\ALGRect$ on $\JobSet_{>\WidthThreshold}$\;
\end{algorithm2e}

\begin{theorem}
\label{thm:rectangle-strip-complete}
For every rectangle strip instance with $\rectWidth_i\le\kvmem$ for all $i\in\JobSet$, there is a non-clairvoyant scheduling algorithm whose total completion time is within a constant factor of the optimal clairvoyant total completion time:
\begin{equation*}
    \ALGRectMix(\JobSet)
    =
    O(1) \cdot \OptRectOf{\JobSet}.
\end{equation*}
\end{theorem}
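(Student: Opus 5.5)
We fix the threshold $\WidthThreshold=1/2$ and analyze the two width classes of \Cref{alg:rectangle-strip-complete} separately, then combine them with the black-box guarantee. The overall shape is: a constant ratio for each class from \Cref{thm:rectangle-strip}, a constant time-sharing loss from \Cref{cor:black-box-meta}, and a final merge through the deletion bound.

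\textbf{Step 1: the two width classes.}
For the narrow class $\JobSet_{\le\WidthThreshold}$, every width lies in $[0,\kvmem/2]$. Applying \Cref{thm:rectangle-strip} with $\WidthLowerBound=0$ and $\WidthUpperBound=\kvmem/2$ gives utilization $U=\kvmem/2$. Hence
\begin{equation*}
\ALGRect(\JobSet_{\le\WidthThreshold})\le 20\,\OptRectOf{\JobSet_{\le\WidthThreshold}}.
\end{equation*}
For the wide class $\JobSet_{>\WidthThreshold}$, every width lies in $(\kvmem/2,\kvmem]$. Taking $\WidthLowerBound=\kvmem/2$ and $\WidthUpperBound=\kvmem$ again gives $U=\kvmem/2$ and the same factor $20$. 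Both lemmas behind \Cref{thm:rectangle-strip} are stated for arbitrary rectangle instances, so no KV-cache-specific reasoning is needed here.

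\textbf{Step 2: combining the classes.}
We run the two sub-schedulers through \Cref{alg:black-box-meta} with $k=2$. By \Cref{cor:black-box-meta}, the combined total completion time is at most $(4k-1)\cdot 20=140$ times
\begin{equation*}
\OptRectOf{\JobSet_{\le\WidthThreshold}}+\OptRectOf{\JobSet_{>\WidthThreshold}}.
\end{equation*}

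\textbf{Step 3: the deletion bound for rectangles.}
It remains to show $\OptRectOf{\JobSet_1}+\OptRectOf{\JobSet_2}\le\OptRectOf{\JobSet}$ for a partition $\JobSet_1,\JobSet_2$. This is the rectangle analogue of \Cref{lem:prelim:deletion-bound}, and the same proof applies. Restricting an optimal rectangle schedule for $\JobSet$ to a subset $\JobSet_\ell$ only removes active widths, so the restriction remains feasible. It also leaves each job's completion time unchanged. Summing over the two parts gives the inequality.

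The one point needing care is that \Cref{cor:black-box-meta} is phrased for the KV-cache objective $\OptOf{\cdot}$. Its proof, however, uses only the per-job bound of \Cref{lem:black-box-meta} and the deletion bound. Two properties transfer directly to rectangles. First, fresh copies treat completed jobs as dummies, which never violates feasibility. Second, each stage length $2^\Index$ is a whole number of rounds, so the per-job inequality $\C_i\le(4k-1)\C_i^\ell$ holds verbatim. I therefore expect the main obstacle to be only this bookkeeping: checking that the dummy-job convention and the greedy scheduler $\ALGRect$ remain well defined in the rectangle model. No new idea is required, and the constants yield $\ALGRectMix(\JobSet)\le 140\,\OptRectOf{\JobSet}$.
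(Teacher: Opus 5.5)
Your proposal is correct and follows the paper's proof. Both split the jobs by width at a threshold $\WidthThreshold$, apply \Cref{thm:rectangle-strip} to each class, merge the two classes with the black-box meta-scheduler, and finish with the deletion bound.

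The only difference is constant bookkeeping. The paper uses the position-dependent losses $5$ and $7$ from \Cref{lem:black-box-meta} rather than the uniform $4k-1=7$. It then balances the two coefficients by choosing $\WidthThreshold=(\sqrt{149}-11)/2$, which gives $72+4\sqrt{149}\approx 120.8$ instead of your $140$.
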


\begin{proof}
Run \Cref{alg:rectangle-strip-complete} with $\WidthThreshold=(\sqrt{149}-11)/2$.
By \Cref{thm:rectangle-strip}, the two branch ratios are
\begin{equation*}
    4+\frac{8}{1-\WidthThreshold}
    \qquad\text{and}\qquad
    4+\frac{8}{\WidthThreshold}.
\end{equation*}
By \Cref{lem:black-box-meta}, the first branch loses a factor of $5$ and the second branch loses a factor of $7$.
Hence
\begin{equation*}
    \ALGRectMix(\JobSet)
    \le
    5\left(4+\frac{8}{1-\WidthThreshold}\right)
    \OptRectOf{\JobSet_{\le\WidthThreshold}}
    +
    7\left(4+\frac{8}{\WidthThreshold}\right)
    \OptRectOf{\JobSet_{>\WidthThreshold}}.
\end{equation*}
The chosen $\WidthThreshold$ makes both coefficients equal to $72+4\sqrt{149}$.
Since $\JobSet_{\le\WidthThreshold}$ and $\JobSet_{>\WidthThreshold}$ partition $\JobSet$, \Cref{lem:prelim:deletion-bound} gives the theorem.
\end{proof}
 \section{Competitive Ratio for the \ResponseHeavyCap Scheduler}
\label{sec:response-heavy-constant}

This appendix proves the explicit constant used in \Cref{thm:response-heavy-ratio}.
For the rest of the appendix, fix $\responseScalar=2$ and a \ResponseHeavy job set $\JobSet$;
for every fixed $\responseScalar>1$, the same argument gives the stated $O_{\responseScalar}(1)$ bound.
Let $\LastRespPhase\triangleq\ceil{\log_2\kvmem}$.
Let $\sliceLen_{-1}=0$ and define the phase classes
\begin{equation*}
    \JobSet_{\Index}
    =
    \{i\in\JobSet:\sliceLen_{\Index-1}<\responseLen_i\le\sliceLen_{\Index}\},
    \qquad
    \NumJobsClass_{\Index}=\setsize{\JobSet_{\Index}},
    \qquad
    \NumJobsGe{\Index}=\sum_{\OtherIndex=\Index}^{\LastRespPhase}\NumJobsClass_{\OtherIndex},
    \qquad
    \NumJobsGt{\Index}=\sum_{\OtherIndex=\Index+1}^{\LastRespPhase}\NumJobsClass_{\OtherIndex}
\end{equation*}
for $\Index=0,1,\ldots,\LastRespPhase$.
Let $\lowerSlice_0\triangleq1$ and $\lowerSlice_{\Index}\triangleq\sliceLen_{\Index-1}$ for $\Index\ge1$.
Then every job $i\in\JobSet_{\Index}$ satisfies
\begin{equation*}
    \responseLen_i\ge\lowerSlice_{\Index},
    \qquad
    \sliceLen_{\Index}\le2\lowerSlice_{\Index} .
\end{equation*}

\begin{lemma}[Phase efficiency]
\label{lem:resp-constant-phase-efficiency}
For every phase $\Index$,
\begin{equation*}
    \frac{\sliceLen_{\Index}}{\paral_{\Index}}
    \le
    \frac{7}{2}\cdot\frac{\sliceLen_{\Index}^2}{\kvmem}.
\end{equation*}
\end{lemma}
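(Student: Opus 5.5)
We need to show $\sliceLen/\paral \le \frac72 \sliceLen^2/\kvmem$, i.e. $\paral \ge \frac{2\kvmem}{7\sliceLen}$. The plan is to split into two cases according to whether $\sliceLen_{\Index}\le\kvmem/2$, mirroring the proof of \Cref{lem:rh:phase-feasible}, and in the small-slice case to use the lower bound on $\paral^*$ from \Cref{fact:rh:sps}.

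\textbf{Case $\sliceLen=\sliceLen_{\Index}>\kvmem/2$.} Here $\paral_{\Index}\ge1$, so $\sliceLen/\paral\le\sliceLen$. Since $\sliceLen>\kvmem/2$, we have $\sliceLen^2/\kvmem>\sliceLen/2$. Hence $\sliceLen\le 2\sliceLen^2/\kvmem\le\frac72\sliceLen^2/\kvmem$, and this case is immediate.

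\textbf{Case $\sliceLen\le\kvmem/2$.} Now $\ProxyPrompt_{\Index}=\sliceLen$ and $\sliceLen+\ProxyPrompt_{\Index}\le\kvmem$, so \Cref{fact:rh:sps} gives
\[
\paral\ge\Big\lfloor\frac{2\kvmem-\sliceLen+1}{3\sliceLen+1}\Big\rfloor=:\lfloor x\rfloor\ge1.
\]
We need $\lfloor x\rfloor\ge \frac{2\kvmem}{7\sliceLen}$.

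\emph{Sub-case $x\ge2$.} Use $\lfloor x\rfloor\ge x/2$, so it suffices that $7\sliceLen(2\kvmem-\sliceLen+1)\ge4\kvmem(3\sliceLen+1)$. This rearranges to $2\kvmem\sliceLen-4\kvmem-7\sliceLen^2+7\sliceLen\ge0$. The condition $x\ge2$ alone does not imply this (it fails, e.g., at $\sliceLen=1$), so this bound has to be combined with the others rather than used in isolation.

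\emph{Sub-case $x<2$.} Here $\lfloor x\rfloor=1$ and we need $\sliceLen\ge 2\kvmem/7$. Since $x<2$ means $2\kvmem-\sliceLen+1<6\sliceLen+2$, we get $\sliceLen>(2\kvmem-1)/7$. Integrality of $\sliceLen$ and $\kvmem$ should then upgrade this to $\sliceLen\ge2\kvmem/7$ when $2\kvmem/7$ is not slightly above an integer; otherwise a small check is needed.

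\textbf{Better route: a sharper floor bound.} Instead of $\lfloor x\rfloor\ge x/2$, use $\lfloor x\rfloor\ge x-\frac{3\sliceLen}{3\sliceLen+1}$, which holds because $x$ has denominator $3\sliceLen+1$. Then it suffices to show
\[
\frac{2\kvmem-4\sliceLen+1}{3\sliceLen+1}\ge\frac{2\kvmem}{7\sliceLen},
\]
equivalently
\[
14\kvmem\sliceLen-28\sliceLen^2+7\sliceLen\ge6\kvmem\sliceLen+2\kvmem,
\]
that is, $8\kvmem\sliceLen-28\sliceLen^2+7\sliceLen-2\kvmem\ge0$.

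The left side is concave in $\sliceLen$, so it suffices to check the endpoints.
\begin{itemize}
\item At $\sliceLen=\kvmem/2$: the value is $4\kvmem^2-7\kvmem^2+\dots<0$, so the bound fails there. This forces a cutoff: for $\sliceLen$ in roughly $[2\kvmem/7,\kvmem/2]$ we must instead use $\paral\ge1$, which directly gives $\sliceLen\le\frac72\sliceLen^2/\kvmem$ because $\sliceLen\ge 2\kvmem/7$.
\item For $\sliceLen\le 2\kvmem/7$: at $\sliceLen=2\kvmem/7$ the expression is $\frac{16}{7}\kvmem^2-\frac{16}{7}\kvmem^2+2\kvmem-2\kvmem=0$, so it holds with equality.
\item At $\sliceLen=1$: the value is $6\kvmem-21\ge0$ once $\kvmem\ge4$. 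Tiny $\kvmem$ is checked separately.
\end{itemize}

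\textbf{Resulting structure.} The proof uses three regimes:
\begin{itemize}
\item $\sliceLen\ge2\kvmem/7$: the trivial bound $\paral\ge1$ suffices, and this also covers the whole case $\sliceLen>\kvmem/2$.
\item $1\le\sliceLen<2\kvmem/7$: the sharpened floor bound together with concavity of the quadratic gives the claim.
\item Small $\kvmem$: handled by direct verification.
\end{itemize}
The main obstacle is handling the floor and the small-$\kvmem$ boundary cleanly so that the constant $7/2$ emerges exactly.
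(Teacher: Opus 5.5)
Your proof is correct, but it takes a different route from the paper's.

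\textbf{The paper's route.} It splits on the value of $k$ rather than on $\tau$, and uses the maximality of $k$ (the infeasibility of $k+1$) together with the explicit peak formula.
For $\tau>M/2$ it notes $k=1$ and uses $M<2\tau$.
For $\tau\le M/2$ and $k\ge 2$ it gets $M<\mathrm{Peak}(k+1,\tau,\tau)\le (3k\tau+4\tau+k)/2\le 7k\tau/2$.
For $\tau\le M/2$ and $k=1$ it uses $\mathrm{Peak}(2,\tau,\tau)>M$ plus a parity argument on $\gcd(\tau,2)$ to get $M<7\tau/2$.

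\textbf{Your route.} You split on $\tau$ versus $2M/7$.
Above the threshold, the trivial bound $k\ge 1$ suffices, and this also absorbs the whole case $\tau>M/2$.
Below it, you use only the closed-form lower bound on $k^*$ from the Fact, sharpened via $\lfloor N/d\rfloor\ge (N-d+1)/d$. This reduces everything to the concave quadratic $8M\tau-28\tau^2+7\tau-2M\ge 0$. That quadratic vanishes exactly at $\tau=2M/7$ and equals $6M-21$ at $\tau=1$.

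\textbf{Comparison.} Your argument never touches the gcd/parity detail, since it is absorbed into the Fact. It also shows where the constant $7/2$ comes from: it is the point where the floor lower bound and the trivial bound $k\ge1$ meet. The paper's argument is shorter and relies only on the definition of $k$ as a maximum, not on the Fact's floor bound.

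\textbf{Two tidy-ups.}
First, your ``small $M$'' regime is vacuous. For $M\le 3$ we have $2M/7<1\le\tau$, so the trivial regime already covers every phase and no direct verification is needed.
Second, delete the exploratory sub-cases based on $\lfloor x\rfloor\ge x/2$. In the $x<2$ sub-case the hedging is unnecessary anyway: $x<2$ gives $7\tau>2M-1$, and integrality of $7\tau$ and $2M$ immediately yields $7\tau\ge 2M$.
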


\begin{proof}
Write $\sliceLen=\sliceLen_{\Index}$ and $\paral=\paral_{\Index}$.
If $\sliceLen>\kvmem/2$, then $\paral=1$, and the claim follows from $\kvmem<2\sliceLen$.

Assume $\sliceLen\le\kvmem/2$.
Then the proxy prompt is $\ProxyPrompt_{\Index}=\sliceLen$.
If $\paral=1$, the maximality of $\paral$ implies that $\paral=2$ is infeasible.
Thus
\begin{equation*}
    \MemPeak(2,\sliceLen,\sliceLen)
    =
    2\sliceLen
    +
    \frac{3\sliceLen+2-\gcd(\sliceLen,2)}2
    >
    \kvmem .
\end{equation*}
If $\sliceLen$ is even, this gives $\kvmem<7\sliceLen/2$.
If $\sliceLen$ is odd, then $\MemPeak(2,\sliceLen,\sliceLen)=(7\sliceLen+1)/2$ is an integer, so $\kvmem\le(7\sliceLen-1)/2<7\sliceLen/2$.
In both cases, $\sliceLen\le(7/2)\sliceLen^2/\kvmem$.

It remains to consider $\paral\ge2$.
By maximality, $\paral+1$ is infeasible.
Using $\gcd(\sliceLen,\paral+1)\ge1$,
\[
    \kvmem
    <
    \MemPeak(\paral+1,\sliceLen,\sliceLen)
    \le
    \sliceLen(\paral+1)
    +
    \frac{\sliceLen(\paral+1)+\sliceLen+\paral}{2}
    =
    \frac{3\paral\sliceLen+4\sliceLen+\paral}{2}
    \le
    \frac{7\paral\sliceLen}{2}.
\]
The last inequality uses $\paral\ge2$ and $\sliceLen\ge1$.
Rearranging gives the claim.
\end{proof}

Define the class-area lower-bound quantity
\begin{equation*}
    \ClassAreaLB
    \triangleq
    \sum_{\Index=0}^{\LastRespPhase}
    \lowerSlice_{\Index}^2
    \left(
        \frac{\NumJobsClass_{\Index}(\NumJobsClass_{\Index}+1)}2
        +
        \NumJobsClass_{\Index}\NumJobsGt{\Index}
    \right).
\end{equation*}

\begin{lemma}[Class-area lower bound]
\label{lem:resp-constant-class-area}
The optimum satisfies
\begin{equation*}
    \OptOf{\JobSet}
    \ge
    \frac{\ClassAreaLB}{2\kvmem}.
\end{equation*}
\end{lemma}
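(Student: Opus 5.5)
We will derive the bound from the area-order lower bound (\Cref{lem:prelim:area-lower-bound}) applied to $\SubJobSet=\JobSet$, after replacing each job's area by a class-based lower bound. First, for every job $i\in\JobSet_{\Index}$ we use $\AreaOf{i}\ge\responseLen_i(\responseLen_i+1)/2\ge\responseLen_i^2/2\ge\lowerSlice_{\Index}^2/2$, since $\responseLen_i\ge\lowerSlice_{\Index}$. Then we define surrogate weights $b_i\triangleq\lowerSlice_{\Index}^2/2$ for $i\in\JobSet_{\Index}$. Since $\lowerSlice_{\Index}$ is nondecreasing in $\Index$, these surrogates are monotone across classes. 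It remains to check that the lower bound $\frac1\kvmem\sum_r(\setsize{\JobSet}-r+1)\AreaOf{\pi(r)}$ dominates the same expression with the $b$'s.

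For this comparison we use the pairwise form already used in \Cref{lem:rect-lower-bounds}:
\begin{equation*}
\sum_{r}(\setsize{\JobSet}-r+1)\AreaOf{\pi(r)}=\sum_{\{i,j\},\,i\neq j}\min\{\AreaOf{i},\AreaOf{j}\}+\sum_i\AreaOf{i}.
\end{equation*}
Pointwise, $\min\{\AreaOf{i},\AreaOf{j}\}\ge\min\{b_i,b_j\}$ and $\AreaOf{i}\ge b_i$. Together these give $\kvmem\,\OptOf{\JobSet}\ge\sum_{\{i,j\},i\ne j}\min\{b_i,b_j\}+\sum_i b_i$. This avoids any worry about whether the true area order respects the class order.

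We then evaluate the right-hand side by classes. A pair inside class $\Index$ contributes $\lowerSlice_{\Index}^2/2$, and there are $\NumJobsClass_{\Index}(\NumJobsClass_{\Index}-1)/2$ such pairs. A pair with one job in class $\Index$ and the other in a class $\OtherIndex>\Index$ contributes $\lowerSlice_{\Index}^2/2$ by monotonicity, and there are $\NumJobsClass_{\Index}\NumJobsGt{\Index}$ such pairs. The diagonal terms add $\NumJobsClass_{\Index}\lowerSlice_{\Index}^2/2$. Summing, class $\Index$ contributes
\begin{equation*}
\frac{\lowerSlice_{\Index}^2}{2}\Bigl(\frac{\NumJobsClass_{\Index}(\NumJobsClass_{\Index}+1)}{2}+\NumJobsClass_{\Index}\NumJobsGt{\Index}\Bigr),
\end{equation*}
and the total is exactly $\ClassAreaLB/2$. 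Hence $\OptOf{\JobSet}\ge\ClassAreaLB/(2\kvmem)$.

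The main point needing care is the cross-class monotonicity: we must verify that $\lowerSlice_{\Index}\le\lowerSlice_{\OtherIndex}$ whenever $\Index<\OtherIndex$, including the edge case $\lowerSlice_0=1=\lowerSlice_1$. This follows from the definition $\sliceLen_{\Index}=\min\{\ceil{2^{\Index}},\kvmem\}$, which is nondecreasing in $\Index$. Empty classes contribute zero, so they cause no issue.
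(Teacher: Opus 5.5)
Your proposal is correct and follows essentially the same route as the paper: expand the area-order bound into diagonal terms plus pairwise minima, lower-bound each area in class $\Index$ by $\lowerSlice_{\Index}^2/2$, and count within-class and cross-class pairs to recover exactly $\ClassAreaLB/2$. Your explicit check that $\lowerSlice_{\Index}$ is nondecreasing spells out the step that justifies charging each cross-class pair to the earlier class, which the paper uses without comment.
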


\begin{proof}
For every class $\Index$ and every job $i\in\JobSet_{\Index}$, response dominance gives
\begin{equation*}
    \AreaOf{i}
    \ge
    \frac{\responseLen_i^2}{2}
    \ge
    \frac{\lowerSlice_{\Index}^2}{2}.
\end{equation*}
By \Cref{lem:prelim:area-lower-bound}, the lower-bound expression is
\[
    \frac1\kvmem
    \sum_{\ell=1}^{\setsize{\JobSet}}
    \left(\setsize{\JobSet}-\ell+1\right)
    \AreaOf{\pi(\ell)},
\]
where $\pi$ orders jobs by nondecreasing area.
This expression is the sum of all job areas plus, for every unordered pair of jobs, the smaller area in that pair, divided by $\kvmem$.
Its diagonal and within-class pairs from class $\Index$ contribute at least
$\lowerSlice_{\Index}^2\NumJobsClass_{\Index}(\NumJobsClass_{\Index}+1)/(4\kvmem)$.
The pairs with one job in class $\Index$ and one job in a later class contribute at least
$\lowerSlice_{\Index}^2\NumJobsClass_{\Index}\NumJobsGt{\Index}/(2\kvmem)$.
Summing these contributions over all classes gives $\ClassAreaLB/(2\kvmem)$.
\end{proof}

\thmResponseHeavyRatio*

\begin{proof}
Let $\PhaseDuration_{\Index}$ be the actual elapsed time of the $\SPS$ call in phase $\Index$.
By the start-offset rule of $\SPS$, the last start time in an $\SPS$ call on at most $\NumJobsGe{\Index}$ jobs is less than
$\sliceLen_{\Index}\NumJobsGe{\Index}/\paral_{\Index}$, so
\begin{equation*}
    \PhaseDuration_{\Index}
    \le
    \sliceLen_{\Index}
    +
    \frac{\sliceLen_{\Index}}{\paral_{\Index}}\NumJobsGe{\Index}.
\end{equation*}
By \Cref{lem:rh:completion-cap}, every job in classes $\Index,\Index+1,\ldots,\LastRespPhase$ is unfinished before phase $\Index$, so
\[
    \ALGResp(\JobSet)
    \le
    \sum_{\Index=0}^{\LastRespPhase}\PhaseDuration_{\Index}\NumJobsGe{\Index}
    \le
    \sum_{\Index=0}^{\LastRespPhase}\sliceLen_{\Index}\NumJobsGe{\Index}
    +
    \sum_{\Index=0}^{\LastRespPhase}
    \frac{\sliceLen_{\Index}}{\paral_{\Index}}
    \NumJobsGe{\Index}^2 .
\]

For the linear term, geometric growth gives $\sum_{\Index=0}^\ell\sliceLen_{\Index}\le4\lowerSlice_\ell$ for every class $\ell$.
Therefore
\[
    \sum_{\Index=0}^{\LastRespPhase}\sliceLen_{\Index}\NumJobsGe{\Index}
    =
    \sum_{\ell=0}^{\LastRespPhase}
    \NumJobsClass_\ell\sum_{\Index=0}^\ell\sliceLen_{\Index}
    \le
    4\sum_{\ell=0}^{\LastRespPhase}\NumJobsClass_\ell\lowerSlice_\ell
    \le
    4\sum_{i\in\JobSet}\responseLen_i
    \le
    4\,\OptOf{\JobSet}.
\]

Set $c=7/2$.
By \Cref{lem:resp-constant-phase-efficiency},
\begin{equation*}
    \sum_{\Index=0}^{\LastRespPhase}
    \frac{\sliceLen_{\Index}}{\paral_{\Index}}
    \NumJobsGe{\Index}^2
    \le
    \frac{c}{\kvmem}
    \sum_{\Index=0}^{\LastRespPhase}
    \sliceLen_{\Index}^2
    \NumJobsGe{\Index}^2 .
\end{equation*}
Expanding over ordered pairs of jobs and charging each pair to the smaller response class gives
\[
    \sum_{\Index=0}^{\LastRespPhase}\sliceLen_{\Index}^2\NumJobsGe{\Index}^2
    \le
    \frac{16}{3}
    \sum_{\ell,s=0}^{\LastRespPhase}
    \NumJobsClass_\ell\NumJobsClass_s
    \lowerSlice_{\min\{\ell,s\}}^2
    \le
    \frac{32}{3}\ClassAreaLB.
\]
The first inequality uses $\sum_{\Index=0}^\ell\sliceLen_{\Index}^2\le(16/3)\lowerSlice_\ell^2$ for every class $\ell$.
The second inequality compares ordered pairs with the definition of $\ClassAreaLB$:
within a class, $\NumJobsClass_\ell^2\le \NumJobsClass_\ell(\NumJobsClass_\ell+1)$, and between two different classes the ordered-pair count is twice the unordered-pair count.

Combining the linear and quadratic bounds gives
\begin{equation*}
    \ALGResp(\JobSet)
    \le
    4\,\OptOf{\JobSet}
    +
    \frac{112}{3\kvmem}\ClassAreaLB.
\end{equation*}
By \Cref{lem:resp-constant-class-area}, $\ClassAreaLB\le2\kvmem\,\OptOf{\JobSet}$.
Hence
\begin{equation*}
    \ALGResp(\JobSet)
    \le
    \left(4+\frac{224}{3}\right)\,\OptOf{\JobSet}
    =
    \frac{236}{3}\,\OptOf{\JobSet}
    \approx
    78.67\,\OptOf{\JobSet}.
\end{equation*}
\end{proof}

\begin{lemma}[Response subset control]
\label{lem:response-heavy-subset-control}
Fix a \ResponseHeavy set $\TrueResponsePool$ and a deterministic order used by $\ALGResp$.
For each $i\in\TrueResponsePool$, let $\Cresponse_i$ be the completion time of $i$ in $\ALGResp(\TrueResponsePool)$.
Then, for every $\SubJobSet\subseteq\TrueResponsePool$ and every $i\in\SubJobSet$,
job $i$ completes in $\ALGResp(\SubJobSet)$ by time $\Cresponse_i$.
\end{lemma}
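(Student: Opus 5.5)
The plan is a monotonicity argument run phase by phase. The key fact is that in $\ALGResp$ the phase in which a job completes depends only on that job's own parameters, not on which other jobs are present. By \Cref{lem:rh:completion-cap}, a \ResponseHeavy job $i$ completes exactly in the first phase $\Index_i$ with $\sliceLen_{\Index_i}\ge\responseLen_i$, and in every earlier phase it is killed at the cap. This holds in any run of $\ALGResp$ on any set containing $i$, because the caps $\sliceLen_{\Index}$, proxy prompts $\ProxyPrompt_{\Index}$ and parallelisms $\paral_{\Index}$ depend only on $\Index$ and $\kvmem$. Eligibility $\initialLen_i\le\sliceLen_{\Index}$ likewise depends only on $i$ and $\Index$.

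\textbf{Step 1: the phase job sets shrink.} Let $\remainJobs_{\Index}$ and $\remainJobs'_{\Index}$ denote the phase-$\Index$ job sets in $\ALGResp(\TrueResponsePool)$ and $\ALGResp(\SubJobSet)$. By induction on $\Index$, using Step 0 (the fact above), I will show $\remainJobs'_{\Index}=\remainJobs_{\Index}\cap\SubJobSet$. A job is unfinished before phase $\Index$ iff $\Index\le\Index_i$, and it is eligible iff $\initialLen_i\le\sliceLen_{\Index}$, in both runs. Since the SPS ranks come from the order induced by one fixed deterministic order on $\TrueResponsePool$, the rank $j'$ of $i$ in $\remainJobs'_{\Index}$ is at most its rank $j$ in $\remainJobs_{\Index}$. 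So its start offset $\floor{j'\sliceLen_{\Index}/\paral_{\Index}}\le\floor{j\sliceLen_{\Index}/\paral_{\Index}}$.

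\textbf{Step 2: phase durations shrink.} Within phase $\Index$, each job $h$ in the phase runs for exactly $\min\{\responseLen_h,\sliceLen_{\Index}\}$ rounds after its start offset, again independent of the other jobs. The phase duration $\PhaseDuration_{\Index}$ is the maximum over its jobs of (offset $+$ run length), and is $0$ for an empty phase. Step 1 gives each job in $\remainJobs'_{\Index}$ a weakly smaller offset and the same run length, and $\remainJobs'_{\Index}$ is a subset of $\remainJobs_{\Index}$. Hence $\PhaseDuration'_{\Index}\le\PhaseDuration_{\Index}$. The same holds if the implementation instead uses the convention ``last start time plus $\sliceLen_{\Index}$'', since that quantity is also monotone in the number of jobs.

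\textbf{Step 3: conclusion.} The completion time of $i$ in either run equals
\[
\sum_{\OtherIndex<\Index_i}\PhaseDuration_{\OtherIndex} + (\text{offset of } i\text{ in phase } \Index_i) + \responseLen_i .
\]
Each term is weakly smaller for $\SubJobSet$ by Steps 1 and 2, which gives the bound $\Cresponse_i$. The main point needing care is the definition of phase duration, in particular empty phases and the requirement that phases run back-to-back. I would fix the ``max end time'' convention explicitly and check that \Cref{lem:rh:phase-feasible} is unaffected, since feasibility only uses the per-phase proxy bound.
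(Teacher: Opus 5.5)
Your proposal is correct and follows essentially the same argument as the paper. In both, eligibility and the completion phase depend only on a job's own parameters, so each phase's job set in the run on $\SubJobSet$ is an order-preserving subsequence of the one in the run on $\TrueResponsePool$. Ranks, and hence SPS offsets, can therefore only decrease, phase durations shrink, and the completion time of $i$ (prior phase durations plus its offset plus $\responseLen_i$) is no larger. Your explicit treatment of the phase-duration convention (max end time versus last start plus $\sliceLen_{\Index}$) and of empty phases is a small added precision that the paper leaves implicit.
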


\begin{proof}
Fix any job $i\in\SubJobSet$, and let $\Index$ be the first phase with $\sliceLen_{\Index}\ge\responseLen_i$.
Run $\ALGResp(\SubJobSet)$ using the order induced by the fixed order on $\TrueResponsePool$.
By \Cref{lem:rh:completion-cap}, job $i$ completes in phase $\Index$ in both $\ALGResp(\SubJobSet)$ and $\ALGResp(\TrueResponsePool)$.
For every phase $\OtherIndex<\Index$, whether a job is considered in that phase depends only on its prompt length, response length, and the cap sequence.
Thus the jobs considered in the run on $\SubJobSet$ form a subsequence of the jobs considered in the run on $\TrueResponsePool$.
For every remaining job in this subsequence, its rank is no larger than its rank in the run on $\TrueResponsePool$.
Since the $\SPS$ start offset is $\floor{j\sliceLen_{\OtherIndex}/\paral_{\OtherIndex}}$ for rank $j$, each remaining job starts no later in the run on $\SubJobSet$ than in the run on $\TrueResponsePool$.
The real running time of every remaining job in phase $\OtherIndex$ is unchanged, so the duration of phase $\OtherIndex$ in the run on $\SubJobSet$ is no larger than the duration of phase $\OtherIndex$ in the run on $\TrueResponsePool$.
The same subsequence property holds inside phase $\Index$:
the rank of job $i$ in the run on $\SubJobSet$ is at most its rank in the run on $\TrueResponsePool$, so the $\SPS$ start offset of job $i$ is no later in the run on $\SubJobSet$ than in the run on $\TrueResponsePool$.
After job $i$ starts in phase $\Index$, the remaining time until completion is $\responseLen_i$ in both runs.
Combining these components gives the claim.
\end{proof}
 \section{Alternative Objectives}
\label{sec:alternative-objectives}

This appendix gives two extensions of the meta-scheduling framework.
\begin{itemize}
    \item \Cref{sec:alternative-makespan} shows that the same algorithm is also constant-competitive for makespan.
    \item \Cref{sec:alternative-online-completion-time} extends the routing framework to total completion time with online arrivals.
\end{itemize}
For online arrivals, the natural latency objective is total flow time.
Under (oblivious) adversarial arrivals, however, online KV-cache scheduling admits an $\Omega(\sqrt{\NumberJobs})$ lower bound on the competitive ratio for total flow time, even in the identical-prompt special case~\citep{JJMMPZ25,FYZ26}.
Total completion time is the closest feasible alternative:
for a fixed arrival instance, the two objectives induce the same optimal schedules.
Online completion-time objectives have also been studied in classical scheduling~\citep{ES03,SP05}.

\subsection{Makespan}
\label{sec:alternative-makespan}

For a job set $\SubJobSet\subseteq\JobSet$, let $\OptMSpanOf{\SubJobSet}$ be the optimal clairvoyant makespan for scheduling exactly the jobs in $\SubJobSet$.
For a schedule produced by an algorithm $\ALG$, let $\MSpan{\ALG(\SubJobSet)}$ denote its makespan.

\begin{lemma}[Makespan lower bounds]
\label{lem:alternative:makespan-lower-bounds}
For every $\SubJobSet\subseteq\JobSet$,
\begin{equation*}
    \OptMSpanOf{\SubJobSet}
    \ge
    \max_{i\in\SubJobSet}\responseLen_i,
    \qquad
    \OptMSpanOf{\SubJobSet}
    \ge
    \frac{1}{\kvmem}
    \sum_{i\in\SubJobSet}\AreaOf{i}.
\end{equation*}
\end{lemma}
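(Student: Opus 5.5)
Both bounds follow by looking at a single optimal clairvoyant schedule for $\SubJobSet$. Fix one that achieves makespan $T\triangleq\OptMSpanOf{\SubJobSet}$; the two inequalities are then proved separately, in parallel with \Cref{lem:prelim:processing-lower-bound} and \Cref{lem:prelim:area-lower-bound}.

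For the first bound, take any job $i\in\SubJobSet$. Its successful attempt processes it for $\responseLen_i$ consecutive rounds, and it receives at most one unit of processing per round. Hence its completion time is at least $\responseLen_i$, and so $T\ge\C_i\ge\responseLen_i$. Taking the maximum over $i$ gives $T\ge\max_{i\in\SubJobSet}\responseLen_i$.

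For the second bound, I use a memory-time accounting. In the successful attempt of job $i$, the occupied memory in the $u$-th round of that attempt ($u=1,\ldots,\responseLen_i$) is $\initialLen_i+u$. Summing over the attempt gives exactly $\AreaOf{i}$. These successful attempts are disjoint across jobs, and any unsuccessful attempts only add further occupancy. So the total memory-time used by the schedule is at least $\sum_{i\in\SubJobSet}\AreaOf{i}$. On the other hand, feasibility caps the memory in each round at $\kvmem$, and every job finishes by round $T$, so the total memory-time used over rounds $0,\ldots,T-1$ is at most $\kvmem T$. Combining the two gives $T\ge\frac{1}{\kvmem}\sum_{i\in\SubJobSet}\AreaOf{i}$.

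The only point that needs care is that, in the footnote's model, the benchmark may pause or preempt a job. If a job is held in memory without being processed, its occupancy only increases, so the area it consumes is still at least $\AreaOf{i}$ and the accounting above still holds. This is the same observation already used in the proof of \Cref{lem:prelim:area-lower-bound}, which is essentially the $r=\setsize{\SubJobSet}$ case of that lemma's per-completion bound. Beyond this there is no real obstacle.
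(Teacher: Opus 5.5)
Your proof is correct and takes the same approach as the paper. The paper justifies the first bound by the processing requirement of the longest job, and the second by the fact that the total area $\sum_{i\in\SubJobSet}\AreaOf{i}$ must be supplied at most $\kvmem$ per round; your observation that this is the last-completion case of the per-completion bound in the proof of \Cref{lem:prelim:area-lower-bound} is a valid shortcut.
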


The first bound is the processing requirement of the longest job.
The second follows because all memory-time areas of jobs must be supplied.
We use the branch sets and capped small instance from \Cref{sec:routing-meta-scheduler}.

\begin{proposition}[Branch makespan bounds]
\label{prop:alternative-makespan-branches}
Set $\PromptCutoff=\kvmem/4$ and $\responseScalar=2$.
Then
\begin{equation*}
    \MSpan{\ALGRect(\LargeJobs)},
    \quad
    \MSpan{\ALGRect(\CappedSmallJobs)},
    \quad
    \MSpan{\ALGResp(\TrueResponsePool)}
\end{equation*}
are all $O(1) \cdot \OptMSpanOf{\JobSet}$.
\end{proposition}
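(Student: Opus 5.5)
We bound each of the three makespans separately against the two lower bounds of \Cref{lem:alternative:makespan-lower-bounds}, applied to the relevant subset. Since $\LargeJobs$, $\CappedSmallJobs$, and $\TrueResponsePool$ each have makespan lower bounds dominated by those of $\JobSet$, it suffices to compare with $\max_i\responseLen_i$ and $\frac1\kvmem\sum_i\AreaOf{i}$ over the whole job set. For the capped instance this needs care: its maximum response length and total area are at most those of $\SmallJobs\subseteq\JobSet$.

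\textbf{Rectangle branches.} Consider $\ALGRect$ on an instance with waiting utilization $U=\Omega(\kvmem)$. By \Cref{lem:rect-width-range}, $U\ge 2\PromptCutoff=\kvmem/2$ for $\LargeJobs$ and $U\ge\kvmem-2\PromptCutoff=\kvmem/2$ for $\CappedSmallJobs$.

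Let $i$ be the last job to complete. As in the proof of \Cref{lem:rect-per-job}, its completion time is at most its active time plus its waiting time. The active time is less than $4\processingTime_i\le4\max_j\responseLen_j$.

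For the waiting time, we drop the budget-order refinement. In every waiting round the active memory is at least $U$. All active attempts are capped attempts, so the total area of all attempts ever run is at most $\sum_j\sum_{q\le\Index_j}\BudgetOf{j}{q}<4\sum_j\RectAreaOf{j}$. Hence the waiting time is at most $\frac{4}{U}\sum_j\RectAreaOf{j}$.

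We then use $\RectAreaOf{j}\le\frac{\kvmem}{\PromptCutoff}\AreaOf{j}$ for large jobs and $\RectAreaOf{j}\le2\AreaOf{j}$ for capped small jobs, where capping only shrinks areas. This gives a makespan of $O(\max_j\responseLen_j+\frac1\kvmem\sum_j\AreaOf{j})=O(\OptMSpanOf{\JobSet})$.

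\textbf{Response branch.} The makespan of $\ALGResp(\TrueResponsePool)$ is $\sum_{\Index\le\Index^*}\PhaseDuration_{\Index}$, where $\Index^*$ is the class of the longest job, since phases after all jobs finish are empty. We use $\PhaseDuration_{\Index}\le\sliceLen_{\Index}+\frac{\sliceLen_{\Index}}{\paral_{\Index}}\NumJobsGe{\Index}$ together with \Cref{lem:resp-constant-phase-efficiency}.

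The linear part is a geometric sum, bounded by $4\lowerSlice_{\Index^*}\le4\max_i\responseLen_i$.

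The quadratic-like part is $\frac{7}{2\kvmem}\sum_{\Index}\sliceLen_{\Index}^2\NumJobsGe{\Index}$. Exchanging the summation gives $\sum_\ell\NumJobsClass_\ell\sum_{\Index\le\ell}\sliceLen_{\Index}^2\le\frac{16}{3}\sum_\ell\NumJobsClass_\ell\lowerSlice_\ell^2$. Since $\lowerSlice_\ell^2\le\responseLen_i^2\le2\AreaOf{i}$ for $i\in\JobSet_\ell$, this is at most $\frac{32}{3}\sum_i\AreaOf{i}$. So this part is $O(\frac1\kvmem\sum_i\AreaOf{i})$.

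\textbf{Main obstacle.} The delicate step is the rectangle waiting-time argument. There we must make sure that "active memory at least $U$" is really charged to attempt area actually consumed. Expired attempts consume their full budget, and the successful attempt consumes at most its budget, so the total charged area stays within the budget sum. The rest is routine geometric summation.
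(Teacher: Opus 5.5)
Your proof is correct and follows essentially the paper's argument. For the rectangle branches you bound active time plus waiting time, charging the waiting time to total attempt area over a utilization of $\Omega(\kvmem)$; for the response branch you sum the phase-duration bound with \Cref{lem:resp-constant-phase-efficiency} and geometric cap sums, which the paper only sketches. The one difference is cosmetic: you derive the waiting bound directly without queue monotonicity, whereas the paper takes the maximum of \Cref{lem:rect-per-job} and uses $\min\{\RectAreaOf{i},\RectAreaOf{j}\}\le\RectAreaOf{j}$, giving the same bound.
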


\begin{proof}
For either rectangle branch instance $\SubJobSet$, \Cref{lem:rect-per-job} implies, after taking the maximum over jobs in $\SubJobSet$,
\begin{equation*}
    \MSpan{\ALGRect(\SubJobSet)}
    =
    O\left(
        \max_{i\in\SubJobSet}\processingTime_i
        +
        \frac1\kvmem
        \sum_{i\in\SubJobSet}\RectAreaOf{i}
    \right),
\end{equation*}
using the same width-range conditions as in the proofs of \Cref{prop:rectangle-large-guarantee,prop:rectangle-prompt-heavy-guarantee}.
For the \LargeBranch, $\RectAreaOf{i}\le4\AreaOf{i}$.
For the \PromptBranch, the capped rectangle area is at most $2\AreaOf{i}$.
Thus both rectangle branch makespans are $O(1)\cdot \OptMSpanOf{\JobSet}$ by \Cref{lem:alternative:makespan-lower-bounds}.

For the \ResponseBranch, the phase-duration estimate used in the proof of \Cref{thm:response-heavy-ratio}, summed over geometric response caps, gives
\begin{equation*}
    \MSpan{\ALGResp(\TrueResponsePool)}
    =
    O\left(
        \max_{i\in\TrueResponsePool}\responseLen_i
        +
        \frac{1}{\kvmem}
        \sum_{i\in\TrueResponsePool}\responseLen_i^2
    \right).
\end{equation*}
Here the geometric caps bound the linear contribution by the largest response length, and the phase-efficiency term contributes the quadratic area term.
Since every job in $\TrueResponsePool$ is \ResponseHeavy, $\responseLen_i^2\le2\AreaOf{i}$.
Another application of \Cref{lem:alternative:makespan-lower-bounds} completes the proof.
\end{proof}

The following theorem gives the makespan extension.

\begin{theorem}
\label{thm:alternative-makespan}
For every feasible batch KV-cache scheduling instance with arbitrary prompt lengths and arbitrary response lengths, there is a fully non-clairvoyant scheduling algorithm whose makespan is within a constant factor of the optimal clairvoyant makespan:
\begin{equation*}
    \MSpan{\ALG(\JobSet)}
    =
    O(1)\cdot \OptMSpanOf{\JobSet}.
\end{equation*}
\end{theorem}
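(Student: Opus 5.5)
The algorithm is the routing meta-scheduler $\ALGRoute$ (\Cref{alg:white-box-meta}) with $\PromptCutoff=\kvmem/4$ and $\responseScalar=2$. The goal is to show that its last completion time is $O(1)\cdot\OptMSpanOf{\JobSet}$. The plan is to convert the branch makespan bounds of \Cref{prop:alternative-makespan-branches} into per-job completion bounds inside each fresh branch call, and then to reuse the doubling-stage accounting from \Cref{lem:black-box-meta} and \Cref{lem:routing-loss}. Write $T^{\textsc{L}}=\MSpan{\ALGRect(\LargeJobs)}$, $T^{\textsc{P}}=\MSpan{\ALGRect(\CappedSmallJobs)}$, and $T^{\textsc{R}}=\MSpan{\ALGResp(\TrueResponsePool)}$.

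\textbf{Step 1: per-job bounds in subset calls.} Every branch call made by $\ALGRoute$ runs on a current subset of its reference instance, with completed jobs kept as idle dummies. For the rectangle branches, \Cref{lem:rect-per-job} gives per-job bounds whose right-hand side only decreases under deletion. The proof of \Cref{prop:alternative-makespan-branches} takes a maximum over these bounds, so every job in any fresh call completes within time $O(\max_i \processingTime_i + \frac1\kvmem\sum_i\RectAreaOf{i})$, which is $O(\OptMSpanOf{\JobSet})$. Let $\bar T^{\textsc{L}},\bar T^{\textsc{P}}$ denote these uniform bounds. In the capped small instance, completion means either true completion or certification.

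For the response branch, \Cref{lem:response-heavy-subset-control} shows that any certified subset completes each job $i$ by $\Cresponse_i\le T^{\textsc{R}}$.

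\textbf{Step 2: stage accounting.} Let $\Index^\ast$ be the first stage with $2^{\Index^\ast}\ge\max\{\bar T^{\textsc{L}},\bar T^{\textsc{P}},T^{\textsc{R}}\}$. In stage $\Index^\ast$ the three calls behave as follows.
\begin{itemize}
\item The prompt-branch call runs long enough to finish every remaining small job: each such job either truly completes or is certified and moved to $\ResponsePool$.
\item The response-branch call that follows, in the same stage, contains every certified unfinished job. These all lie in $\TrueResponsePool$, so the call completes them within $T^{\textsc{R}}\le 2^{\Index^\ast}$.
\item The large-branch call completes all of $\LargeJobs$.
\end{itemize}
All jobs are therefore done by time $3(2^{\Index^\ast+1}-1)<6\cdot 2^{\Index^\ast}$. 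Since $2^{\Index^\ast}<2\max\{\cdot\}$, this is $O(\OptMSpanOf{\JobSet})$ by Step 1 and \Cref{prop:alternative-makespan-branches}.

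\textbf{Main obstacle.} The delicate point is the uniformity of the subset bounds. Only the response branch is non-monotone in a potentially dangerous way, and \Cref{lem:response-heavy-subset-control} handles it per job. For the rectangle branches I rely on the per-job bound of \Cref{lem:rect-per-job} rather than on makespan monotonicity, because greedy makespan itself need not be monotone under deletion. I also need the waiting-utilization hypothesis to hold for each subset; it does, since the width ranges are inherited by subsets.
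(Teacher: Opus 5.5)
Your proposal is correct and follows essentially the same route as the paper. You run $\ALGRoute$ with $\PromptCutoff=\kvmem/4$ and pick the first doubling stage whose length exceeds the maximum of the per-job reference bounds. For the two rectangle branches these are the \Cref{lem:rect-per-job} bounds, which survive deletion; for the response branch they are $\Cresponse_i$, controlled via \Cref{lem:response-heavy-subset-control}. You observe that every job finishes within that stage, and you bound the reference maximum by $O(\OptMSpanOf{\JobSet})$ through \Cref{prop:alternative-makespan-branches}. This is exactly the paper's argument. Your explicit accounting $3(2^{\Index^\ast+1}-1)<6\cdot 2^{\Index^\ast}$, and your remark that you need per-job bounds rather than makespan monotonicity for the greedy rectangle branches, just spell out what the paper's proof leaves implicit.
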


\begin{proof}
Set $\PromptCutoff=\kvmem/4$ and $\responseScalar=2$, and run the routing meta-scheduler $\ALGRoute$ in \Cref{alg:white-box-meta}.
Let
\[
    T=
    \max\left\{
        \max_{i\in\LargeJobs}\Clarge_i,\;
        \max_{i\in\SmallJobs}\Cprompt_i,\;
        \max_{i\in\TrueResponsePool}\Cresponse_i
    \right\}.
\]
By the subset-control discussion in \Cref{sec:routing-meta-scheduler}, every current branch call completes or certifies its remaining jobs within the corresponding reference bound.
Choose the first stage $\Index$ with $2^\Index\ge T$.
In this stage, the \PromptBranch call completes all remaining jobs in $\TruePromptPool$ and certifies all remaining jobs in $\TrueResponsePool$.
The certified jobs then move to $\ResponsePool$, and the following \ResponseBranch call completes them; the \LargeBranch call completes all large jobs.
Thus all jobs finish by the end of the stage.
The total time through that stage is $O(2^\Index)=O(T)$.
The same bounds used in \Cref{prop:alternative-makespan-branches} give $T=O(1)\cdot\OptMSpanOf{\JobSet}$, which proves the theorem.
\end{proof}

\subsection{Total Completion Time with Online Arrivals}
\label{sec:alternative-online-completion-time}

Each job $i$ now has an arrival time $\arrivalTime_i\in\naturals$.
Before time $\arrivalTime_i$, job $i$ is not visible to the scheduler.
At time $\arrivalTime_i$, its prompt length $\initialLen_i$ becomes known, while its response length $\responseLen_i$ is still unknown until completion.
Let $\OptArrivalOf{\JobSet}$ be the optimal total completion time of a clairvoyant schedule that knows all jobs and response lengths in advance, but may process job $i$ only after time $\arrivalTime_i$.

\begin{lemma}[Online-arrival lower bounds]
\label{lem:alternative:arrival-lower-bounds}
\begin{equation*}
    \OptArrivalOf{\JobSet}
    \ge
    \OptOf{\JobSet},
    \qquad
    \OptArrivalOf{\JobSet}
    \ge
    \sum_{i\in\JobSet}\left(\arrivalTime_i+\responseLen_i\right).
\end{equation*}
\end{lemma}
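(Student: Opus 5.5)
Both inequalities follow from direct comparisons of feasible schedules, so the plan has two short steps.

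\textbf{First inequality.} I will argue that adding arrival constraints can only shrink the set of feasible schedules. Any clairvoyant schedule that respects release times, meaning job $i$ is processed only at rounds $t\ge\arrivalTime_i$, is in particular a feasible schedule for the batch instance where all jobs arrive at time zero. The memory constraint per round is unchanged, and the kill/restart semantics are unchanged. The completion times are also identical, since both are measured from time $0$. Therefore the optimal release-constrained schedule is a feasible batch schedule with the same total completion time, which gives $\OptArrivalOf{\JobSet}\ge\OptOf{\JobSet}$.

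\textbf{Second inequality.} I will adapt the argument of \Cref{lem:prelim:processing-lower-bound}. Fix any feasible release-constrained schedule and any job $i$. Its successful attempt must consist of $\responseLen_i$ processing rounds, with at most one unit of processing per round. All of these rounds occur at times $t\ge\arrivalTime_i$, so the attempt occupies rounds $\arrivalTime_i,\dots$ at the earliest. Hence the completion time satisfies $\C_i\ge\arrivalTime_i+\responseLen_i$. Summing over $i\in\JobSet$ gives the bound for every feasible schedule, and in particular for the optimal one.

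Neither step poses a real obstacle. The only care needed is to respect the paper's timing convention: a job processed in round $t$ completes at $t+1$. With $\responseLen_i$ rounds starting at $\arrivalTime_i$ at the earliest, this convention yields completion at time at least $\arrivalTime_i+\responseLen_i$, with no off-by-one loss.
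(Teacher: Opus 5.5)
Your proposal is correct and follows the same route as the paper: the first bound holds because any release-respecting schedule is also feasible for the batch instance, and the second because job $i$ needs $\responseLen_i$ rounds of processing, none of them before $\arrivalTime_i$, so it cannot complete before $\arrivalTime_i+\responseLen_i$. Your off-by-one check is right, and since your counting only uses that the $\responseLen_i$ processing rounds are distinct rounds at or after $\arrivalTime_i$, it also covers the paused or preempted schedules that the clairvoyant benchmark is allowed to use.
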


The first bound follows by removing arrival constraints.
The second follows because job $i$ cannot complete before time $\arrivalTime_i+\responseLen_i$.

The online scheduler uses the same pool convention: newly arrived jobs enter either $\LargeJobs$ or $\PromptPool$, certified but unfinished jobs move to $\ResponsePool$, and completed jobs are removed from their branch pools.

\begin{algorithm2e}
\caption{Online-Arrival Routing Meta-Scheduler ($\ALGRouteOnline$)}
\label{alg:online-arrival-meta}
\SetKwInput{KwOnline}{Online Arrival}
\KwIn{Memory $\kvmem$ and prompt cutoff $\PromptCutoff$.}
\KwOnline{Job $i$ becomes visible at time $\arrivalTime_i$ and reveals its prompt length $\initialLen_i$.}
$\LargeJobs\leftarrow\varnothing$ \tcp*{\LargeBranch}
$\PromptPool\leftarrow\varnothing$ \tcp*{\PromptBranch}
$\ResponsePool\leftarrow\varnothing$ \tcp*{\ResponseBranch}
\For{stage $\Index=0,1,2,\ldots$}{
    \tcp{Admit new jobs.}
    \For{each newly arrived job that has not been assigned to a branch}{
        \eIf{$\initialLen_i>\PromptCutoff$}{
            Add $i$ to $\LargeJobs$\;
        }{
            Add $i$ to $\PromptPool$\;
        }
    }
    \tcp{Run current doubling stage.}
    Run a fresh copy of the \PromptBranch scheduler on $\PromptPool$ for $2^\Index$ rounds\;
    Move every newly certified job from $\PromptPool$ to $\ResponsePool$\;
    Run a fresh copy of the \ResponseBranch scheduler on $\ResponsePool$ for $2^\Index$ rounds\;
    Run a fresh copy of the \LargeBranch scheduler on $\LargeJobs$ for $2^\Index$ rounds\;
}
\end{algorithm2e}

The following theorem gives the online-arrival extension.

\begin{theorem}
\label{thm:alternative-online-completion}
For every feasible online-arrival KV-cache scheduling instance with arbitrary prompt lengths, arbitrary response lengths, and arbitrary arrival times, there is an online non-clairvoyant scheduling algorithm whose total completion time is within a constant factor of the optimal clairvoyant total completion time among schedules that process each job only after its arrival:
\begin{equation*}
    \ALGRouteOnline(\JobSet)
    =
    O(1)\cdot \OptArrivalOf{\JobSet}.
\end{equation*}
\end{theorem}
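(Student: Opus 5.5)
We will analyze \ALGRouteOnline (\Cref{alg:online-arrival-meta}) with $\PromptCutoff=\kvmem/4$ and $\responseScalar=2$. The argument reduces each job's completion time to its arrival plus a branch-level completion time on an offline instance, and then charges the offline terms to $\OptOf{\JobSet}$ and the arrival terms to $\sum_i\arrivalTime_i$. Both are bounded by $\OptArrivalOf{\JobSet}$ via \Cref{lem:alternative:arrival-lower-bounds}.

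\textbf{Reference instances.} Since the arrival times are fixed, the analysis may use the full sets $\LargeJobs$, $\SmallJobs$, $\CappedSmallJobs$ and $\TrueResponsePool$ as offline reference instances. Define $\Clarge_i$, $\Cprompt_i$ and $\Cresponse_i$ exactly as in \Cref{sec:routing-meta-scheduler}.

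Every fresh branch call in \ALGRouteOnline runs on a subset of the corresponding reference instance, because jobs not yet arrived are simply absent. Hence the subset-control facts still apply:
\begin{itemize}[label=]
\item the per-job bound of \Cref{lem:rect-per-job} is monotone under deletion;
\item \Cref{lem:response-heavy-subset-control} applies, once the deterministic order is fixed on all of $\TrueResponsePool$.
\end{itemize}
Consequently, a call that includes job $i$ and has length at least the relevant reference bound completes (or certifies) $i$ within that bound.

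\textbf{Per-job bound.} Let $\Index_0$ be the first stage that begins at or after $\arrivalTime_i$. Stage $\Index$ starts at time $3(2^\Index-1)$, and the previous stage started before $\arrivalTime_i$. This gives $3(2^{\Index_0}-1)\le 2\arrivalTime_i+O(1)$, i.e., the start of stage $\Index_0$ is $O(\arrivalTime_i+1)$.

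From stage $\Index_0$ on, job $i$ is in its pool. Rerun the argument of \Cref{lem:black-box-meta} and \Cref{lem:routing-loss}, with the stage index taken as $\max\{\Index_0,\rprompt_i,\rresponse_i\}$ (respectively $\max\{\Index_0,\Index\}$ for the rectangle branches). This yields
\[
\C_i^{\ALGRouteOnline}\le O\bigl(\arrivalTime_i+1\bigr)+O\bigl(\Cprompt_i+\Cresponse_i\bigr)
\]
for small response-heavy jobs, and analogous bounds with $\Cprompt_i$ or $\Clarge_i$ alone for the other jobs. The reason is that the time through stage $\max\{a,b\}$ is $O(2^a+2^b)$.

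The additive $O(1)$ is absorbed since $\responseLen_i\ge1$. For a job certified in some stage $\Index\ge\Index_0$, the move into $\ResponsePool$ happens within that same stage, exactly as in \Cref{lem:routing-loss}.

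\textbf{Summing.} Summing over jobs, the branch terms are bounded as in the proof of \Cref{thm:main} by $O(1)\cdot\OptOf{\LargeJobs}+O(1)\cdot\OptOf{\SmallJobs}$. Here we use \Cref{prop:rectangle-large-guarantee}, \Cref{prop:rectangle-prompt-heavy-guarantee}, \Cref{thm:response-heavy-ratio} and \Cref{lem:prelim:deletion-bound}, which together give $O(1)\cdot\OptOf{\JobSet}\le O(1)\cdot\OptArrivalOf{\JobSet}$. The arrival terms sum to $O\bigl(\sum_i(\arrivalTime_i+\responseLen_i)\bigr)=O(\OptArrivalOf{\JobSet})$ by the second part of \Cref{lem:alternative:arrival-lower-bounds}.

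\textbf{Main obstacle.} The delicate step is justifying that a fresh call on a pool of jobs arriving at different times is dominated by the offline reference run. For \ALGRect this is easy, since its bound depends only on a deletion-monotone sum. For \ALGResp we need the SPS rank-subsequence argument of \Cref{lem:response-heavy-subset-control} with a single global order, so the plan is to fix that order, e.g., by job index, over all of $\TrueResponsePool$ in advance.

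A second possible gap concerns jobs that were certified before a later stage. We must check that they persist in $\ResponsePool$ and are handled identically in every subsequent call. This holds because pools only grow by arrivals and certifications, and each call is a fresh run on a subset of the reference instance.
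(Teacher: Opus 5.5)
Your proposal is correct and follows essentially the same route as the paper. You admit each job at the first stage starting at or after $\arrivalTime_i$, which costs $O(\arrivalTime_i)$ by doubling. You then rerun the batch routing calculation with the reference bounds $\Clarge_i,\Cprompt_i,\Cresponse_i$ via subset control, and charge the branch sums to $\OptOf{\JobSet}$ and the arrival terms to $\sum_i(\arrivalTime_i+\responseLen_i)$ via \Cref{lem:alternative:arrival-lower-bounds}. Your handling of the additive constant (absorbed using $\responseLen_i\ge 1$) and your choice of one global order for \ALGResp, so that \Cref{lem:response-heavy-subset-control} applies, fill in details the paper leaves implicit.
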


\begin{proof}
Set $\PromptCutoff=\kvmem/4$ and $\responseScalar=2$, and run the online-arrival routing meta-scheduler in \Cref{alg:online-arrival-meta}.
The online arrival constraint only delays the time at which a job can enter the routing process.
A job arriving at time $\arrivalTime_i$ is admitted at the start of the first stage that begins at or after $\arrivalTime_i$; by the doubling schedule, the elapsed time before that stage is $O(\arrivalTime_i)$.
After this prefix, the same routing calculation as in the batch proof applies to job $i$, using the reference bounds $\Clarge_i,\Cprompt_i,\Cresponse_i$ as in the proof of \Cref{thm:main}.
The current branch calls are on subsets of the corresponding reference instances, so the same subset arguments apply.
Thus
\begin{align*}
    \C_i^{\ALGRouteOnline}
    &=
    O(\arrivalTime_i+\Clarge_i)
    &\text{for every } i\in\LargeJobs,\\
    \C_i^{\ALGRouteOnline}
    &=
    O(\arrivalTime_i+\Cprompt_i)
    &\text{for every } i\in\TruePromptPool,\\
    \C_i^{\ALGRouteOnline}
    &=
    O(\arrivalTime_i+\Cprompt_i+\Cresponse_i)
    &\text{for every } i\in\TrueResponsePool.
\end{align*}
Summing over all jobs gives the same three branch sums as in the proof of \Cref{thm:main}, plus the arrival term $\sum_{i\in\JobSet}\arrivalTime_i$.
The branch sums are bounded by $O(1)\cdot \OptOf{\JobSet}$, while the arrival term is bounded by \Cref{lem:alternative:arrival-lower-bounds} and $\OptOf{\JobSet}\le\OptArrivalOf{\JobSet}$.
Therefore $\ALGRouteOnline(\JobSet)=O(1)\cdot \OptArrivalOf{\JobSet}$.
\end{proof}
 
\end{document}